\newtheorem{theorem}{Theorem}
\theoremstyle{plain}
\newtheorem{definition}{Definition}
\newtheorem{example}{Example}
\newtheorem{lemma}{Lemma}
\newtheorem{notation}{Notation}
\newtheorem{remark}{Remark}
\numberwithin{equation}{section}
\let\pdfoutput=\undefined\fi
\begin{document}
\title[$p$-Adic Statistical Field Theory and DBMs]{A Correspondence Between Deep Boltzmann Machines and $p$-Adic Statistical
Field Theories}
\author[Z\'{u}\~{n}iga-Galindo]{W. A. Z\'{u}\~{n}iga-Galindo}
\address{University of Texas Rio Grande Valley\\
School of Mathematical \& Statistical Sciences\\
One West University Blvd\\
Brownsville, TX 78520, United States }
\email{wilson.zunigagalindo@utrgv.edu}
\thanks{The author was partially supported by the Lokenath Debnath Endowed Professorship.}

\begin{abstract}
There is a strong interest in studying the correspondence between Euclidean
quantum fields and neural networks. This correspondence takes different forms
depending on the type of networks considered. In this work, we study this
correspondence in the case of deep Boltzmann machines (DBMs) having a
tree-like topology. We use $p$-adic numbers to encode this type of topology. A
$p$-adic continuous DBM is a statistical field theory (SFT) defined by an
energy functional on the space of square-integrable functions defined on a
$p$-adic $N$-dimensional ball. The energy functionals are non-local, meaning
they depend on the interaction of all the neurons forming the network. Each
energy functional defines a probability measure. A natural discretization
process attaches to each probability measure a finite-dimensional Boltzmann
distribution, which describes a hierarchical DBM. We provide a mathematically
rigorous perturbative method for computing the correlation functions. A
relevant novelty is that the general correlation functions cannot be
calculated using the Wick-Isserlis theorem. We give a recursive formula for
computing the correlation functions of an arbitrary number of points using
certain $3$-partitions of the sets of indices attached to the points.

\end{abstract}
\maketitle
\tableofcontents

\section{Introduction}

This article explores the interplay between the physics of energy based models
in machine learning see, e.g., \cite{Humbelli et al}-\cite{Benigio}, the
correspondence between neural networks (NNs) and statistical field theories
(SFTs), see, e.g., \cite{Zuniga-He-Zambrano}-\cite{Buice2}, and the $p$-adic
(ultrametric) spin glasses, see, e.g., \cite{R-T-V}-\cite{DysonFreeman}.

We present a new class $\left\{  \boldsymbol{v},\boldsymbol{h}\right\}  ^{4}%
$-SFTs on $L_{\mathbb{R}}^{2}\left(  \mathbb{Z}_{p}\right)  \times
L_{\mathbb{R}}^{2}\left(  \mathbb{Z}_{p}\right)  $, where $\mathbb{Z}_{p}$ is
the ring of $p$-adic integers, $p$ is a fixed prime number, and $L_{\mathbb{R}%
}^{2}\left(  \mathbb{Z}_{p}\right)  $ is the $\mathbb{R}$-vector space of
square-integrable functions for the Haar measure $dx$ on $\mathbb{Z}_{p}$. The
elements of $\mathbb{Z}_{p}$ are organized in an infinite rooted tree, with
valence $p$, with a neuron at each point. Here, for the sake of simplicity, we
discuss the results in dimension one. A SFT corresponds to a probability
measure%
\begin{equation}
\mathbb{P}(\boldsymbol{v},\boldsymbol{h};\boldsymbol{\theta})=\frac
{\exp\left(  -E(\boldsymbol{v},\boldsymbol{h};\boldsymbol{\theta})\right)
}{\mathcal{Z}^{\left(  2\right)  }\left(  \boldsymbol{\theta}\right)
}\mathbb{P}_{K_{1}}\left(  \boldsymbol{v}\right)  \otimes\mathbb{P}_{K_{2}%
}\left(  \boldsymbol{h}\right)  , \label{Eq-SFT}%
\end{equation}
where the fields $\boldsymbol{v},\boldsymbol{h}:\mathbb{Z}_{p}\rightarrow
\mathbb{R}$ satisfy $\left\Vert \boldsymbol{v}\right\Vert _{2}$, $\left\Vert
\boldsymbol{h}\right\Vert _{2}<M$, where $M$ is a fixed positive constant,
$\mathbb{P}_{K}\left(  \boldsymbol{v}\right)  \otimes\mathbb{P}_{K}\left(
\boldsymbol{h}\right)  $ is a Gaussian probability measure on a $\sigma
$-algebra of subsets of $L_{\mathbb{R}}^{2}\left(  \mathbb{Z}_{p}\right)
\times L_{\mathbb{R}}^{2}\left(  \mathbb{Z}_{p}\right)  $,
\begin{equation}
E(\boldsymbol{v},\boldsymbol{h};\boldsymbol{\theta})=-\left\langle
a,\boldsymbol{v}\right\rangle -\left\langle b,\boldsymbol{h}\right\rangle -%
{\displaystyle\iint\limits_{\mathbb{Z}_{p}\times\mathbb{Z}_{p}}}
\boldsymbol{h}\left(  x\right)  w\left(  x,y\right)  \boldsymbol{v}\left(
y\right)  dydx+\left\langle c,\boldsymbol{v}^{4}\right\rangle +\left\langle
d,\boldsymbol{h}^{4}\right\rangle \label{Eq-SFT-2}%
\end{equation}
is an energy functional, $\boldsymbol{\theta=}\left(  a,b,w,c,d\right)  $, and
$\mathcal{Z}^{\left(  2\right)  }\left(  \boldsymbol{\theta}\right)  $ is the
partition function. If $c=d=0$, $E(\boldsymbol{v},\boldsymbol{h}%
;\boldsymbol{\theta})$ is the energy function of a $p$-adic continuous spin
glass, where the product of the Boltzmann constant $K_{B}$ and the temperature
$T$ is fixed to be $1$. We identify a $\left\{  \boldsymbol{v},\boldsymbol{h}%
\right\}  ^{4}$-SFT with a $p$-adic continuous deep Boltzmann machine (DBM).
In (\ref{Eq-SFT-2}), kernel $w\left(  x,y\right)  $ controls the strength of
the interaction between the neurons at positions $x$ and $y$. The\ NN
corresponding to an\ energy functional of type (\ref{Eq-SFT-2}) is a
`small-world type network,' which means that any two neurons in the network
interact. It is widely accepted that the small-world property plays a central
role in the functioning of many biological networks; see, e.g., \cite{Sporns},
\cite{Muldoon et al}.

The visible field $\boldsymbol{v}$ and the hidden field $\boldsymbol{h}$ are
intended to model signals and data. We use real-valued functions defined on
$\mathbb{Z}_{p}$ to represent signals and data, which always take values from
a bounded set. For instance, the electrical and biological circuits always
produce bounded voltage signals. Thus, the SFTs used in machine learning have
a natural cutoff, `$M$'; this cutoff plays an important role in the
mathematical formulation of these theories. Notably\textit{,} this cutoff is
not a consequence of the fact that the $1$-dimensional unit ball is compact,
because\ the fields $\boldsymbol{v}$ and $\boldsymbol{h}$\ are not necessarily
continuous functions. We prefer $L_{\mathbb{R}}^{2}\left(  \mathbb{Z}%
_{p}\right)  $ over $L_{\mathbb{R}}^{2}\left(  \mathbb{Q}_{p}\right)  $
because the discretization of any $p$-adic continuous $\left\{  \boldsymbol{v}%
,\boldsymbol{h}\right\}  ^{4}$-SFT over $L_{\mathbb{R}}^{2}\left(
\mathbb{Z}_{p}\right)  $ gives a standard hierarchical Boltzmann machine.

There exists a natural discretization process, which corresponds to a
projection of the form%
\[
\Pi_{l}:L_{\mathbb{R}}^{2}\left(  \mathbb{Z}_{p}\right)  \times L_{\mathbb{R}%
}^{2}\left(  \mathbb{Z}_{p}\right)  \rightarrow\mathcal{D}^{l}(\mathbb{Z}%
_{p})\times\mathcal{D}^{l}(\mathbb{Z}_{p}),
\]
where $l$ is a positive integer, and $\mathcal{D}^{l}(\mathbb{Z}_{p}%
)\times\mathcal{D}^{l}(\mathbb{Z}_{p})$ is a finite-dimensional real vector
space. The push-forward measure of $\mathbb{P}(\boldsymbol{v},\boldsymbol{h}%
;\boldsymbol{\theta})$ to $\mathcal{D}^{l}(\mathbb{Z}_{p})\times
\mathcal{D}^{l}(\mathbb{Z}_{p})$ gives a Boltzmann probability distribution
$\mathbb{P}_{l}(\boldsymbol{v}_{l},\boldsymbol{h}_{l};\boldsymbol{\theta}%
_{l})$ defined using a discrete energy functional $E_{l}(\boldsymbol{v}%
_{l},\boldsymbol{h}_{l};\boldsymbol{\theta}_{l})$, which in turn is a natural
discretization of $E(\boldsymbol{v},\boldsymbol{h};\boldsymbol{\theta})$. This
energy functional is naturally attached to a generalized restricted Boltzmann
machine (RBM). For instance, the classical Gaussian-Gaussian RBMs constitute a
particular class of NNs obtained from the $\mathbb{P}_{l}(\boldsymbol{v}%
_{l},\boldsymbol{h}_{l};\boldsymbol{\theta}_{l})$, \cite{Decelle et
al}-\cite{Fischer-Igel}. Intuitively, the SFT corresponding to $\mathbb{P}%
(\boldsymbol{v},\boldsymbol{h};\boldsymbol{\theta})$ is the (thermodynamic)
limit of the discrete SFTs corresponding to $\mathbb{P}_{l}(\boldsymbol{v}%
_{l},\boldsymbol{h}_{l};\boldsymbol{\theta}_{l})$ when the number of neurons
tends to infinity, i.e., $\lim_{l\rightarrow\infty}\mathbb{P}_{l}%
(\boldsymbol{v}_{l},\boldsymbol{h}_{l};\boldsymbol{\theta}_{l})=\mathbb{P}%
(\boldsymbol{v},\boldsymbol{h};\boldsymbol{\theta})$ in some sense. The
rigorous study of this limit is quite tricky, and it depends strongly on how
the topologies of the networks attached to the $\mathbb{P}_{l}(\boldsymbol{v}%
_{l},\boldsymbol{h}_{l};\boldsymbol{\theta}_{l})$ scale as $l$ tends to
infinity. Here, we use a top-down approach; we construct a vast class of
probability measures $\mathbb{P}(\boldsymbol{v},\boldsymbol{h}%
;\boldsymbol{\theta})$, which admit a natural discretization process that
produces the Boltzmann distributions $\mathbb{P}_{l}(\boldsymbol{v}%
_{l},\boldsymbol{h}_{l};\boldsymbol{\theta}_{l})$ corresponding to a large
class of deep Boltzmann machines.

The construction presented here gives a correspondence between $p$-adic SFTs
and $p$-adic (ultrametric) Boltzmann machines (BM). Each of these BMs is a
continuous version of a discrete BM attached to $\mathbb{P}_{l}(\boldsymbol{v}%
_{l},\boldsymbol{h}_{l};\boldsymbol{\theta}_{l})$. Our main contribution is a
(mathematically rigorous) perturbative theory for computing the correlation
functions of the $\left\{  \boldsymbol{v},\boldsymbol{h}\right\}  ^{4}%
$-SFTs\ of type (\ref{Eq-SFT}) in arbitrary dimension.

The primary motivation for developing quantum field theories (QFTs) over a
completely disconnected space-time comes from Volovich's interpretation of
Bronstein's\ inequality. In the 1930s, Bronstein showed that general
relativity and quantum mechanics imply that the uncertainty $\Delta x$ of any
length measurement satisfies $\Delta x\geq L_{\text{Planck}}:=\sqrt
{\frac{\hbar G}{c^{3}}}$, where $L_{\text{Planck}}$ is the Planck length
($L_{\text{Planck}}\approx10^{-33}$ $cm$). This inequality implies that
space-time is not an infinitely divisible continuum (mathematically speaking,
space-time must be a completely disconnected topological space at the level of
the Planck scale). In the 1980s, Volovich proposed the conjecture that
space-time at the Planck scale has a $p$-adic nature \cite{Volovich1}. This
conjecture has propelled a wide variety of investigations in theoretical
physics, particularly in QFT, see, e.g., \cite{V-V-Z}-\cite{Zuniga-RMP-2022}.

The space $\mathbb{Q}_{p}$ has a very rich mathematical structure that allows
a rigorous mathematical formulation of the QFTs. The $p$-adic theories share
many standard features with the classical ones but have several crucial
differences. First, the $p$-adic numbers are not an ordered field; any QFT
with a $p$-adic time is acausal. Second, the $p$-adic space-time has a fractal
hierarchical structure. This feature is extremely useful in applications to
complex systems and neural networks. Third, in the $p$-adic framework, there
is a natural way for discretizing functions, where the discrete ones are
continuous. This property facilitates the construction of measures in QFT by
limiting processes.

Until recently, $p$-adic QFTs were considered mathematical toys without any
physical content, useful for understanding the problems of true QFTs. In
\cite{Zuniga2023}, we initiate the study of the correspondence between
$p$-adic statistical field theories (SFTs) and neural networks (NNs). This
work introduces a $p$-adic version of the convolutional deep Boltzmann
machines where only binary data is considered without implementation. By
adapting the mathematical techniques introduced by Le Roux and Benigio in
\cite{Le roux et al 1}, we show that these machines are universal
approximators for binary data tasks. In \cite{Zuniga-He-Zambrano}, the authors
discuss the implementation of a class of $p$-adic convolutional networks and
obtain desired results on a feature detection task based on hand-writing
images of decimal digits. The main novelty of the $p$-adic convolutional DBMs
is that they use significantly fewer parameters than the conventional ones.

Neural networks whose states are $p$-adic numbers were studied in
\ \cite{Albeverio-Khrennikov-Tirozzi}-\cite{Krennikov-tirozzi}. These models
are entirely different from the ones considered here. These ideas have been
used to develop non-Archimedean models of brain activity and mental processes
\cite{Khrenikov2A}. In \cite{ZZ1}-\cite{ZZ2}, $p$-adic versions of the
cellular neural networks were studied. These models involved abstract
evolution equations.

We now describe the specific results presented in this article and its
organization. In Section \ref{Section_2}, we quickly review the basic aspects
of $p$-adic analysis required. Section \ref{Section_3} is dedicated to the
construction of a large family of Gaussian probability measures $\mathbb{P}%
_{K}$ on $L_{\mathbb{R}}^{2}\left(  \mathbb{Z}_{p}^{N}\right)  $, where $K$ is
a general covariance kernel, see Theorem \ref{Theorem2}. In Section
\ref{Section_4}, we provide the precise definition of the energy functionals
$E(\boldsymbol{v},\boldsymbol{h};\boldsymbol{\theta})$, $\boldsymbol{\theta
=}\left(  a,b,w,c,d\right)  $, which requires the product measure
$\mathbb{P}_{K_{1}}\left(  \boldsymbol{v}\right)  \otimes\mathbb{P}_{K_{2}%
}\left(  \boldsymbol{h}\right)  $, where $K_{1}$, $K_{2}$ are covariance kernels.

In case $c=d=0$, the energy functional is denoted as $E^{\text{spin}%
}(\boldsymbol{v},\boldsymbol{h};\boldsymbol{\theta})$; for these functionals,
we give a general perturbative theory for the correlation functions when the
fields $\boldsymbol{v},\boldsymbol{h}$ belong to a ball centered at the origin
with radius $M$ in $L_{\mathbb{R}}^{2}\left(  \mathbb{Z}_{p}^{N}\right)  $.
When self-interactions are allowed, i.e., $c\neq0$ or $d\neq0$, \ our
perturbative theory requires that $a$, $b$ be constant functions but with an
arbitrary sign\ and that the fields $\boldsymbol{v},\boldsymbol{h}$ belong to
a ball centered at the origin with radius $M$ in $L_{\mathbb{R}}^{\infty
}\left(  \mathbb{Z}_{p}^{N}\right)  $. In machine learning applications, the
parameters $\boldsymbol{\theta=}\left(  a,b,w,c,d\right)  $ are adjusted using
a gradient-type algorithm; thus, assuming the parameters' sign is not
convenient. The fact that the sign of the coupling parameters $c$, $d$ is
arbitrary does not appear in the classical $\phi^{4}$-QFTs, see, e.g.,
\cite{Zuniga-RMP-2022}, \cite{Kleinert et al}.

By restricting the energy functionals $E(\boldsymbol{v},\boldsymbol{h}%
;\boldsymbol{\theta})$ to suitable finite-dimensional space, a discrete energy
functional $E_{l}\left(  \boldsymbol{v}_{l},\boldsymbol{h}_{l}%
;\boldsymbol{\theta}_{l}\right)  $ is obtained, see \cite{Zuniga-He-Zambrano},
\cite{Zuniga2023}, \cite{Zuniga-RMP-2022}. This discretization procedure is
reviewed in Section \ref{Section_5}. By using the discretization procedure,
intuitively, it is possible to discretize the measure $\mathbb{P}_{K_{1}%
}\otimes\mathbb{P}_{K_{2}}$ to obtain a finite-dimensional Boltzmann
probability distribution $\mathbb{P}_{l}(\boldsymbol{v}_{l},\boldsymbol{h}%
_{l};\boldsymbol{\theta}_{l})$, which corresponds to a $p$-adic discrete DBM.
In Section \ref{Section_6}, we rigorously show this fact, see Theorem
\ref{Theorem2}. Section \ref{Section_7} is dedicated to the rigorous
definition of correlation functions and the generating functionals. We show
that the correlation functions can be computed from the generating functionals
using functional derivatives like in the standard case. We compute explicitly
the generating functional $Z^{\text{spin}}(J_{1},J_{2})$ corresponding to
$E^{\text{spin}}(\boldsymbol{v},\boldsymbol{h};\boldsymbol{\theta})$ under a
general hypothesis about the Green function $G_{w}\left(  x,y\right)  $
attached to kernel $w\left(  x,y\right)  $, see Theorem \ref{Lemma6}.

Section \ref{Section_8} begins with an explicit formula for the correlation
functions attached to the SFT corresponding to $E^{\text{spin}}(\boldsymbol{v}%
,\boldsymbol{h};\boldsymbol{\theta})$, see Theorem \ref{Theorem2}. Then, we
give an explicit class of BMs and compute the correlation functions
$\boldsymbol{G}^{\left(  1+0\right)  }\left(  x\right)  $, $\boldsymbol{G}%
^{\left(  0+1\right)  }\left(  y\right)  $, $\boldsymbol{G}^{\left(
1+1\right)  }\left(  x,y\right)  $, with $x,y\in\mathbb{Z}_{p}^{N}$, which
are, respectively, the means of the visible and hidden fields, and the
correlation between the visible and hidden fields. We obtain that%
\[
\boldsymbol{G}^{\left(  1+0\right)  }\left(  x\right)  =A\left(  G_{w}\ast
b\right)  \left(  x\right)  ,\text{ \ }\boldsymbol{G}^{\left(  0+1\right)
}\left(  x\right)  =A\left(  \widetilde{G}_{w}\ast a\right)  \left(  x\right)
\text{,}%
\]
and%
\[
\text{ }\boldsymbol{G}^{\left(  1+1\right)  }\left(  x,y\right)  =\left(
\widetilde{G}_{w}\ast a\right)  \left(  x\right)  \text{ }\left(  G_{w}\ast
b\right)  \left(  y\right)  ,
\]
where $A$ is a positive constant, $G_{w}$ is the Green function,
$\widetilde{G}_{w}\left(  x\right)  =G_{w}\left(  -x\right)  $. The Green
function does not agree with the two-point correlation function like in the
standard and $p$-adic QFTs of one scalar field.

We also show that $\boldsymbol{G}^{\left(  1+1\right)  }\left(  x,y\right)  $
does not decay as a function of the distance $\left\Vert x-y\right\Vert _{p}$.
We argue that this fact results from the long-range interaction between the
neurons. An important novelty is that the general correlation functions cannot
be calculated using the Wick-Isserlis Theorem. The main result of Section
\ref{Section_8}, see Theorem \ref{Theorem3}, is a recursive formula for
computing the correlation functions $\boldsymbol{G}^{\left(  m+n\right)
}\left(  \left[  x_{i}\right]  _{1\leq i\leq m},\left[  y_{i}\right]  _{1\leq
i\leq n}\right)  $, using certain $3$-partitions of the sets of indices
attached to the points $x_{i}$ for $1\leq i\leq m$, and $y_{i}$ for $1\leq
i\leq n$.

In the last section, we present a perturbative technique for computing the
correlation functions $\boldsymbol{G}^{\left(  m+n\right)  }\left(
x,y;c,d\right)  $, for $c,d\in\mathbb{R}$, see Theorem \ref{Theorem4}.

The literature about statistical mechanics and machine learning is enormous;
see, e.g., \cite{Statisitical mechanics}-\cite{Statisitical mechanics2}, and
the references therein. To the best of our knowledge, our results are new and
complementary to the ones available in the literature. Most of the known
contributions study the statistical mechanics of the learning process. In
contrast, here we study SFTs coming from $p$-adic continuous spin glasses,
whose energy functionals have discretizations that define hierarchical
Boltzmann machines as used in practice, see, e.g., \cite{Zuniga-He-Zambrano},
\cite{Honglak et al}-\cite{Fischer-Igel}. Interestingly, about twenty years
ago, it was proposed to use ultrametric spin glasses in object recognition,
\cite{Capauto et al}.

\section{$p$-adic DBNs compare to Standard DBNs}

In this section, we compare the $p$-adic DBMs with the standard counterparts
and comment on the SFTs attached to these two types of networks. It is
relevant to mention here that this work is a continuation of \cite{Zuniga2023}%
-\cite{Zuniga-He-Zambrano}. Still, its goals and results differ entirely from
those presented in the publications mentioned above.

A discrete $p$-adic discrete deep belief network, denoted as
$DBN(p,l,\boldsymbol{\theta}_{l})$, is defined by an energy functional of type%

\[
E_{l}\left(  \boldsymbol{v}_{l},\boldsymbol{h}_{l};\boldsymbol{\theta}%
_{l}\right)  =-%
{\displaystyle\sum\limits_{\boldsymbol{i}\in G_{l}}}
a_{\boldsymbol{i}}v_{\boldsymbol{i}}-%
{\displaystyle\sum\limits_{\boldsymbol{i}\in G_{l}}}
b_{\boldsymbol{i}}h_{\boldsymbol{i}}-%
{\displaystyle\sum\limits_{\boldsymbol{i},\boldsymbol{j}\in G_{l}}}
h_{\boldsymbol{i}}w_{\boldsymbol{i},\boldsymbol{j}}v_{\boldsymbol{j}}+%
{\displaystyle\sum\limits_{\boldsymbol{i}\in G_{l}}}
c_{\boldsymbol{i}}v_{\boldsymbol{i}}^{4}+%
{\displaystyle\sum\limits_{\boldsymbol{i}\in G_{l}}}
d_{\boldsymbol{i}}h_{\boldsymbol{i}}^{4},
\]
where $G_{l}$ is a topological group; here, this means that it is a rooted
tree with $l$ layers, and also an additive group. As additive group $G_{l}$ is
isomorphic to the integers modulo $p^{l}$, this means that the elements of
$G_{l}$ has the form%
\[
\boldsymbol{i}=i_{0}+i_{1}p+\ldots+i_{k}p^{k}+\ldots+i_{l-1}p^{l-1}\text{,
with }i_{k}\in\left\{  0,1,\ldots,p-1\right\}  .
\]
The $p$-adic unit ball $\mathbb{Z}_{p}$ is the (inductive) limit of $G_{l}$
when $l$ tends to infinity. Thus, the elements of $\mathbb{Z}_{p}$ has the
form $\sum_{k=0}^{\infty}i_{k}p^{k}$.

From the perspective of the energy functionals, the discrete $p$-adic DBMs are
generalizations of the standard restricted Boltzmann machines and are a new
class of DBMs. However, the notion of deep architecture in the $p$-adic case
completely differs from the standard one. $p$-Adic numbers index the hidden
and visible units. Since the $p$-adics are naturally organized in a tree-like
structure, the hidden units $h_{\boldsymbol{i}}$, $\boldsymbol{i}\in G_{l}$,
(respectively, the visible units $v_{\boldsymbol{i}}$, $\boldsymbol{i}\in
G_{l}$) are are organized in a rooted tree $G_{l}$ with $l$ layers.

The network $DBN(p,l,\boldsymbol{\theta}_{l})$ is obtained from an continuous
$DBN(p,\boldsymbol{\theta})$, by a discretization process. In
$DBN(p,l,\boldsymbol{\theta}_{l})$, only the hidden and visible units (nodes
of $G_{l}$) at the top level (level $l$) are interacting through the weights
$w\left(  i,j\right)  $; the other hidden units located at layers $k$ with
$1<k<l$ are \ connected in a tree-like structure, but there are no weights
attached to these connections. In the standard DBNs, the energy functional
contains terms for connection between units of neighboring hidden layers.
Then, the $p$-adic DBNs are a new type of DBNs.

The topology of the unit ball$\ \mathbb{Z}_{p}$ strongly influences the
functions defined on it. If $\boldsymbol{h}:\mathbb{Z}_{p}\rightarrow
\mathbb{R}$ is the hidden field of a network $DBN(p,\boldsymbol{\theta})$,
then the restriction of $\boldsymbol{h}$ to $G_{l}$ gives the hidden field
$\boldsymbol{h}_{l}:G_{l}\rightarrow\mathbb{R}$ of network
$DBN(p,l,\boldsymbol{\theta}_{l})$, and since $G_{k}\subset G_{l}$, as sets,
for $1<k<l$, then the restriction of $\boldsymbol{h}_{l}$ to $G_{k}$ gives the
hidden field $\boldsymbol{h}_{k}:G_{k}\rightarrow\mathbb{R}$ of network
$DBN(p,k,\boldsymbol{\theta}_{k})$. This fact is a consequence that
$\mathbb{Z}_{p}\smallsetminus\left\{  0\right\}  $ is a self-similar set
constructed from finite rooted trees $G_{l}$ by the action of a group
consisting of translations and dilations, which are multiplications by powers
of $p$.

In principle, the $p$-adic $DBN(p,l,\boldsymbol{\theta}_{l})$ have less $w$
parameters than the conventional ones. This fact suggests that the $p$-adic
DBNs are faster than the classical counterparts. In \cite{Valueva et al}, the
authors proposed a convolutional neural network architecture in which the
neural network is divided into hardware and software parts to increase
performance and reduce the cost of implementation resources. They also propose
to use the residue number system in the hardware part to implement the
convolutional layer of the neural network. The use of the proposed
heterogeneous implementation reduces the average time of image recognition by
41.17\%. In terms of our results, the mentioned hardware implementation
computes terms of the $\sum_{\boldsymbol{i}\in G_{l}}w\left(  \boldsymbol{i}%
-\boldsymbol{j}\right)  \boldsymbol{h}(\boldsymbol{i})\boldsymbol{v}%
(\boldsymbol{j})$ for groups $G_{l}$ of the form $\mathbb{Z}/m^{l}\mathbb{Z}$,
where $m$ is a suitable chosen positive integer. In addition due to the fact
that $G_{l}$ is an additive group in convolutional $p$-adic DBNs, the number
of $w$ parameters has a linear size while in the standard case is quadratic,
see \cite{Zuniga-He-Zambrano}.

The convolutional networks are particularly relevant in applications, and
consequently, the $p$-adic versions are particularly interesting. Then, two
fundamental questions emerge: are \ the $p$-adic convolutional DBNs universal
approximators? Does the computational power of the $p$-adic
convolutional\ DBNs increase as the number of levels of the $G_{l}$ tree
increases? The answer to both questions is yes, if the discrete visible and
hidden fields are binary-value random variables. These results were
established in \cite{Zuniga2023}. We briefly discuss these results.

Suppose that $\boldsymbol{Q}$ is a binary discrete distribution. It is not
difficult to see that one can consider $\boldsymbol{Q}=\boldsymbol{Q}\left(
\boldsymbol{v}_{l}\right)  $ as a distribution on a tree $G_{l}$, for some
suitable $l$. We use a $DBN(p,l,\boldsymbol{\theta}_{l})$ network with
Boltzmann probability distribution
\[
\boldsymbol{P}_{l}\left(  \boldsymbol{v}_{l};\boldsymbol{\theta}_{l}\right)
\varpropto\exp\left(  -E_{l}\left(  \boldsymbol{v}_{l},\boldsymbol{h}%
_{l};\boldsymbol{\theta}_{l}\right)  \right)
\]
to approximate $\boldsymbol{Q}\left(  \boldsymbol{v}_{l}\right)  $ using the
KL relative entropy (Kullback--Leibler divergence). We assume that
$KL(\boldsymbol{Q}\left(  \boldsymbol{v}_{l}\right)  \mid\boldsymbol{P}%
_{l}\left(  \boldsymbol{v}_{l};\boldsymbol{\theta}_{l}\right)  )>0$. We
construct a new DBM, $DBN(p,l+1,\boldsymbol{\theta}_{l},\boldsymbol{w}%
_{l+1},b_{j_{0}}^{l+1})$, here $\boldsymbol{w}_{l+1}\in\mathbb{R}^{\#G_{l}}$,
$b_{j_{0}}^{l+1}\in\mathbb{R}$, with an extra layer and an extra hidden unit,
and with the same visible units, whose energy functional
\[
E_{l+1}(\boldsymbol{v}_{l+1},\boldsymbol{h}_{l+1};\boldsymbol{\theta}%
_{l},\boldsymbol{w}_{l+1},b_{j_{0}}^{l+1})=E_{l+1}(\boldsymbol{v}%
_{l},\boldsymbol{h}_{l+1};\boldsymbol{\theta}_{l},\boldsymbol{w}%
_{l+1},b_{j_{0}}^{l+1})
\]
is an extension of $E_{l}\left(  \boldsymbol{v}_{l},\boldsymbol{h}%
_{l};\boldsymbol{\theta}_{l}\right)  $, $\boldsymbol{h}_{l+1}=\left[
\begin{array}
[c]{c}%
\boldsymbol{h}_{l}\\
h_{j_{0}}^{l+1}%
\end{array}
\right]  $, and $h_{j_{0}}^{l+1}$ is the extra hidden unit. Then
\begin{equation}
KL(\boldsymbol{Q}\left(  \boldsymbol{v}_{l}\right)  \mid\boldsymbol{P}%
_{l+1}\left(  \boldsymbol{v}_{l};\boldsymbol{\theta}_{l},\boldsymbol{w}%
_{l+1},b_{j_{0}}^{l+1}\right)  )<KL(\boldsymbol{Q}\left(  \boldsymbol{v}%
_{l}\right)  \mid\boldsymbol{P}_{l}\left(  \boldsymbol{v}_{l}%
;\boldsymbol{\theta}_{l}\right)  ), \label{1}%
\end{equation}
for some $\boldsymbol{\theta}_{l},\boldsymbol{w}_{l+1},b_{j_{0}}^{l+1}$, see
\cite[Theorem 1]{Zuniga2023}. \textbf{\ }Inequality (\ref{1}) implies that the
$p$-adic discrete DBNs are universal approximators. Furthermore, the
computational power of a $DBN(p,l,\boldsymbol{\theta}_{l})$ increases with the
number of layers of $G_{l}$, and thus $DBN(p,l,\boldsymbol{\theta}_{l})$ \ is
an actual `deep network' in the standard sense. The mentioned results are very
technical and nonconstructive result. They are based on the work of Le Roux
and Bengio, see \cite{Zuniga2023} for further details.

In our view, for practical applications, only discrete $p$-adic DBNs are
needed. The demonstration that the $p$-adic discrete convolutional DBNs are
universal approximators is purely existential, which means that it does not
provide any criteria for implementing practical networks. In
\cite{Zuniga-He-Zambrano}, we showed that the $p$-adic convolutional DBMs can
do computations with actual data (binary images). We also discuss some types
of learning algorithms, the tuning the parameters using contrastive
divergence. We also implemented and tested $p$-adic DBNs with images of the
MNIST dataset.

There is a consensus about the need of a theory for understanding how large
biological and artificial networks (with deep architectures) work. In the
biological framework, it is known that in such networks different parts, which
are not necessary physically connected, collaborate to perform a task (an
emergent behavior). In our view, this assertion can be easily accepted in the
case of artificial neural networks, since most of them are bioinspired
constructions. The comparison the neural activity between different parts of a
large neural, drives naturally to the computation of correlation functions.
Then, in principle, the activity of a neural network can be described by
computing all the possible correlations functions associated with all the
regions of \ a given network. A such collection of correlations functions is
just a statistical field theory on a network. Recently, it has been proposed
the existence of a correspondence between neural networks (NNs) and quantum
field theories (QFTs), more precisely with Euclidean QFTs, see, e.g.,
\cite{Batchits et al}, \cite{Erbin}, \cite{Halverson et al}, \cite{Moritz
Dahmen}, \cite{Roberts-Yaida}, \cite{Yaida}. On the biological side, Buice and
Cowan developed the neocortex activity's statistical field theory (an
Euclidean field theory). Using this theory, the neural activity can be
understood through the correlation functions, which can be computed
perturbatively from a generating functional, see \cite{Buice}- \cite{Chow},
and the references therein. In all these works, a statistical field theory is
given by a ill-defined measure of the form%
\[
\frac{\exp(-E\left(  \varphi;\theta\right)  )}{Z}\mathcal{D}\varphi\text{,
where }Z=%
{\displaystyle\int\limits_{\text{all fields}}}
\mathcal{D}\varphi\exp(-E\left(  \varphi;\theta\right)  ),
\]
here $\mathcal{D}\varphi$ is a `spurious measure' over the set all fields, and
$E\left(  \varphi;\theta\right)  $ is an energy functional depending on the
parameters $\theta$. A rigorous construction of the measure $\mathcal{D}%
\varphi$, for relevant physical systems, starting from a discrete energy
functional is a complicated mathematical problem. In all the works above
mentioned on SFTs associated with NNs the authors do not even mention the
issue of the construction of the measure $\mathcal{D}\varphi$; they use
standard quantum field techniques to compute meaningful quantities. To our
knowledge, this work is the first effort to construct mathematically rigorous
SFTs associated with certain types of NNs. We should emphasize that working
with `true measures' versus `spurious measures' requires an entirely different
mathematical framework. Another approach is using techniques of quantum field
theory to compute correlations in a perturbative form, see, e.g.,
\cite{Yaida}; in this way, the existence of the measure $\mathcal{D}\varphi$
is avoided, but in this case, there is no a correspondence between NNs and SFTs.

In all the works mentioned above on SFTs associated with NNs, the authors
assume that the classical perturbative theory based on Feynman diagrams can be
immediately applied. However, the energy functionals attached to neural
networks tend to be nonlocal. For instance, terms like%
\[%
{\displaystyle\int\limits_{\mathbb{Z}_{p}^{N}}}
w(x,y)\boldsymbol{h}(x)\boldsymbol{v}(y)d^{N}xd^{N}y
\]
are purely nonlocal, and then, the use of functional integrals can be used
with caution since locality is required to use perturbative quantum field
theory; see \cite[Section 5.1]{Zinn-Justin}. This is consistent with our
finding that the general correlation functions cannot be calculated using the
Wick-Isserlis theorem.

\section{\label{Section_2} Basic facts on $p$-adic analysis}

In this section we fix the notation and collect some basic results on $p$-adic
analysis that we will use through the article. For a detailed exposition on
$p$-adic analysis the reader may consult \cite{V-V-Z}, \cite{A-K-S}%
-\cite{Taibleson}.

\subsection{The field of $p$-adic numbers}

Along this article $p$ will denote a prime number. The field of $p-$adic
numbers $\mathbb{Q}_{p}$ is defined as the completion of the field of rational
numbers $\mathbb{Q}$ with respect to the $p-$adic norm $|\cdot|_{p}$, which is
defined as
\[
|x|_{p}=%
\begin{cases}
0 & \text{if }x=0\\
p^{-\gamma} & \text{if }x=p^{\gamma}\dfrac{a}{b},
\end{cases}
\]
where $a$ and $b$ are integers coprime with $p$. The integer $\gamma
=ord_{p}(x):=ord(x)$, with $ord(0):=+\infty$, is called the\textit{\ }$p-$adic
order of $x$. We extend the $p-$adic norm to $\mathbb{Q}_{p}^{N}$ by taking%
\[
||x||_{p}:=\max_{1\leq i\leq N}|x_{i}|_{p},\qquad\text{for }x=(x_{1}%
,\dots,x_{N})\in\mathbb{Q}_{p}^{N}.
\]
We define $ord(x)=\min_{1\leq i\leq N}\{ord(x_{i})\}$, then $||x||_{p}%
=p^{-ord(x)}$.\ The metric space $\left(  \mathbb{Q}_{p}^{N},||\cdot
||_{p}\right)  $ is a complete ultrametric space. As a topological space
$\mathbb{Q}_{p}$\ is homeomorphic to a Cantor-like subset of the real line,
see, e.g., \cite{V-V-Z}, \cite{A-K-S}.

Any $p-$adic number $x\neq0$ has a unique expansion of the form
\[
x=p^{ord(x)}\sum_{j=0}^{\infty}x_{j}p^{j},
\]
where $x_{j}\in\{0,1,2,\dots,p-1\}$ and $x_{0}\neq0$. By using this expansion,
we define \textit{the fractional part }$\{x\}_{p}$\textit{ of }$x\in
\mathbb{Q}_{p}$ as the rational number
\[
\{x\}_{p}=%
\begin{cases}
0 & \text{if }x=0\text{ or }ord(x)\geq0\\
p^{ord(x)}\sum_{j=0}^{-ord(x)-1}x_{j}p^{j} & \text{if }ord(x)<0.
\end{cases}
\]
In addition, any $x\in\mathbb{Q}_{p}^{N}\smallsetminus\left\{  0\right\}  $
can be represented uniquely as $x=p^{ord(x)}v$, where $\left\Vert v\right\Vert
_{p}=1$.

\subsection{Topology of $\mathbb{Q}_{p}^{N}$}

For $r\in\mathbb{Z}$, denote by $B_{r}^{N}(a)=\{x\in\mathbb{Q}_{p}%
^{N};||x-a||_{p}\leq p^{r}\}$ the ball of radius $p^{r}$ with center at
$a=(a_{1},\dots,a_{N})\in\mathbb{Q}_{p}^{N}$, and take $B_{r}^{N}%
(0):=B_{r}^{N}$. Note that $B_{r}^{N}(a)=B_{r}(a_{1})\times\cdots\times
B_{r}(a_{N})$, where $B_{r}(a_{i}):=\{x\in\mathbb{Q}_{p};|x_{i}-a_{i}|_{p}\leq
p^{r}\}$ is the one-dimensional ball of radius $p^{r}$ with center at
$a_{i}\in\mathbb{Q}_{p}$. The ball $B_{0}^{N}$ equals the product of $N$
copies of $B_{0}=\mathbb{Z}_{p}$, the ring of\textit{ }$p-$adic integers. We
also denote by $S_{r}^{N}(a)=\{x\in\mathbb{Q}_{p}^{N};||x-a||_{p}=p^{r}\}$ the
sphere of radius\textit{ }$p^{r}$ with center at $a=(a_{1},\dots,a_{N}%
)\in\mathbb{Q}_{p}^{N}$, and take $S_{r}^{N}(0):=S_{r}^{N}$. We notice that
$S_{0}^{1}=\mathbb{Z}_{p}^{\times}$ (the group of units of $\mathbb{Z}_{p}$),
but $\left(  \mathbb{Z}_{p}^{\times}\right)  ^{N}\subsetneq S_{0}^{N}$. The
balls and spheres are both open and closed subsets in $\mathbb{Q}_{p}^{N}$. In
addition, two balls in $\mathbb{Q}_{p}^{N}$ are either disjoint or one is
contained in the other.

As a topological space $\left(  \mathbb{Q}_{p}^{N},||\cdot||_{p}\right)  $ is
totally disconnected, i.e., the only connected \ subsets of $\mathbb{Q}%
_{p}^{N}$ are the empty set and the points. A subset of $\mathbb{Q}_{p}^{N}$
is compact if and only if it is closed and bounded in $\mathbb{Q}_{p}^{N}$,
see, e.g., \cite[Section 1.3]{V-V-Z}, or \cite[Section 1.8]{A-K-S}. The balls
and spheres are compact subsets. Thus $\left(  \mathbb{Q}_{p}^{N}%
,||\cdot||_{p}\right)  $ is a locally compact topological space.

\subsection{The Haar measure}

Since $(\mathbb{Q}_{p}^{N},+)$ is a locally compact topological group, there
exists a Haar measure $d^{N}x$, which is invariant under translations, i.e.,
$d^{N}(x+a)=d^{N}x$, \cite{Halmos}. If we normalize this measure by the
condition $\int_{\mathbb{Z}_{p}^{N}}dx=1$, then $d^{N}x$ is unique.

\begin{notation}
We will use $\Omega\left(  p^{-r}||x-a||_{p}\right)  $ to denote the
characteristic function of the ball $B_{r}^{N}(a)=a+p^{-r}\mathbb{Z}_{p}^{N}$,
where
\[
\mathbb{Z}_{p}^{N}=\left\{  x\in\mathbb{Q}_{p}^{N};\left\Vert x\right\Vert
_{p}\leq1\right\}
\]
is the $N$-dimensional unit ball. For more general sets, we will use the
notation $1_{A}$ for the characteristic function of set $A$.
\end{notation}

\subsection{The Bruhat-Schwartz space}

A complex-valued function $\varphi$ defined on $\mathbb{Q}_{p}^{N}$ is
\textit{called locally constant} if for any $x\in\mathbb{Q}_{p}^{N}$ there
exist an integer $l(x)\in\mathbb{Z}$ such that%
\begin{equation}
\varphi(x+x^{\prime})=\varphi(x)\text{ for any }x^{\prime}\in B_{l(x)}^{N}.
\label{local_constancy}%
\end{equation}
A function $\varphi:\mathbb{Q}_{p}^{N}\rightarrow\mathbb{C}$ is called a
Bruhat-Schwartz function\textit{ }(or a test function) if it is locally
constant with compact support. Any test function can be represented as a
linear combination, with complex coefficients, of characteristic functions of
balls. The $\mathbb{C}$-vector space of Bruhat-Schwartz functions is denoted
by $\mathcal{D}(\mathbb{Q}_{p}^{N})$. We denote by $\mathcal{D}_{\mathbb{R}%
}(\mathbb{Q}_{p}^{N})$\ the $\mathbb{R}$-vector space of Bruhat-Schwartz
functions. For $\varphi\in\mathcal{D}(\mathbb{Q}_{p}^{N})$, the largest number
$l=l(\varphi)$ satisfying (\ref{local_constancy}) is called the exponent of
local constancy (or the parameter of constancy) of $\varphi$.

We denote by $\mathcal{D}_{m}^{l}(\mathbb{Q}_{p}^{N})$ the finite-dimensional
space of test functions from $\mathcal{D}(\mathbb{Q}_{p}^{N})$ having supports
in the ball $B_{m}^{N}$ and with parameters \ of constancy $\geq l$. We now
define a topology on $\mathcal{D}(\mathbb{Q}_{p}^{N})$ as follows. We say that
a sequence $\left\{  \varphi_{j}\right\}  _{j\in\mathbb{N}}$ of functions in
$\mathcal{D}(\mathbb{Q}_{p}^{N})$ converges to zero, if the two following
conditions hold true:

(1) there are two fixed integers $k_{0}$ and $m_{0}$ such that \ each
$\varphi_{j}\in$ $\mathcal{D}_{m_{0}}^{k_{0}}(\mathbb{Q}_{p}^{N})$;

(2) $\varphi_{j}\rightarrow0$ uniformly.

$\mathcal{D}(\mathbb{Q}_{p}^{N})$ endowed with the above topology becomes a
topological vector space.

\subsection{$L^{\rho}$ spaces}

Given $\rho\in\lbrack0,\infty)$, we denote by$L^{\rho}\left(
\mathbb{Q}
_{p}^{N}\right)  :=L^{\rho}\left(
\mathbb{Q}
_{p}^{N},d^{N}x\right)  ,$ the $\mathbb{C}-$vector space of all the complex
valued functions $g$ satisfying
\[
\left\Vert g\right\Vert _{\rho}=\left(  \text{ }%
{\displaystyle\int\limits_{\mathbb{Q}_{p}^{N}}}
\left\vert g\left(  x\right)  \right\vert ^{\rho}d^{N}x\right)  ^{\frac
{1}{\rho}}<\infty,
\]
where $d^{N}x$ is the normalized Haar measure on $\left(  \mathbb{Q}_{p}%
^{N},+\right)  $. The corresponding $\mathbb{R}$-vector spaces are denoted as
$L_{\mathbb{R}}^{\rho}\left(
\mathbb{Q}
_{p}^{N}\right)  =L_{\mathbb{R}}^{\rho}\left(
\mathbb{Q}
_{p}^{N},d^{N}x\right)  $, $1\leq\rho<\infty$.

If $U$ is an open subset of $\mathbb{Q}_{p}^{N}$, $\mathcal{D}(U)$ denotes the
$\mathbb{C}$-vector space of test functions with supports contained in $U$,
then $\mathcal{D}(U)$ is dense in
\[
L^{\rho}\left(  U\right)  =\left\{  \varphi:U\rightarrow\mathbb{C};\left\Vert
\varphi\right\Vert _{\rho}=\left\{
{\displaystyle\int\limits_{U}}
\left\vert \varphi\left(  x\right)  \right\vert ^{\rho}d^{N}x\right\}
^{\frac{1}{\rho}}<\infty\right\}  ,
\]
for $1\leq\rho<\infty$, see, e.g., \cite[Section 4.3]{A-K-S}. We denote by
$L_{\mathbb{R}}^{\rho}\left(  U\right)  $ the real counterpart of $L^{\rho
}\left(  U\right)  $.

\subsection{The Fourier transform}

Set $\chi_{p}(y)=\exp(2\pi i\{y\}_{p})$ for $y\in\mathbb{Q}_{p}$. The map
$\chi_{p}(\cdot)$ is an additive character on $\mathbb{Q}_{p}$, i.e., a
continuous map from $\left(  \mathbb{Q}_{p},+\right)  $ into $S$ (the unit
circle considered as multiplicative group) satisfying $\chi_{p}(x_{0}%
+x_{1})=\chi_{p}(x_{0})\chi_{p}(x_{1})$, $x_{0},x_{1}\in\mathbb{Q}_{p}$.\ The
additive characters of $\mathbb{Q}_{p}$ form an Abelian group which is
isomorphic to $\left(  \mathbb{Q}_{p},+\right)  $. The isomorphism is given by
$\kappa\rightarrow\chi_{p}(\kappa x)$, see, e.g., \cite[Section 2.3]{A-K-S}.

Given $\xi=(\xi_{1},\dots,\xi_{N})$ and $x=(x_{1},\dots,x_{N})\allowbreak
\in\mathbb{Q}_{p}^{N}$, we set $\xi\cdot x:=\sum_{j=1}^{N}\xi_{j}x_{j}$. The
Fourier transform of $\varphi\in\mathcal{D}(\mathbb{Q}_{p}^{N})$ is defined
as
\[
\mathcal{F}\varphi(\xi)=%
{\displaystyle\int\limits_{\mathbb{Q} _{p}^{N}}}
\chi_{p}(\xi\cdot x)\varphi(x)d^{N}x\quad\text{for }\xi\in\mathbb{Q}_{p}^{N},
\]
where $d^{N}x$ is the normalized Haar measure on $\mathbb{Q}_{p}^{N}$. The
Fourier transform is a linear isomorphism from $\mathcal{D}(\mathbb{Q}_{p}%
^{N})$ onto itself satisfying
\begin{equation}
(\mathcal{F}(\mathcal{F}\varphi))(\xi)=\varphi(-\xi), \label{Eq_FFT}%
\end{equation}
see, e.g., \cite[Section 4.8]{A-K-S}. We will also use the notation
$\mathcal{F}_{x\rightarrow\kappa}\varphi$ and $\widehat{\varphi}$\ for the
Fourier transform of $\varphi$.

The Fourier transform extends to $L^{2}$. If $f\in L^{2}\left(  \mathbb{Q}%
_{p}^{N}\right)  $, its Fourier transform is defined as
\[
(\mathcal{F}f)(\xi)=\lim_{k\rightarrow\infty}%
{\displaystyle\int\limits_{||x||_{p}\leq p^{k}}}
\chi_{p}(\xi\cdot x)f(x)d^{N}x,\quad\text{for }\xi\in%
\mathbb{Q}
_{p}^{N},
\]
where the limit is taken in $L^{2}\left(  \mathbb{Q}_{p}^{N}\right)  $. We
recall that the Fourier transform is unitary on $L^{2}\left(  \mathbb{Q}%
_{p}^{N}\right)  ,$ i.e. $||f||_{2}=||\mathcal{F}f||_{2}$ for $f\in L^{2}$ and
that (\ref{Eq_FFT}) is also valid in $L^{2}$, see, e.g., \cite[Chapter III,
Section 2]{Taibleson}.

\subsection{Distributions}

The $\mathbb{C}$-vector space $\mathcal{D}^{\prime}\left(  \mathbb{Q}_{p}%
^{N}\right)  $ of all continuous linear functionals on $\mathcal{D}%
(\mathbb{Q}_{p}^{N})$ is called the Bruhat-Schwartz space of distributions.
Every linear functional on $\mathcal{D}(\mathbb{Q}_{p}^{N})$ is continuous,
i.e. $\mathcal{D}^{\prime}\left(  \mathbb{Q}_{p}^{N}\right)  $\ agrees with
the algebraic dual of $\mathcal{D}(\mathbb{Q}_{p}^{N})$, see, e.g.,
\cite[Chapter 1, VI.3, Lemma]{V-V-Z}. We denote by $\mathcal{D}_{\mathbb{R}%
}^{\prime}\left(  \mathbb{Q}_{p}^{N}\right)  $ the dual space of
$\mathcal{D}_{\mathbb{R}}\left(  \mathbb{Q}_{p}^{N}\right)  $.

We endow $\mathcal{D}^{\prime}\left(  \mathbb{Q}_{p}^{N}\right)  $ with the
weak topology, i.e. a sequence $\left\{  T_{j}\right\}  _{j\in\mathbb{N}}$ in
$\mathcal{D}^{\prime}\left(  \mathbb{Q}_{p}^{N}\right)  $ converges to $T$ if
$\lim_{j\rightarrow\infty}T_{j}\left(  \varphi\right)  =T\left(
\varphi\right)  $ for any $\varphi\in\mathcal{D}(\mathbb{Q}_{p}^{N})$. The
map
\[%
\begin{array}
[c]{lll}%
\mathcal{D}^{\prime}\left(  \mathbb{Q}_{p}^{N}\right)  \times\mathcal{D}%
(\mathbb{Q}_{p}^{N}) & \rightarrow & \mathbb{C}\\
\left(  T,\varphi\right)  & \rightarrow & T\left(  \varphi\right)
\end{array}
\]
is a bilinear form which is continuous in $T$ and $\varphi$ separately. We
call this map the pairing between $\mathcal{D}^{\prime}\left(  \mathbb{Q}%
_{p}^{N}\right)  $ and $\mathcal{D}(\mathbb{Q}_{p}^{N})$. From now on we will
use $\left(  T,\varphi\right)  $ instead of $T\left(  \varphi\right)  $.

Every $f$\ in $L_{loc}^{1}$ defines a distribution $f\in\mathcal{D}^{\prime
}\left(  \mathbb{Q}_{p}^{N}\right)  $ by the formula
\[
\left(  f,\varphi\right)  =%
{\displaystyle\int\limits_{\mathbb{Q}_{p}^{N}}}
f\left(  x\right)  \varphi\left(  x\right)  d^{N}x.
\]

\subsection{\label{SEction Fourier Transform}The Fourier transform of a
distribution}

The Fourier transform $\mathcal{F}\left[  T\right]  $ of a distribution
$T\in\mathcal{D}^{\prime}\left(  \mathbb{Q}_{p}^{N}\right)  $ is defined by%
\[
\left(  \mathcal{F}\left[  T\right]  ,\varphi\right)  =\left(  T,\mathcal{F}%
\left[  \varphi\right]  \right)  \text{ for all }\varphi\in\mathcal{D}\left(
\mathbb{Q}_{p}^{N}\right)  \text{.}%
\]
The Fourier transform $T\rightarrow\mathcal{F}\left[  T\right]  $ is a linear
and continuous isomorphism from $\mathcal{D}^{\prime}\left(  \mathbb{Q}%
_{p}^{N}\right)  $\ onto $\mathcal{D}^{\prime}\left(  \mathbb{Q}_{p}%
^{N}\right)  $. Furthermore, $T=\mathcal{F}\left[  \mathcal{F}\left[
T\right]  \left(  -\xi\right)  \right]  $.

Let $T\in\mathcal{D}^{\prime}\left(  \mathbb{Q}_{p}^{n}\right)  $ be a
distribution. Then supp$T\subset B_{L}^{N}$ if and only if $\mathcal{F}\left[
T\right]  $ is a locally constant function, and the exponent of local
constancy of $\mathcal{F}\left[  T\right]  $ is $\geq-L$. In addition%

\[
\mathcal{F}\left[  T\right]  \left(  \xi\right)  =\left(  T\left(  y\right)
,\Omega\left(  p^{-L}\left\Vert y\right\Vert _{p}\right)  \chi_{p}\left(
\xi\cdot y\right)  \right)  ,
\]
see, e.g., \cite[Section 4.9]{A-K-S}.

\subsection{The direct product of distributions}

Given $F\in\mathcal{D}^{\prime}\left(  \mathbb{Q}_{p}^{N}\right)  $ and
$G\in\mathcal{D}^{\prime}\left(  \mathbb{Q}_{p}^{M}\right)  $, their
\textit{direct product }$F\times G$ is defined by the formula%
\[
\left(  F\left(  x\right)  \times G\left(  y\right)  ,\varphi\left(
x,y\right)  \right)  =\left(  F\left(  x\right)  ,\left(  G\left(  y\right)
,\varphi\left(  x,y\right)  \right)  \right)  \text{ for }\varphi\left(
x,y\right)  \in\mathcal{D}\left(  \mathbb{Q}_{p}^{N+M}\right)  .
\]
The direct product is commutative: $F\times G=G\times F$. In addition the
direct product is continuous with respect to the joint factors.

\subsection{The convolution of distributions}

Given $F,G\in\mathcal{D}^{\prime}\left(  \mathbb{Q}_{p}^{N}\right)  $, their
convolution $F\ast G$ is defined by%
\[
\left(  F\ast G,\varphi\right)  =\lim_{k\rightarrow\infty}\left(  F\left(
y\right)  \times G\left(  x\right)  ,\Omega\left(  p^{-k}\left\Vert
y\right\Vert _{p}\right)  \varphi\left(  x+y\right)  \right)
\]
if the limit exists for all $\varphi\in\mathcal{D}\left(  \mathbb{Q}_{p}%
^{N}\right)  $. We recall that if $F\ast G$ exists, then $G\ast F$ exists and
$F\ast G=G\ast F$, see, e.g., \cite[Section 7.1]{V-V-Z}. If $F,G\in
\mathcal{D}^{\prime}\left(  \mathbb{Q}_{p}^{N}\right)  $ and supp$G\subset
B_{L}^{n}$, then the convolution $F\ast G$ exists, and it is given by the
formula%
\[
\left(  F\ast G,\varphi\right)  =\left(  F\left(  y\right)  \times G\left(
x\right)  ,\Omega\left(  p^{-L}\left\Vert y\right\Vert _{p}\right)
\varphi\left(  x+y\right)  \right)  \text{ for }\varphi\in\mathcal{D}\left(
\mathbb{Q}_{p}^{N}\right)  .
\]
In the case in which $G=\psi\in\mathcal{D}\left(  \mathbb{Q}_{p}^{n}\right)
$, $F\ast\psi$ is a locally constant function given by
\[
\left(  F\ast\psi\right)  \left(  y\right)  =\left(  F\left(  x\right)
,\psi\left(  y-x\right)  \right)  ,
\]
see, e.g., \cite[Section 7.1]{V-V-Z}.

\subsection{The multiplication of distributions}

Set $\delta_{k}\left(  x\right)  :=p^{Nk}\Omega\left(  p^{k}\left\Vert
x\right\Vert _{p}\right)  $ for $k\in\mathbb{N}$. Given $F,G\in\mathcal{D}%
^{\prime}\left(  \mathbb{Q}_{p}^{N}\right)  $, their product $F\cdot G$ is
defined by%
\[
\left(  F\cdot G,\varphi\right)  =\lim_{k\rightarrow\infty}\left(  G,\left(
F\ast\delta_{k}\right)  \varphi\right)
\]
if the limit exists for all $\varphi\in\mathcal{D}\left(  \mathbb{Q}_{p}%
^{N}\right)  $. If the product $F\cdot G$ exists then the product $G\cdot F$
exists and they are equal.

We recall that \ the existence of the product $F\cdot G$ is equivalent \ to
the existence of $\mathcal{F}\left[  F\right]  \ast\mathcal{F}\left[
G\right]  $. In addition, $\mathcal{F}\left[  F\cdot G\right]  =\mathcal{F}%
\left[  F\right]  \ast\mathcal{F}\left[  G\right]  $ and $\mathcal{F}\left[
F\ast G\right]  =\mathcal{F}\left[  F\right]  \cdot\mathcal{F}\left[
G\right]  $, see, e.g., \cite[Section 7.5]{V-V-Z}.

\section{\label{Section_3} Gaussian measures in $L_{\mathbb{R}}^{2}\left(
\mathbb{Z}_{p}^{N}\right)  $}

\subsection{Some preliminary results on $L_{\mathbb{R}}^{2}\left(
\mathbb{Z}_{p}^{N}\right)  $}

We set
\[
L_{\mathbb{R}}^{2}\left(  \mathbb{Z}_{p}^{N}\right)  :=\left\{  f:\mathbb{Z}%
_{p}^{N}\rightarrow\mathbb{R};%
{\displaystyle\int\limits_{\mathbb{Z}_{p}^{N}}}
\left\vert f\right\vert ^{2}d^{N}x<\infty\right\}  .
\]
Thus $L_{\mathbb{R}}^{2}\left(  \mathbb{Z}_{p}^{N}\right)  $ is a real Hilbert
space with inner product $\left\langle f,g\right\rangle =\int_{\mathbb{Z}%
_{p}^{N}}fgd^{N}x$, and norm $\left\Vert f\right\Vert ^{2}=\left\langle
f,f\right\rangle $, for $f,g\in L_{\mathbb{R}}^{2}\left(  \mathbb{Z}_{p}%
^{N}\right)  $. We denote by $L^{2}\left(  \mathbb{Z}_{p}^{N}\right)  $ the
complexification of $L_{\mathbb{R}}^{2}\left(  \mathbb{Z}_{p}^{N}\right)  $,
which is a complex Hilbert space with inner product $\left\langle
h,k\right\rangle =\int_{\mathbb{Z}_{p}^{N}}h\overline{k}d^{N}x$, for $h,k\in
L^{2}\left(  \mathbb{Z}_{p}^{N}\right)  $, where the bar denotes the complex
conjugate. Along this article, we \ embed $L^{2}\left(  \mathbb{Z}_{p}%
^{N}\right)  $ into $L^{2}\left(  \mathbb{Q}_{p}^{N}\right)  $, by extending
the functions in $L^{2}\left(  \mathbb{Z}_{p}^{N}\right)  $ as zero outside of
the unit ball. In particular, if $f\in L_{\mathbb{R}}^{2}\left(
\mathbb{Z}_{p}^{N}\right)  $, we denote by $\widehat{f}$ its Fourier transform
when $f$ is considered as a function from $L^{2}\left(  \mathbb{Q}_{p}%
^{N}\right)  $. Notice that the support of $\widehat{f}$ is not necessarily
contained in $\mathbb{Z}_{p}^{N}$.

\begin{notation}
(i) We set $\left[  \xi\right]  _{p}:=\max\left\{  1,\left\Vert \xi\right\Vert
_{p}\right\}  $, for $\xi\in\mathbb{Q}_{p}^{N}$. If $h:\mathbb{Q}_{p}%
^{N}\rightarrow\mathbb{R}$, we write $h\left(  \xi\right)  =h\left(  \left[
\xi\right]  _{p}\right)  $ to mean that there is a function $s:\mathbb{R}%
_{+}\rightarrow\mathbb{R}$ such that $h\left(  \xi\right)  =s\left(  \left[
\xi\right]  _{p}\right)  $. Along this article we will identify $s$ with $h$.

\noindent(ii) We denote by $C(\mathbb{Q}_{p}^{N},\mathbb{R})$ the $\mathbb{R}%
$-vector space of functions $f:\mathbb{Q}_{p}^{N}\rightarrow\mathbb{R}$.
\end{notation}

\begin{remark}
The H\"{o}lder inequality and the fact that $\int_{\mathbb{Z}_{p}^{N}}%
d^{N}x=1$ imply that%
\[
\left\Vert f\right\Vert _{1}\leq\left\Vert f\right\Vert _{\rho}\text{, for
}1\leq\rho\leq\infty,
\]
which in turn implies that
\[
L^{\rho}(\mathbb{Z}_{p}^{N})\subset L^{1}(\mathbb{Z}_{p}^{N})\text{, }%
1\leq\rho\leq\infty.
\]
Given $f\in L^{1}(\mathbb{Z}_{p}^{N})$, we denote by $\left\Vert f\right\Vert
_{\infty}$ the essential supremum of $f$. Notice that $\mathcal{D}%
(\mathbb{Z}_{p}^{N})\subset L^{\rho}(\mathbb{Z}_{p}^{N})$, for $1\leq\rho
\leq\infty$, and that $\mathcal{D}(\mathbb{Z}_{p}^{N})$ is dense in $L^{\rho
}(\mathbb{Z}_{p}^{N})$, for $1\leq\rho<\infty$.
\end{remark}

From now on we fix a function $\widehat{K}:\mathbb{Q}_{p}^{N}\rightarrow
\mathbb{R}$ \ satisfying the following:

\begin{enumerate}
\item[(H1)] $\widehat{K}\in$ $L_{\mathbb{R}}^{1}\left(  \mathbb{Q}_{p}%
^{N}\right)  \cap C(\mathbb{Q}_{p}^{N},\mathbb{R})$,

\item[(H2)] $\widehat{K}\left(  \xi\right)  =\widehat{K}(\left[  \xi\right]
_{p})$, for $\xi\in\mathbb{Q}_{p}^{N}$,

\item[(H3)] $\widehat{K}\left(  \xi\right)  >0$, for $x\in\mathbb{Q}_{p}^{N}$.
\end{enumerate}

Then, since $\widehat{K}(\left[  \xi\right]  _{p})$ is integrable, $K$ is a
continuous real-valued function. Furthermore, as a distribution, $K$ is
supported in the unit ball, consequently $K:\mathbb{Z}_{p}^{N}\rightarrow
\mathbb{R}$ is a continuous real-valued function supported in the unit ball,
see Section \ref{SEction Fourier Transform}, alternatively see \cite[Section
4.9]{A-K-S}, and thus $K\in L_{\mathbb{R}}^{1}\left(  \mathbb{Z}_{p}%
^{N}\right)  $. Notice that $\left\Vert \widehat{K}\right\Vert _{1}%
=\widehat{K}(\left[  0\right]  _{p})=\widehat{K}\left(  1\right)  >0$.

We now define the operator%
\[%
\begin{array}
[c]{ccc}%
L_{\mathbb{R}}^{2}\left(  \mathbb{Z}_{p}^{N}\right)  & \rightarrow &
L_{\mathbb{R}}^{2}\left(  \mathbb{Z}_{p}^{N}\right) \\
f & \rightarrow & \square_{K}f,
\end{array}
\]
where $\left(  \square_{K}f\right)  \left(  x\right)  :=K(x)\ast f(x)$. Since
$\left(  \mathbb{Z}_{p}^{N},+\right)  $ is an Abelian group, the function
$K(x)\ast f(x)$\ is supported in the unit ball. Thus, this mapping is a
well-defined linear, bounded operator, since%
\[
\left\Vert \square_{K}f\right\Vert _{2}\leq\left\Vert K\right\Vert
_{1}\left\Vert f\right\Vert _{2}\text{.}%
\]

Notice that%
\[
\left(  \square_{K}f\right)  \left(  x\right)  :=\mathcal{F}_{\xi\rightarrow
x}^{-1}\left(  \widehat{K}(\xi)\mathcal{F}_{x\rightarrow\xi}f\right)  .
\]

\begin{lemma}
\label{Lemma1}$\square_{K}$ is a symmetric, positive operator.
\end{lemma}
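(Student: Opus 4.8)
The plan is to diagonalize $\square_{K}$ via the Fourier transform and read off both properties from the positivity of the multiplier $\widehat{K}$. Recall from the identity displayed just above the statement that
\[
\left(  \square_{K}f\right)  \left(  x\right)  =\mathcal{F}_{\xi\rightarrow x}^{-1}\left(  \widehat{K}(\xi)\,\mathcal{F}_{x\rightarrow\xi}f\right),
\]
so that, after embedding $L_{\mathbb{R}}^{2}(\mathbb{Z}_{p}^{N})$ into $L^{2}(\mathbb{Q}_{p}^{N})$, the operator $\square_{K}$ acts as multiplication by $\widehat{K}$ on the Fourier side. The core computation I would carry out is the identity
\[
\left\langle \square_{K}f,g\right\rangle =\int_{\mathbb{Q}_{p}^{N}}\widehat{K}(\xi)\,\widehat{f}(\xi)\,\overline{\widehat{g}(\xi)}\,d^{N}\xi\quad\text{for all }f,g\in L_{\mathbb{R}}^{2}(\mathbb{Z}_{p}^{N}).
\]
To obtain it, first I would note that $\widehat{\square_{K}f}=\widehat{K}\cdot\widehat{f}$, which follows from the convolution--multiplication correspondence $\mathcal{F}[F\ast G]=\mathcal{F}[F]\cdot\mathcal{F}[G]$ recorded in the section on multiplication of distributions (here $K\in L_{\mathbb{R}}^{1}(\mathbb{Z}_{p}^{N})$ and $f\in L^{2}$ are both supported in the unit ball, so the convolution is well defined). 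Then I would invoke the Plancherel identity $\int h\overline{k}=\int\widehat{h}\,\overline{\widehat{k}}$, valid because the Fourier transform is unitary on $L^{2}(\mathbb{Q}_{p}^{N})$, applied to $h=\square_{K}f$ and $k=g$. Since $\widehat{f},\widehat{g}\in L^{2}$ and $\widehat{K}$ is bounded (being a continuous, radially-constant $L^{1}$ function, its values are indexed by $[\xi]_{p}\in\{1,p,p^{2},\dots\}$ and must form a null sequence, hence are bounded), the integrand lies in $L^{1}(\mathbb{Q}_{p}^{N})$ by Cauchy--Schwarz, so the right-hand side is finite.

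Granting this identity, both conclusions are immediate. For symmetry, the same computation applied to $\left\langle f,\square_{K}g\right\rangle$ gives $\int\widehat{f}\,\overline{\widehat{K}\widehat{g}}=\int\overline{\widehat{K}}\,\widehat{f}\,\overline{\widehat{g}}$, and since $\widehat{K}$ is real-valued by (H1) we have $\overline{\widehat{K}}=\widehat{K}$, so the expression coincides with $\left\langle \square_{K}f,g\right\rangle$; thus $\left\langle \square_{K}f,g\right\rangle =\left\langle f,\square_{K}g\right\rangle$. For positivity, setting $g=f$ yields $\left\langle \square_{K}f,f\right\rangle =\int_{\mathbb{Q}_{p}^{N}}\widehat{K}(\xi)\,|\widehat{f}(\xi)|^{2}\,d^{N}\xi\geq0$, since $\widehat{K}(\xi)>0$ by (H3) and $|\widehat{f}(\xi)|^{2}\geq0$.

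The step I expect to require the most care is the rigorous justification of $\widehat{\square_{K}f}=\widehat{K}\cdot\widehat{f}$ for a general $f\in L^{2}$ rather than for a test function, since the convolution theorem and Plancherel are most cleanly stated on nice classes. I would therefore first establish the displayed identity for $f,g\in\mathcal{D}(\mathbb{Z}_{p}^{N})$, where every manipulation is elementary, and then extend it to all of $L_{\mathbb{R}}^{2}(\mathbb{Z}_{p}^{N})$ by density of $\mathcal{D}(\mathbb{Z}_{p}^{N})$ in $L^{2}$, using the bound $\left\Vert \square_{K}f\right\Vert _{2}\leq\left\Vert K\right\Vert _{1}\left\Vert f\right\Vert _{2}$ already established together with the continuity of the inner product. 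This density-plus-continuity argument is routine but is the only genuinely technical point; the positivity of $\widehat{K}$ then does all the real work.
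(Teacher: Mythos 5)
Your proposal is correct and follows essentially the same route as the paper: pass to the Fourier side via Plancherel, use $\widehat{\square_{K}f}=\widehat{K}\,\widehat{f}$ to write $\left\langle \square_{K}f,g\right\rangle =\int_{\mathbb{Q}_{p}^{N}}\widehat{K}\,\widehat{f}\,\overline{\widehat{g}}\,d^{N}\xi$, and conclude symmetry from the realness of $\widehat{K}$ and positivity from (H3). The additional density-plus-continuity argument you sketch to justify the multiplier identity for general $f\in L^{2}$ is a reasonable extra precaution that the paper's proof takes for granted, but it does not change the substance of the argument.
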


\begin{proof}
Take $f,g\in L_{\mathbb{R}}^{2}\left(  \mathbb{Z}_{p}^{N}\right)  $, then by
using the fact that the Fourier transform preserves the inner product in
$L^{2}\left(  \mathbb{Q}_{p}^{N}\right)  $, we have
\begin{align*}
\left\langle \square_{K}f,g\right\rangle  &  =%
{\displaystyle\int\limits_{\mathbb{Z}_{p}^{N}}}
\left(  \square_{K}f\right)  gd^{N}x=%
{\displaystyle\int\limits_{\mathbb{Q}_{p}^{N}}}
\left(  \square_{K}f\right)  \text{ }\overline{g}d^{N}x\\
&  =%
{\displaystyle\int\limits_{\mathbb{Q}_{p}^{N}}}
\left(  \widehat{\square_{K}f}\right)  \text{ }\overline{\widehat{g}}d^{N}\xi=%
{\displaystyle\int\limits_{\mathbb{Q}_{p}^{N}}}
\widehat{K}(\xi)\widehat{f}\text{ }\overline{\widehat{g}}d^{N}\xi=%
{\displaystyle\int\limits_{\mathbb{Q}_{p}^{N}}}
\widehat{f}\text{ }\overline{\left(  \widehat{K}(\xi)\widehat{g}\right)
}d^{N}\xi\\
&  =%
{\displaystyle\int\limits_{\mathbb{Q}_{p}^{N}}}
f\text{ }\left(  \overline{\square_{K}g}\right)  d^{N}x=%
{\displaystyle\int\limits_{\mathbb{Q}_{p}^{N}}}
f\text{ }\left(  \square_{K}g\right)  d^{N}x=\left\langle f,\square
_{K}g\right\rangle .
\end{align*}
Thus, the operator $\square_{K}$ is symmetric. The positivity follows from%
\[
\left\langle \square_{K}f,f\right\rangle =%
{\displaystyle\int\limits_{\mathbb{Q}_{p}^{N}}}
\widehat{K}(\xi)\left\vert \widehat{f}\right\vert ^{2}d^{N}\xi\geq0,
\]
since $\widehat{K}(\xi)>0$, by hypothesis (H3).
\end{proof}

\begin{example}
Let $k:\mathbb{R}_{+}\rightarrow\mathbb{R}_{+}$, where $\mathbb{R}%
_{+}:=\left\{  x\in\mathbb{R};x\geq0\right\}  $. Any function of type%
\begin{equation}
\widehat{K}(\xi)=k(\left[  \xi\right]  _{p})\text{, with }k(1)+\left(
1-p^{-N}\right)
{\displaystyle\sum\limits_{j=1}^{\infty}}
p^{Nj}k(p^{j})<\infty\text{,} \label{Type-H}%
\end{equation}
satisfies hypotheses (H1)-(H3). The convergence of the series in
(\ref{Type-H}) is equivalent to hypothesis (H1).
\end{example}

\begin{remark}
\label{Note_Inverse_square}By using that $\frac{1}{\widehat{K}(\xi)}$ is a
well-defined continuous function on $\mathbb{Q}_{p}^{N}$, by the
Cauchy-Schwarz inequality, we have $\frac{1}{\widehat{K}(\xi)}\widehat{f}\in
L_{loc}^{1}\left(  \mathbb{Q}_{p}^{N}\right)  $, for any $f\in L_{\mathbb{R}%
}^{2}\left(  \mathbb{Z}_{p}^{N}\right)  $, and thus the mapping%
\[%
\begin{array}
[c]{ccc}%
L_{\mathbb{R}}^{2}\left(  \mathbb{Z}_{p}^{N}\right)  & \rightarrow &
\mathcal{D}_{\mathbb{R}}^{\prime}\left(  \mathbb{Q}_{p}^{N}\right) \\
f & \rightarrow & \square_{K}^{-1}f,
\end{array}
\]
with%
\[
\left(  \square_{K}^{-1}f\right)  \left(  x\right)  :=\mathcal{F}%
_{\xi\rightarrow x}^{-1}\left(  \frac{1}{\widehat{K}(\xi)}\mathcal{F}%
_{x\rightarrow\xi}f\right)
\]
is a well-defined operator. Notice that $\square_{K}^{-1}\square_{K}%
f=\square_{K}\square_{K}^{-1}f=f$ for any $f\in L_{\mathbb{R}}^{2}\left(
\mathbb{Z}_{p}^{N}\right)  $. In the verification of this assertion we use
that $\square_{K}$ has an extension of the subspace of distributions $T\in$
$\mathcal{D}_{\mathbb{R}}^{\prime}\left(  \mathbb{Q}_{p}^{N}\right)  $ such
that $\widehat{K}(\xi)\widehat{T}\in\mathcal{D}^{\prime}\left(  \mathbb{Q}%
_{p}^{N}\right)  $.

On the other hand, $\mathcal{F}_{\xi\rightarrow x}^{-1}\left(  \frac
{1}{\widehat{K}(\xi)}\right)  \in\mathcal{D}_{\mathbb{R}}^{\prime}\left(
\mathbb{Z}_{p}^{N}\right)  $, and
\[
\left(  \square_{K}^{-1}f\right)  \left(  x\right)  =\mathcal{F}%
_{\xi\rightarrow x}^{-1}\left(  \frac{1}{\widehat{K}(\xi)}\right)  \ast f(x)
\]
is a function supported in $\mathbb{Z}_{p}^{N}$. In particular, if $f$ is a
test function $\square_{K}^{-1}f$ is a test function.
\end{remark}

\subsection{$p$-adic wavelets supported in balls}

The set of functions $\left\{  \Psi_{rnj}\right\}  $ defined as%
\begin{equation}
\Psi_{rnj}\left(  x\right)  =p^{\frac{-r}{2}}\chi_{p}\left(  p^{-1}j\left(
p^{r}x-n\right)  \right)  \Omega\left(  \left\vert p^{r}x-n\right\vert
_{p}\right)  , \label{eq4}%
\end{equation}
where $r\in\mathbb{Z}$, $j\in\left\{  1,\cdots,p-1\right\}  $, and $n$ runs
through a fixed set of representatives of $\mathbb{Q}_{p}/\mathbb{Z}_{p}$, is
an orthonormal basis of $L^{2}(\mathbb{Q}_{p})$, see, e.g., \cite[Theorem
3.29]{KKZuniga}, \cite[Theorem 9.4.2]{A-K-S}.\ By using this basis, it is
possible to construct an orthonormal basis for $L^{2}(\mathbb{Z}_{p})$:

\begin{lemma}
[{\cite[Proposition 1]{Zuniga-Eigen}}]The set of functions%
\[
\left\{  \Omega\left(  \left\vert x\right\vert _{p}\right)  \right\}
\bigcup\bigcup\limits_{j\in\left\{  1,\ldots,p-1\right\}  }\text{ }%
\bigcup\limits_{r\leq0}\text{ }\bigcup\limits_{\substack{np^{-r}\in
\mathbb{Z}_{p}\\n\in\mathbb{Q}_{p}/\mathbb{Z}_{p}}}\left\{  \Psi_{rnj}\left(
x\right)  \right\}
\]
is an orthonormal basis of $L^{2}\left(  \mathbb{Z}_{p}\right)  $.
\end{lemma}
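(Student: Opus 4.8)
The plan is to realize the proposed system as the image of the known wavelet basis of $L^{2}(\mathbb{Q}_{p})$ under the orthogonal projection onto the subspace $L^{2}(\mathbb{Z}_{p})$. Throughout I identify $L^{2}(\mathbb{Z}_{p})$ with the closed subspace of $L^{2}(\mathbb{Q}_{p})$ consisting of functions supported in the unit ball (extension by zero, as fixed in the text), and I let $P\colon L^{2}(\mathbb{Q}_{p})\to L^{2}(\mathbb{Z}_{p})$ denote the orthogonal projection, which is simply multiplication by $\Omega(|x|_{p})$. Since $\{\Psi_{rnj}\}$ is an orthonormal basis of $L^{2}(\mathbb{Q}_{p})$ by the cited theorems, the general principle is that $\{P\Psi_{rnj}\}$ spans a dense subspace of $L^{2}(\mathbb{Z}_{p})$: for any $f\in L^{2}(\mathbb{Z}_{p})$ one has $f=Pf=\sum_{r,n,j}\langle f,\Psi_{rnj}\rangle\,P\Psi_{rnj}$, so $f$ lies in the closed span of the projected system. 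My task thus splits into computing each $P\Psi_{rnj}$ and checking orthonormality of the resulting list.

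The orthonormality is the routine part. The selected $\Psi_{rnj}$ (those with support contained in $\mathbb{Z}_{p}$) form an orthonormal set since they are a subset of an orthonormal basis. The function $\Omega(|x|_{p})$ has norm one because $\int_{\mathbb{Z}_{p}}d^{N}x=1$. For the orthogonality of $\Omega(|x|_{p})$ to each retained wavelet I will use $\langle\Omega,\Psi_{rnj}\rangle=\int_{\mathbb{Z}_{p}}\Psi_{rnj}(x)\,dx=\int_{\mathbb{Q}_{p}}\Psi_{rnj}(x)\,dx$, and then the substitution $y=p^{r}x-n$ reduces this to $p^{r/2}\int_{\mathbb{Z}_{p}}\chi_{p}(p^{-1}jy)\,dy$, which vanishes because $\chi_{p}(p^{-1}j\,\cdot)$ is a nontrivial additive character on $\mathbb{Z}_{p}$ for $j\in\{1,\dots,p-1\}$ (the integral of a nontrivial character over a compact group is zero).

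The crux of the argument, and the step I expect to require the most care, is the geometric bookkeeping of supports. The support of $\Psi_{rnj}$ is the ball $B_{r}(p^{-r}n)=p^{-r}n+p^{-r}\mathbb{Z}_{p}$ of radius $p^{r}$. Because any two $p$-adic balls are either disjoint or nested, comparing $B_{r}(p^{-r}n)$ with $\mathbb{Z}_{p}=B_{0}(0)$ leaves exactly three cases. First, $B_{r}(p^{-r}n)\subseteq\mathbb{Z}_{p}$, which (by the nesting criterion: radius $\le 1$ and center in $\mathbb{Z}_{p}$) is equivalent to $r\le 0$ together with $np^{-r}\in\mathbb{Z}_{p}$; here $P\Psi_{rnj}=\Psi_{rnj}$, and these are precisely the wavelets listed in the statement. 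Second, $B_{r}(p^{-r}n)\cap\mathbb{Z}_{p}=\varnothing$, where $P\Psi_{rnj}=0$. Third, $B_{r}(p^{-r}n)\supsetneq\mathbb{Z}_{p}$, which by the nesting criterion forces $n=0$ and $r\ge 1$; in this case $\Omega(|p^{r}x|_{p})=1$ and $\chi_{p}(jp^{r-1}x)=1$ for every $x\in\mathbb{Z}_{p}$ (since $jp^{r-1}x\in\mathbb{Z}_{p}$ when $r\ge 1$), so $\Psi_{r0j}$ restricts on $\mathbb{Z}_{p}$ to the constant $p^{-r/2}$, giving $P\Psi_{r0j}=p^{-r/2}\Omega(|x|_{p})$. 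The delicate points are verifying that no wavelet support can \emph{partially} overlap $\mathbb{Z}_{p}$ (guaranteed by ultrametricity) and confirming that the overflowing wavelets collapse to constant multiples of a single function $\Omega(|x|_{p})$ rather than contributing independent directions.

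Assembling these computations, the projected family $\{P\Psi_{rnj}\}$ consists of the retained wavelets, the vector $0$, and scalar multiples of $\Omega(|x|_{p})$; hence its closed span coincides with the closed span of the proposed set. By the density principle of the first paragraph this span is all of $L^{2}(\mathbb{Z}_{p})$. Combined with the orthonormality established in the second paragraph, this shows the proposed set is a complete orthonormal system, i.e.\ an orthonormal basis of $L^{2}(\mathbb{Z}_{p})$, completing the proof.
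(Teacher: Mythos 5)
Your proof is correct. The paper itself gives no argument for this lemma (it is imported by citation from \cite{Zuniga-Eigen}), but your projection argument is the standard one and all three support cases check out: the nesting criterion correctly isolates the retained wavelets as those with $r\le 0$ and $np^{-r}\in\mathbb{Z}_p$, the overflowing wavelets $\Psi_{r0j}$ with $r\ge 1$ do collapse to the constants $p^{-r/2}$ on $\mathbb{Z}_p$, and the orthogonality $\langle \Omega,\Psi_{rnj}\rangle=0$ follows as you say from the vanishing of a nontrivial character integral over $\mathbb{Z}_p$.
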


We set $\boldsymbol{r}=\left(  r_{1},\ldots,r_{N}\right)  \in\left(
\mathbb{Z}_{\leq0}\right)  ^{N}$, with $\mathbb{Z}_{\leq0}:=\left\{
t\in\mathbb{Z};t\leq0\right\}  $, $\boldsymbol{j}=\left(  j_{1},\ldots
,j_{N}\right)  \in\left\{  1,\ldots,p-1\right\}  ^{N}$,
\[
\boldsymbol{n}=\left(  n_{1},\ldots,n_{N}\right)  \in\left(  p^{r_{1}%
}\mathbb{Z}_{p}\cap\mathbb{Q}_{p}/\mathbb{Z}_{p}\right)  \times\cdots
\times\left(  p^{r_{N}}\mathbb{Z}_{p}\cap\mathbb{Q}_{p}/\mathbb{Z}_{p}\right)
,
\]
$I\subseteq\left\{  1,\ldots,N\right\}  $, and
\begin{equation}
\Psi_{\boldsymbol{rnj}}^{I}\left(  x\right)  :=%
{\textstyle\prod\limits_{i\in I}}
\Psi_{r_{i}n_{i}j_{i}}\left(  x_{i}\right)
{\textstyle\prod\limits_{i\notin I}}
\Omega\left(  \left\vert x_{i}\right\vert _{p}\right)  \text{, for }x=\left(
x_{1},\ldots,x_{N}\right)  \in\mathbb{Z}_{p}^{N}. \label{Eigen-fun-1}%
\end{equation}
By convention $%
{\textstyle\prod\nolimits_{i\in\varnothing}}
\cdot=1$, then
\begin{equation}
\Psi_{\boldsymbol{rnj}}^{\left\{  1,\ldots,N\right\}  }\left(  x\right)  =%
{\textstyle\prod\limits_{i=1}^{N}}
\Psi_{r_{i}n_{i}j_{i}}\left(  x_{i}\right)  \text{, }\Psi^{\varnothing}\left(
x\right)  =%
{\textstyle\prod\limits_{i=1}^{N}}
\Omega\left(  \left\vert x_{i}\right\vert _{p}\right)  =\Omega\left(
\left\Vert x\right\Vert _{p}\right)  . \label{Eigen-fun-2}%
\end{equation}

By a well-known result, see, e.g., \cite[Chap. II, Proposition 2, Theorem
II.10-(a)]{Reed-Simon I}, the set $\left\{  \Psi_{\boldsymbol{rnj}}%
^{I}\right\}  $ is an orthonormal basis for $L^{2}(\mathbb{Z}_{p}^{N})$.

\begin{lemma}
\label{Lemma4}With the above notation, the following assertions hold true:

\begin{enumerate}
\item[(i)] if $I\neq\varnothing$, then $\square_{K}\Psi_{\boldsymbol{rnj}}%
^{I}\left(  x\right)  =\widehat{K}(p^{\lambda})\Psi_{\boldsymbol{rnj}}\left(
x\right)  $, where $\lambda=\max_{i\in I}\left\{  -r_{i}+1\right\}  $. The
multiplicity $mult(\lambda)$ of this eigenvalue satisfies $mult(\lambda
)\leq\left(  2\lambda p^{\lambda+1}\right)  ^{N}$.

\item[(ii)] if $I=\varnothing$, then $\square_{K}\Omega\left(  \left\Vert
x\right\Vert _{p}\right)  =\widehat{K}(\left[  0\right]  _{p})\Omega\left(
\left\Vert x\right\Vert _{p}\right)  $. The multiplicity of this eigenvalue is
$1$.
\end{enumerate}
\end{lemma}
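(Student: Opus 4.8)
The plan is to diagonalize $\square_{K}$ through its Fourier-multiplier representation $\left(\square_{K}f\right)(x)=\mathcal{F}_{\xi\rightarrow x}^{-1}\bigl(\widehat{K}(\xi)\,\mathcal{F}_{x\rightarrow\xi}f\bigr)$, exploiting the fact that each basis function has a Fourier transform supported on a single sphere, on which the radial multiplier $\widehat{K}(\xi)=\widehat{K}([\xi]_{p})$ is constant. Once this localization is established, the eigenvalue identities are immediate and the multiplicities reduce to counting basis elements with a prescribed parameter.

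First I would compute the Fourier transform of a one-dimensional factor $\Psi_{r_{i}n_{i}j_{i}}$ with $i\in I$ (so $r_{i}\leq 0$). The substitution $y=p^{r_{i}}x_{i}-n_{i}$ in the defining integral reduces it, up to a nonzero constant and the phase $\chi_{p}(\xi_{i}p^{-r_{i}}n_{i})$, to $\int_{\mathbb{Z}_{p}}\chi_{p}\bigl((\xi_{i}p^{-r_{i}}+p^{-1}j_{i})y\bigr)\,dy=\Omega\bigl(|\xi_{i}p^{-r_{i}}+p^{-1}j_{i}|_{p}\bigr)$. Since $j_{i}$ is a unit, nonvanishing forces $|\xi_{i}|_{p}=p^{-r_{i}+1}$, so $\widehat{\Psi_{r_{i}n_{i}j_{i}}}$ is supported on the sphere $\{|\xi_{i}|_{p}=p^{-r_{i}+1}\}$. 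For $i\notin I$ the factor $\Omega(|x_{i}|_{p})$ is self-dual, with transform $\Omega(|\xi_{i}|_{p})$ supported on $\{|\xi_{i}|_{p}\leq 1\}$. Because $\widehat{\Psi_{\boldsymbol{rnj}}^{I}}$ factors as the product of these one-dimensional transforms, its support is the corresponding product set, on which $\|\xi\|_{p}=\max_{i}|\xi_{i}|_{p}$.

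For $I\neq\varnothing$, each $i\in I$ contributes $|\xi_{i}|_{p}=p^{-r_{i}+1}\geq p>1$, so the maximum is attained among the $I$-indices and equals $p^{\lambda}$ with $\lambda=\max_{i\in I}\{-r_{i}+1\}$; hence $[\xi]_{p}=p^{\lambda}$ and $\widehat{K}(\xi)\equiv\widehat{K}(p^{\lambda})$ on the support. The multiplier representation then gives $\square_{K}\Psi_{\boldsymbol{rnj}}^{I}=\widehat{K}(p^{\lambda})\,\Psi_{\boldsymbol{rnj}}^{I}$, proving part (i). For $I=\varnothing$ the function is $\Omega(\|x\|_{p})$, whose transform $\Omega(\|\xi\|_{p})$ is supported on $\{\|\xi\|_{p}\leq 1\}$, where $[\xi]_{p}=1=[0]_{p}$, yielding the eigenvalue $\widehat{K}([0]_{p})$; the multiplicity is one because $\Omega(\|x\|_{p})$ is the unique basis element with empty index set (equivalently, the only one whose Fourier transform lives in the unit ball).

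It remains to bound $mult(\lambda)$, the number of basis wavelets sharing the parameter $\lambda$. The constraint $\max_{i\in I}\{-r_{i}+1\}=\lambda$ forces $r_{i}\in\{1-\lambda,\ldots,0\}$ for every $i\in I$. I would enumerate coordinate by coordinate: an index $i\notin I$ contributes a single option, while an index $i\in I$ offers at most $\lambda$ choices of $r_{i}$, at most $p^{-r_{i}}\leq p^{\lambda}$ representatives $n_{i}$, and $p-1<p$ values of $j_{i}$, hence fewer than $\lambda p^{\lambda+1}$ options. Thus each of the $N$ coordinates admits at most $1+\lambda p^{\lambda+1}\leq 2\lambda p^{\lambda+1}$ possibilities, and the product over all coordinates gives $mult(\lambda)\leq(2\lambda p^{\lambda+1})^{N}$. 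The only genuinely computational step is the one-dimensional Fourier-support calculation; once it is secured, the eigenvalue identities follow at once from the radiality of $\widehat{K}$, and the multiplicity estimate is a routine enumeration.
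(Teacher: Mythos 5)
Your proposal is correct and follows essentially the same route as the paper: compute the Fourier transform of each wavelet factor, observe that the transform of $\Psi_{\boldsymbol{rnj}}^{I}$ is supported on a set where $[\xi]_{p}=p^{\lambda}$ so that the radial multiplier $\widehat{K}$ is constant there, deduce the eigenvalue identity, and then count parameters. The only cosmetic differences are that you derive the one-dimensional Fourier support by substitution rather than quoting the formula, and you organize the multiplicity count coordinate by coordinate instead of summing over subsets $I$; both yield the same bound $\left(2\lambda p^{\lambda+1}\right)^{N}$.
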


\begin{proof}
(i) Take $I\neq\varnothing$. By using that%
\[
\widehat{\Psi}_{rnj}\left(  \xi\right)  =p^{\frac{r}{2}}\chi_{p}\left(
p^{-r}n\xi\right)  \Omega\left(  \left\vert p^{-r}\xi+p^{-1}j\right\vert
_{p}\right)  ,
\]
it follows that
\[
\widehat{\Psi}_{\boldsymbol{rnj}}^{I}\left(  \xi\right)  =p^{\frac{1}{2}%
\sum_{i\in I}r_{i}}%
{\textstyle\prod\limits_{i\in I}}
\chi_{p}\left(  p^{-r_{i}}n_{i}\xi_{i}\right)  \Omega\left(  \left\vert
p^{-r_{i}}\xi_{i}+p^{-1}j_{i}\right\vert _{p}\right)
{\textstyle\prod\limits_{i\notin I}}
\Omega\left(  \left\vert \xi_{i}\right\vert _{p}\right)  ,
\]
where $\xi=\left(  \xi_{1},\ldots,\xi_{N}\right)  \in\mathbb{Q}_{p}^{N}$ and
\[
\xi_{i}\in-p^{r_{i}-1}j_{i}+p^{r_{i}}\mathbb{Z}_{p}\text{, for }i\in I\text{
and }\xi_{i}\in\mathbb{Z}_{p}\text{, for }i\notin I.
\]
Then
\begin{align*}
\left[  \xi\right]  _{p}  &  =\max\left\{  1,\left\Vert \xi\right\Vert
_{p}\right\}  =\max\left\{  1,\max_{1\leq i\leq N}\left\vert \xi
_{i}\right\vert _{p}\right\}  =\max\left\{  1,\max_{i\in I}\left\vert \xi
_{i}\right\vert _{p}\right\} \\
&  =\max\left\{  1,\max_{i\in I}\left\vert -p^{r_{i}-1}j\right\vert
_{p}\right\}  =\max_{i\in I}p^{-r_{i}+1},
\end{align*}
and since $\square_{K}\Psi_{\boldsymbol{rnj}}^{I}\left(  x\right)
=\mathcal{F}_{\xi\rightarrow x}^{-1}\left(  \widehat{K}(\left[  \xi\right]
_{p})\widehat{\Psi}_{\boldsymbol{rnj}}\left(  \xi\right)  \right)  $, we
conclude that
\[
\square_{K}\Psi_{\boldsymbol{rnj}}^{I}\left(  x\right)  =\widehat{K}%
(\max_{i\in I}p^{-r_{i}+1})\Psi_{\boldsymbol{rnj}}^{I}\left(  x\right)  .
\]
We now estimate the multiplicity of the eigenvalue $\widehat{K}(\max_{i\in
I}p^{-r_{i}+1})$. We fix $p^{\lambda}$, $\lambda\in\mathbb{N}$, and\ count the
amount of \ the $I\neq\varnothing$, $\boldsymbol{r}$, $\boldsymbol{n}$,
$\boldsymbol{j}$ \ such that
\begin{equation}
\left\langle \square_{K}\Psi_{\boldsymbol{rnj}}^{I},\Psi_{\boldsymbol{rnj}%
}^{I}\right\rangle =\widehat{K}(p^{\lambda})\Psi_{\boldsymbol{rnj}}%
^{I}\text{.} \label{condition}%
\end{equation}
By using that $\lambda=$ $\max_{i\in I}\left\{  -r_{i}+1\right\}  $, we have
$0\leq-r_{i}\leq\lambda-1$, and thus the amount of the $r_{i}$, $i\in I$ is
bounded by $\lambda^{\#I}$. The amount of the $j_{i}$, $i\in I$ is $p^{\#I}$;
the amount of the $n_{i}$, $i\in I$ is bounded by
\[
p^{\sum_{i\in I}-r_{i}}\leq p^{\left(  \#I\right)  \lambda}\text{.}%
\]
Therefore, the amount of functions $\Psi_{\boldsymbol{rnj}}^{I}$ satisfying
(\ref{condition}) is bounded by
\[%
{\displaystyle\sum\limits_{I\neq\varnothing}}
\lambda^{\#I}p^{\#I}p^{\left(  \#I\right)  \lambda}\leq2^{N}\left(  \lambda
p^{\lambda+1}\right)  ^{N}=\left(  2\lambda p^{\lambda+1}\right)  ^{N}.
\]
(ii) The second part follows from the fact that $\square_{K}\Omega\left(
\left\Vert x\right\Vert _{p}\right)  =K(x)\ast\Omega\left(  \left\Vert
x\right\Vert _{p}\right)  =\widehat{K}(\left[  0\right]  _{p})$ for
$\left\Vert x\right\Vert _{p}\leq1$.
\end{proof}

\begin{notation}
From now on, we consider $\left(  L_{\mathbb{R}}^{2}\left(  \mathbb{Z}_{p}%
^{N}\right)  ,\mathcal{B},d^{N}x\right)  $ as a measurable space, where
$\mathcal{B}$\ is the Borel $\sigma$-algebra of $L_{\mathbb{R}}^{2}\left(
\mathbb{Z}_{p}^{N}\right)  $.
\end{notation}

\begin{theorem}
\label{Theorem1}Assume that
\begin{equation}
\widehat{K}(1)+%
{\displaystyle\sum\limits_{\lambda=1}^{\infty}}
\left(  2\lambda p^{\lambda+1}\right)  ^{N}\widehat{K}(p^{\lambda})<\infty.
\label{hypo1}%
\end{equation}
Then there exits a unique Gaussian probability measure $\mathbb{P}_{K}$ on
$\left(  L_{\mathbb{R}}^{2}\left(  \mathbb{Z}_{p}^{N}\right)  ,\mathcal{B}%
\right)  $ with mean zero, covariance $\square_{K}$, and Fourier transform
\[%
{\displaystyle\int\limits_{L_{\mathbb{R}}^{2}\left(  \mathbb{Z}_{p}%
^{N}\right)  }}
e^{\sqrt{-1}\left\langle f,w\right\rangle }d\mathbb{P}_{K}\left(  w\right)
=e^{\frac{-1}{2}\left\langle \square_{K}f,f\right\rangle }.
\]

\end{theorem}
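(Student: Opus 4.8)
The plan is to realize $\mathbb{P}_K$ as a centered Gaussian measure on the separable real Hilbert space $L_{\mathbb{R}}^{2}(\mathbb{Z}_p^N)$ whose covariance operator is $\square_K$, and to invoke the classical existence-and-uniqueness theorem for Gaussian measures on a separable Hilbert space. Recall that a symmetric nonnegative operator $Q$ on such a space is the covariance of a (necessarily unique) Gaussian measure with characteristic functional $f\mapsto \exp(-\tfrac12\langle Qf,f\rangle)$ \emph{if and only if} $Q$ is of trace class; uniqueness holds because the characteristic functional determines a Borel probability measure on a Hilbert space. By Lemma \ref{Lemma1} we already know that $\square_K$ is symmetric and positive, so the entire content of the theorem reduces to verifying that $\square_K$ is trace class, after which the abstract result applies verbatim.

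To compute the trace I would use the orthonormal eigenbasis furnished by the $p$-adic wavelets. By Lemma \ref{Lemma4}, the basis $\{\Psi_{\boldsymbol{rnj}}^{I}\}$ of $L^{2}(\mathbb{Z}_p^N)$ diagonalizes $\square_K$: the function $\Psi^{\varnothing}=\Omega(\Vert\cdot\Vert_p)$ is an eigenfunction with eigenvalue $\widehat{K}([0]_p)=\widehat{K}(1)$ of multiplicity one, and for $I\neq\varnothing$ the eigenvalue is $\widehat{K}(p^{\lambda})$ with $\lambda=\max_{i\in I}\{-r_i+1\}\geq 1$. Since all eigenvalues are positive by (H3), the trace equals the sum over the whole wavelet basis of $\langle\square_K\Psi_{\boldsymbol{rnj}}^{I},\Psi_{\boldsymbol{rnj}}^{I}\rangle$, i.e. the sum of the eigenvalues counted with multiplicity,
\begin{equation*}
\operatorname{Tr}(\square_K)=\widehat{K}(1)+\sum_{\lambda=1}^{\infty} mult(\lambda)\,\widehat{K}(p^{\lambda}),
\end{equation*}
where $mult(\lambda)$ is the multiplicity of the eigenvalue $\widehat{K}(p^{\lambda})$.

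Inserting the multiplicity bound $mult(\lambda)\leq(2\lambda p^{\lambda+1})^{N}$ from Lemma \ref{Lemma4} yields
\begin{equation*}
\operatorname{Tr}(\square_K)\leq \widehat{K}(1)+\sum_{\lambda=1}^{\infty}\bigl(2\lambda p^{\lambda+1}\bigr)^{N}\widehat{K}(p^{\lambda}),
\end{equation*}
which is finite precisely by hypothesis (\ref{hypo1}). Hence $\square_K$ is a symmetric, nonnegative, trace-class operator, and the cited Hilbert-space machinery produces a unique centered Gaussian measure $\mathbb{P}_K$ on $(L_{\mathbb{R}}^{2}(\mathbb{Z}_p^N),\mathcal{B})$ with covariance $\square_K$ and the asserted Fourier transform.

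The only genuine obstacle I anticipate is the bookkeeping needed to identify $\operatorname{Tr}(\square_K)$ with the series in (\ref{hypo1}): one must keep track of the fact that the single eigenvalue $\widehat{K}(p^{\lambda})$ is shared by many distinct basis functions $\Psi_{\boldsymbol{rnj}}^{I}$ (all those with the same $\lambda$), and that Lemma \ref{Lemma4} counts exactly these, so that the diagonal sum over the complete wavelet basis reproduces the left-hand side of (\ref{hypo1}). Everything else is a direct appeal to the standard theory of Gaussian measures on separable Hilbert spaces.
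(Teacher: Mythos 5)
Your proposal is correct and follows essentially the same route as the paper: both reduce the theorem to showing $\square_{K}$ is trace class by summing $\left\langle \square_{K}\Psi_{\boldsymbol{rnj}}^{I},\Psi_{\boldsymbol{rnj}}^{I}\right\rangle$ over the wavelet basis, bounding the result by $\widehat{K}(1)+\sum_{\lambda\geq1}(2\lambda p^{\lambda+1})^{N}\widehat{K}(p^{\lambda})$ via the multiplicity estimate of Lemma \ref{Lemma4}, and then invoking the standard existence-and-uniqueness theorem for Gaussian measures on a separable Hilbert space. The bookkeeping point you flag about grouping basis functions by shared eigenvalue is exactly what the paper's displayed inequality handles.
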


\begin{proof}
By using Lemmas \ref{Lemma1}-\ref{Lemma4}, and under the condition
(\ref{hypo1}), $\square_{K}$ is a trace class operator. Indeed,%
\begin{align*}
Tr\left(  \square_{K}\right)   &  =\left\langle \square_{K}\Omega\left(
\left\Vert x\right\Vert _{p}\right)  ,\Omega\left(  \left\Vert x\right\Vert
_{p}\right)  \right\rangle +%
{\displaystyle\sum\limits_{\boldsymbol{r},\boldsymbol{n},\boldsymbol{j}%
,\boldsymbol{I}\neq\varnothing}}
\left\langle \square_{K}\Psi_{\boldsymbol{rnj}}^{I},\Psi_{\boldsymbol{rnj}%
}^{I}\right\rangle \\
&  \leq\widehat{K}(1)+%
{\displaystyle\sum\limits_{\lambda=1}^{\infty}}
\left(  2\lambda p^{\lambda+1}\right)  ^{N}\widehat{K}(p^{\lambda})<\infty
\end{align*}
Now the announced result follows from a well-know result about Gaussian
measures on Hilbert spaces, see \cite[Theorem 1.12]{Da prato}.
\end{proof}

\section{\label{Section_4}Continuous statistical field theories and deep
Boltzmann machines}

\subsection{Fluctuating fields and energy functionals}

We work with a $2$-component\ fluctuating field $\left\{  \boldsymbol{v}%
,\boldsymbol{h}\right\}  $ in the $N$-dimensional $p$-adic unit ball. The
realizations of the field are real-valued functions defined in $\mathbb{Z}%
_{p}^{N}$. The function $\boldsymbol{v}:\mathbb{Z}_{p}^{N}\rightarrow
\mathbb{R}$ is called the \textit{visible field} and the function
$\boldsymbol{h}:\mathbb{Z}_{p}^{N}\rightarrow\mathbb{R}$ is called the
\textit{hidden field}. These fields are used to model signals, or more
generally data. The discrete data take only a finite number of values. A basic
example is a black-and-white image. The continuous data take values in a
finite interval, for instance, the electrical voltages produced by a living organism.

This is the motivation to assume that the fields $\boldsymbol{v}%
,\boldsymbol{h}$ take values in a bounded subset of the real numbers. The
data/signals take values in a bounded set, which means $\left\Vert
\boldsymbol{v}\right\Vert _{\infty}\leq M$, $\left\Vert \boldsymbol{h}%
\right\Vert _{\infty}\leq M$, where $M$ is a fixed positive constant. Which in
turns implies that $\left\Vert \boldsymbol{v}\right\Vert _{2}\leq M$,
$\left\Vert \boldsymbol{h}\right\Vert _{2}\leq M$. For this reason, there are
two different forms of choosing the fields:%
\[
\boldsymbol{v},\boldsymbol{h}\in\boldsymbol{B}_{M}^{\left(  \infty\right)
}:=\left\{  f\in L_{\mathbb{R}}^{\infty}(\mathbb{Z}_{p}^{N});\left\Vert
f\right\Vert _{\infty}\leq M\right\}  ,
\]
or%
\[
\boldsymbol{v},\boldsymbol{h}\in\boldsymbol{B}_{M}^{\left(  2\right)
}:=\left\{  f\in L_{\mathbb{R}}^{2}(\mathbb{Z}_{p}^{N});\left\Vert
f\right\Vert _{2}\leq M\right\}  .
\]
notice that $\boldsymbol{B}_{M}^{\left(  \infty\right)  }\subset
\boldsymbol{B}_{M}^{\left(  2\right)  }$, and that $\frac{1}{\left\Vert
x\right\Vert _{p}^{\alpha}}$, for $\alpha\in\left(  0,\frac{N}{2}\right)  $,
is an element from $\boldsymbol{B}_{M}^{\left(  2\right)  }$, but $\sup
_{x\in\mathbb{Z}_{p}^{N}}\frac{1}{\left\Vert x\right\Vert _{p}^{\alpha}%
}=\infty$.

If the fields are from $\boldsymbol{B}_{M}^{\left(  2\right)  }$, the
parameters of our SFTs satisfy
\begin{equation}
a\left(  x\right)  ,b(x),c(x),d(x)\in L_{\mathbb{R}}^{\infty}(\mathbb{Z}%
_{p}^{N}),w\left(  x,y\right)  \in L_{\mathbb{R}}^{\infty}(\mathbb{Z}_{p}%
^{N}\times\mathbb{Z}_{p}^{N}), \label{condition parameters}%
\end{equation}
and $M>0$. If the fields are from \ $\boldsymbol{B}_{M}^{\left(  2\right)  }$,
we impose an additional condition:
\begin{equation}
c(x),d(x)\geq0. \label{additional condition}%
\end{equation}
The field $\left\{  \boldsymbol{v},\boldsymbol{h}\right\}  $ performs
fluctuations, and the size of these fluctuations is controlled by an energy
functional consisting of two terms:%
\begin{equation}
E(\boldsymbol{v},\boldsymbol{h};\boldsymbol{\theta}):=E(\boldsymbol{v}%
,\boldsymbol{h})=E^{\text{free}}\left(  \boldsymbol{v},\boldsymbol{h}\right)
+E^{\text{int}}\left(  \boldsymbol{v},\boldsymbol{h}\right)  ,
\label{energy_functional}%
\end{equation}
where $\boldsymbol{\theta}=\left(  w,a,b,c,d\right)  $. The first term%
\[
E^{\text{free}}\left(  \boldsymbol{v},\boldsymbol{h}\right)  =-\left\langle
a,\boldsymbol{v}\right\rangle -\left\langle b,\boldsymbol{h}\right\rangle
\]
is an analogue of the \textit{free-field energy}. The second term%
\begin{gather}
E^{\text{int}}\left(  \boldsymbol{v},\boldsymbol{h}\right)  =-%
{\displaystyle\iint\limits_{\mathbb{Z}_{p}^{N}\times\mathbb{Z}_{p}^{N}}}
\boldsymbol{h}\left(  x\right)  w\left(  x,y\right)  \boldsymbol{v}\left(
y\right)  d^{N}yd^{N}x+\left\langle c,\boldsymbol{v}^{4}\right\rangle
+\left\langle d,\boldsymbol{h}^{4}\right\rangle \nonumber\\
=-\left\langle \boldsymbol{h},\boldsymbol{Wv}\right\rangle +\left\langle
c,\boldsymbol{v}^{4}\right\rangle +\left\langle d,\boldsymbol{h}%
^{4}\right\rangle , \label{Interaction-energy}%
\end{gather}
is an analogue of the \textit{interaction energy}. Here $\boldsymbol{Wv}%
\left(  x\right)  :=\int_{\mathbb{Z}_{p}^{N}}w\left(  x,y\right)
\boldsymbol{v}\left(  y\right)  d^{N}y$.

\begin{lemma}
\label{Lemma5}Let $K_{1}$, $K_{2}$ be two kernels satisfying (\ref{hypo1}). We
denote by $\mathbb{P}_{K_{1}}\otimes\mathbb{P}_{K_{2}}$ the product
probability measure defined on the product $\sigma$-algebra $\mathcal{B}%
\times\mathcal{B}$. If the parameters of the theory satisfy
(\ref{condition parameters}), then $\left\vert -E(\boldsymbol{v}%
,\boldsymbol{h})\right\vert \leq C_{0}\left(  a,b,c,d,M\right)  $ and
\begin{equation}
\mathcal{Z}^{\left(  \infty\right)  }:=\mathcal{Z}^{\left(  \infty\right)
}\left(  \boldsymbol{\theta}\right)  =%
{\displaystyle\iint\limits_{\boldsymbol{B}_{M}^{\left(  \infty\right)  }%
\times\boldsymbol{B}_{M}^{\left(  \infty\right)  }}}
\exp\left(  -E(\boldsymbol{v},\boldsymbol{h})\right)  d\mathbb{P}_{K_{1}%
}\left(  \boldsymbol{v}\right)  \otimes d\mathbb{P}_{K_{2}}\left(
\boldsymbol{h}\right)  <\infty. \label{case1}%
\end{equation}
If the parameters of the theory satisfy (\ref{condition parameters}) and
(\ref{additional condition}), then $\left\vert -E(\boldsymbol{v}%
,\boldsymbol{h})\right\vert \leq C_{1}\left(  a,b,M\right)  $ and%
\begin{equation}
\mathcal{Z}^{\left(  2\right)  }:=\mathcal{Z}^{\left(  2\right)  }\left(
\boldsymbol{\theta}\right)  =%
{\displaystyle\iint\limits_{\boldsymbol{B}_{M}^{\left(  2\right)  }%
\times\boldsymbol{B}_{M}^{\left(  2\right)  }}}
\exp\left(  -E(\boldsymbol{v},\boldsymbol{h})\right)  d\mathbb{P}_{K_{1}%
}\left(  \boldsymbol{v}\right)  \otimes d\mathbb{P}_{K_{2}}\left(
\boldsymbol{h}\right)  <\infty. \label{case2}%
\end{equation}

\end{lemma}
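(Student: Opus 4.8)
The plan is to estimate $-E(\boldsymbol{v},\boldsymbol{h})$ term by term and then to use that $\mathbb{P}_{K_{1}}\otimes\mathbb{P}_{K_{2}}$ is a probability measure, so that a uniform upper bound on the integrand forces the partition function to be finite. Writing
\[
-E(\boldsymbol{v},\boldsymbol{h})=\left\langle a,\boldsymbol{v}\right\rangle +\left\langle b,\boldsymbol{h}\right\rangle +\left\langle \boldsymbol{h},\boldsymbol{Wv}\right\rangle -\left\langle c,\boldsymbol{v}^{4}\right\rangle -\left\langle d,\boldsymbol{h}^{4}\right\rangle ,
\]
I would bound the five summands separately, the decisive difference between the two cases being which norm controls the fields and whether the sign condition (\ref{additional condition}) is in force. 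Throughout I would use repeatedly that $\int_{\mathbb{Z}_{p}^{N}}d^{N}x=1$, hence $\left\Vert f\right\Vert _{1}\leq\left\Vert f\right\Vert _{\rho}$ for every $\rho\geq 1$.

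In the first case, $\boldsymbol{v},\boldsymbol{h}\in\boldsymbol{B}_{M}^{\left(\infty\right)}$, so all five terms are finite and bounded. By H\"{o}lder's inequality and the unit total mass one gets $\left\vert\left\langle a,\boldsymbol{v}\right\rangle\right\vert\leq\left\Vert a\right\Vert_{\infty}M$, likewise for $b$, then $\left\vert\left\langle \boldsymbol{h},\boldsymbol{Wv}\right\rangle\right\vert\leq\left\Vert w\right\Vert_{\infty}\left\Vert \boldsymbol{h}\right\Vert_{1}\left\Vert \boldsymbol{v}\right\Vert_{1}\leq\left\Vert w\right\Vert_{\infty}M^{2}$, and $\left\vert\left\langle c,\boldsymbol{v}^{4}\right\rangle\right\vert\leq\left\Vert c\right\Vert_{\infty}\left\Vert \boldsymbol{v}\right\Vert_{\infty}^{4}\leq\left\Vert c\right\Vert_{\infty}M^{4}$, similarly for $d$. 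Summing yields the genuine two-sided bound $\left\vert-E\right\vert\leq C_{0}(a,b,c,d,M)$, and since $\exp(-E)\leq e^{C_{0}}$ on a set of measure at most one, (\ref{case1}) follows with $\mathcal{Z}^{\left(\infty\right)}\leq e^{C_{0}}$.

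In the second case, $\boldsymbol{v},\boldsymbol{h}\in\boldsymbol{B}_{M}^{\left(2\right)}$, the linear terms are still handled by Cauchy-Schwarz, $\left\vert\left\langle a,\boldsymbol{v}\right\rangle\right\vert\leq\left\Vert a\right\Vert_{2}\left\Vert \boldsymbol{v}\right\Vert_{2}\leq\left\Vert a\right\Vert_{\infty}M$, and for the interaction term I would first record the operator estimate $\left\Vert \boldsymbol{Wv}\right\Vert_{2}\leq\left\Vert \boldsymbol{Wv}\right\Vert_{\infty}\leq\left\Vert w\right\Vert_{\infty}\left\Vert \boldsymbol{v}\right\Vert_{1}\leq\left\Vert w\right\Vert_{\infty}\left\Vert \boldsymbol{v}\right\Vert_{2}$, whence $\left\vert\left\langle \boldsymbol{h},\boldsymbol{Wv}\right\rangle\right\vert\leq\left\Vert w\right\Vert_{\infty}M^{2}$. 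The quartic terms are the obstruction: a field in $\boldsymbol{B}_{M}^{\left(2\right)}$ need not lie in $L^{4}$ --- the example $\left\Vert x\right\Vert_{p}^{-\alpha}$ with $\alpha\in(N/4,N/2)$ already shows $\left\langle c,\boldsymbol{v}^{4}\right\rangle$ may equal $+\infty$ --- so no lower bound on $-E$ exists and the first case's argument cannot be repeated verbatim.

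The point where hypothesis (\ref{additional condition}) enters is exactly here, and I regard it as the heart of the matter. Since $c,d\geq 0$ and $\boldsymbol{v}^{4},\boldsymbol{h}^{4}\geq 0$ pointwise, the quantities $\left\langle c,\boldsymbol{v}^{4}\right\rangle,\left\langle d,\boldsymbol{h}^{4}\right\rangle$ are nonnegative and appear in $-E$ with a minus sign, so they only lower the integrand. Discarding them gives the one-sided bound
\[
-E(\boldsymbol{v},\boldsymbol{h})\leq\left\Vert a\right\Vert_{\infty}M+\left\Vert b\right\Vert_{\infty}M+\left\Vert w\right\Vert_{\infty}M^{2}=:C_{1}(a,b,M),
\]
which is all that integrability requires: $\exp(-E)\leq e^{C_{1}}$ yields $\mathcal{Z}^{\left(2\right)}\leq e^{C_{1}}<\infty$, proving (\ref{case2}). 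I would emphasise that in this case the effective content is the upper bound $-E\leq C_{1}$ rather than a two-sided control of $\left\vert-E\right\vert$; without the sign condition the quartic terms could be negative and unbounded, sending $-E$ to $+\infty$ and destroying integrability, which is precisely why (\ref{additional condition}) is imposed in the $L^{2}$ setting.
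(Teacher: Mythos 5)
Your proof is correct and follows essentially the same route as the paper's: termwise estimates using $\left\Vert f\right\Vert _{1}\leq\left\Vert f\right\Vert _{2}\leq\left\Vert f\right\Vert _{\infty}$ together with $\int_{\mathbb{Z}_{p}^{N}}d^{N}x=1$ in the first case, and in the second case discarding the nonnegative quartic terms (via $\exp\left(  -\left\langle c,\boldsymbol{v}^{4}\right\rangle -\left\langle d,\boldsymbol{h}^{4}\right\rangle \right)  \leq1$) to reduce to the spin-glass bound. Your observation that in the $L^{2}$ setting only the one-sided bound $-E\leq C_{1}$ is available --- since $\left\langle c,\boldsymbol{v}^{4}\right\rangle $ may be $+\infty$ for a field such as $\left\Vert x\right\Vert _{p}^{-\alpha}$ with $\alpha\in\left(  N/4,N/2\right)  $ --- is a correct refinement of the stated two-sided inequality, and it is exactly the bound the integrability argument actually uses.
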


\begin{proof}
If the fields are from $\boldsymbol{B}_{M}^{\left(  \infty\right)  }$, the
result follows from the following estimate:%
\[
\exp\left\vert -E(\boldsymbol{v},\boldsymbol{h})\right\vert \leq\exp\left\{
\left\Vert a\right\Vert _{\infty}M+\left\Vert b\right\Vert _{\infty
}M+\left\Vert w\right\Vert _{\infty}M^{2}+\left\Vert c\right\Vert _{\infty
}M^{4}+\left\Vert d\right\Vert _{\infty}M^{4}\right\}  .
\]

If the fields are from $\boldsymbol{B}_{M}^{\left(  2\right)  }$, the result
follows from the following estimate:%
\begin{equation}
\exp\left\vert -E(\boldsymbol{v},\boldsymbol{h})\right\vert \leq\exp\left\{
\left\Vert a\right\Vert _{\infty}M+\left\Vert b\right\Vert _{\infty
}M+\left\Vert w\right\Vert _{\infty}M^{2}\right\}  . \label{Bound}%
\end{equation}
Indeed, the case (\ref{case2}) follows from (\ref{Bound}), by using that
\[
\exp\left\{  -%
{\displaystyle\int\limits_{\mathbb{Z}_{p}^{N}}}
c(x)\boldsymbol{v}^{4}\left(  x\right)  d^{N}x-%
{\displaystyle\int\limits_{\mathbb{Z}_{p}^{N}}}
d(x)\boldsymbol{h}^{4}\left(  x\right)  d^{N}x\right\}  \leq1.
\]
To establish (\ref{Bound}), we proceed as follows. By using the
Cauchy--Schwarz inequality, and the fact that$\int_{\mathbb{Z}_{p}^{N}}%
d^{N}x=1$,
\[
\left\vert \text{ }%
{\displaystyle\int\limits_{\mathbb{Z}_{p}^{N}}}
a(x)\boldsymbol{v}\left(  x\right)  d^{N}x\right\vert \leq\left\Vert
a\right\Vert _{\infty}%
{\displaystyle\int\limits_{\mathbb{Z}_{p}^{N}}}
\left\vert \boldsymbol{v}\left(  x\right)  \right\vert d^{N}x\leq\left\Vert
a\right\Vert _{\infty}\left\Vert \boldsymbol{v}\right\Vert _{2}\leq\left\Vert
a\right\Vert _{\infty}M.
\]
In a similar way, one shows that
\[
\left\vert \text{ }%
{\displaystyle\int\limits_{\mathbb{Z}_{p}^{N}}}
b(x)\boldsymbol{h}\left(  x\right)  d^{N}x\right\vert \leq\left\Vert
b\right\Vert _{\infty}M.
\]
Now, by using that $\left\Vert f\right\Vert _{1}\leq\left\Vert f\right\Vert
_{2}$,%
\begin{gather*}
\left\vert \text{ \ }%
{\displaystyle\iint\limits_{\mathbb{Z}_{p}^{N}\times\mathbb{Z}_{p}^{N}}}
\boldsymbol{h}\left(  x\right)  w\left(  x,y\right)  \boldsymbol{v}\left(
y\right)  d^{N}yd^{N}x\right\vert \leq\left\Vert w\right\Vert _{\infty}%
{\displaystyle\int\limits_{\mathbb{Z}_{p}^{N}}}
\left\vert \boldsymbol{v}\left(  y\right)  \right\vert d^{N}y%
{\displaystyle\int\limits_{\mathbb{Z}_{p}^{N}}}
\left\vert \boldsymbol{h}\left(  x\right)  \right\vert d^{N}x\\
\leq\left\Vert w\right\Vert _{\infty}\left\Vert \boldsymbol{v}\right\Vert
_{1}\left\Vert \boldsymbol{h}\right\Vert _{1}\leq\left\Vert w\right\Vert
_{\infty}\left\Vert \boldsymbol{v}\right\Vert _{2}\left\Vert \boldsymbol{h}%
\right\Vert _{2}\leq\left\Vert w\right\Vert _{\infty}M^{2}.
\end{gather*}

\end{proof}

\subsection{Partition functions and continuous BMs}

Assume that the fields are from $\boldsymbol{B}_{M}^{\left(  2\right)  }$. All
he thermodynamic properties of the system are described\ by the partition
function of the fluctuating fields $\mathcal{Z}^{\left(  2\right)
}=\mathcal{Z}^{\left(  2\right)  }\left(  \boldsymbol{\theta}\right)  $. We
identify the statistical field theory (SFT) corresponding to the energy
functional (\ref{energy_functional}) with the probability measure%
\[
\mathbb{P}(\boldsymbol{v},\boldsymbol{h};\boldsymbol{\theta})=\frac
{\exp\left(  -E(\boldsymbol{v},\boldsymbol{h})\right)  }{\mathcal{Z}^{\left(
2\right)  }\left(  \boldsymbol{\theta}\right)  }\mathbb{P}_{K_{1}}\left(
\boldsymbol{v}\right)  \otimes\mathbb{P}_{K_{2}}\left(  \boldsymbol{h}\right)
\]
on the $\sigma$-algebra $\mathcal{B}\times\mathcal{B}$, where
$\boldsymbol{\theta}=\left(  w,a,b,c,d\right)  $. We do not include the
parameter $M$, a fixed positive number, in the vector $\boldsymbol{\theta}$.
The parameters in the list $\boldsymbol{\theta}$ are tuned during the learning
process, but parameter $M$ remains fixed.

We attach to $\mathbb{P}(\boldsymbol{v},\boldsymbol{h};\boldsymbol{\theta})$
\textit{a }$p$\textit{-adic continuous deep Boltzmann machine (DBM)}. In this
way we have, by definition, a one-to-one correspondence between SFTs and DBMs.

\begin{remark}
(i) $\mathbb{P}(\boldsymbol{v},\boldsymbol{h};\boldsymbol{\theta})$ is a
generative continuous model to explain the connections between of types of
data ( $\boldsymbol{v}$ and $\boldsymbol{h}$).

(ii) If the fields are from $\boldsymbol{B}_{M}^{\left(  \infty\right)  }$,
then the partition function is$\mathcal{Z}^{\left(  \infty\right)
}=\mathcal{Z}^{\left(  \infty\right)  }\left(  \boldsymbol{\theta}\right)  $,
and the corresponding SFT is identified with the probability measure%
\[
\mathbb{P}(\boldsymbol{v},\boldsymbol{h};\boldsymbol{\theta})=\frac
{\exp\left(  -E(\boldsymbol{v},\boldsymbol{h})\right)  }{\mathcal{Z}^{\left(
\infty\right)  }\left(  \boldsymbol{\theta}\right)  }\mathbb{P}_{K_{1}}\left(
\boldsymbol{v}\right)  \otimes\mathbb{P}_{K_{2}}\left(  \boldsymbol{h}\right)
\]
on the $\sigma$-algebra $\mathcal{B}\times\mathcal{B}$, where
$\boldsymbol{\theta}=\left(  w,a,b,c,d\right)  $.
\end{remark}

\section{\label{Section_5}Discretization of the energy functional}

For $l\geq1$, we \ set $G_{l}^{N}:=\left(  \mathbb{Z}_{p}/p^{l}\mathbb{Z}%
_{p}\right)  ^{N}$. The set $G_{l}^{1}=G_{l}$ is a finite rooted tree with $l$
levels, thus in the case $N\geq2$, $G_{l}^{N}$ is a finite forest consisting
of $N$ trees.

We denote \ the elements of $G_{l}^{N}$ as $\boldsymbol{i}=\left(
i_{1},\ldots,i_{N}\right)  $, where
\[
i_{k}=i_{0}^{k}+i_{1}^{k}p+\ldots+i_{l-1}^{k}p^{l-1},\text{ for }%
k=1,\ldots,N\text{,}%
\]
here the $i_{j}^{k}$\ are $p$-adic digits. For $\boldsymbol{i}=\left(
i_{1},\ldots,i_{N}\right)  \in G_{l}^{N}$, we set
\[
\Omega\left(  p^{l}\left\Vert x-\boldsymbol{i}\right\Vert _{p}\right)  =%
{\textstyle\prod\limits_{k=1}^{N}}
\Omega\left(  p^{l}\left\vert x-i_{k}\right\vert _{p}\right)  .
\]
The function $\Omega\left(  p^{l}\left\vert x-i_{k}\right\vert _{p}\right)  $
is the characteristic function of the ball $i_{k}+p^{l}\mathbb{Z}_{p}$, and
$\Omega\left(  p^{l}\left\Vert x-\boldsymbol{i}\right\Vert _{p}\right)  $ is
the characteristic function of the ball
\[
\left(  i_{1}+p^{l}\mathbb{Z}_{p}\right)  \times\cdots\times\left(
i_{N}+p^{l}\mathbb{Z}_{p}\right)  .
\]
We denote by $\mathcal{D}^{l}(\mathbb{Z}_{p}^{N})$ the $\mathbb{R}$-vector
space of all test functions of the form%
\begin{equation}
\varphi\left(  x\right)  =%
{\textstyle\sum\limits_{\boldsymbol{i}\in G_{l}^{N}}}
\varphi\left(  \boldsymbol{i}\right)  \Omega\left(  p^{l}\left\Vert
x-\boldsymbol{i}\right\Vert _{p}\right)  \text{, \ }\varphi\left(
\boldsymbol{i}\right)  \in\mathbb{R}\text{.} \label{Eq_repre}%
\end{equation}
The function $\varphi$ is supported on $\mathbb{Z}_{p}^{N}$ and $\mathcal{D}%
^{l}(\mathbb{Z}_{p}^{N})$ is a finite dimensional vector space spanned by the
basis
\begin{equation}
\left\{  \Omega\left(  p^{l}\left\Vert x-\boldsymbol{i}\right\Vert
_{p}\right)  \right\}  _{\boldsymbol{i}\in G_{l}^{N}}. \label{Basis}%
\end{equation}
The identification of $\varphi\in\mathcal{D}^{l}(\mathbb{Z}_{p}^{N})$ with the
column vector $\left[  \varphi\left(  \boldsymbol{i}\right)  \right]
_{\boldsymbol{i}\in G_{l}^{N}}\in\mathbb{R}^{\left(  \#G_{l}\right)  N}$ gives
rise to an isomorphism between $\mathcal{D}^{l}(\mathbb{Z}_{p}^{N})$ and
$\mathbb{R}^{\left(  \#G_{l}\right)  N}$ endowed with the norm $\left\Vert
\left[  \varphi\left(  \boldsymbol{i}\right)  \right]  _{\boldsymbol{i}\in
G_{l}^{N}}\right\Vert =\max_{\boldsymbol{i}\in G_{l}^{N}}\left\vert
\varphi\left(  \boldsymbol{i}\right)  \right\vert $. Furthermore,
\[
\mathcal{D}^{l}(\mathbb{Z}_{p}^{N})\hookrightarrow\mathcal{D}^{l+1}%
(\mathbb{Z}_{p}^{N})\hookrightarrow\mathcal{D}(\mathbb{Z}_{p}^{N}),
\]
where $\hookrightarrow$ denotes a continuous embedding, and $\mathcal{D}%
(\mathbb{Z}_{p}^{N})=\cup_{l}\mathcal{D}^{l}(\mathbb{Z}_{p}^{N})$.

The space $\mathcal{D}(\mathbb{Z}_{p}^{N})$ is dense in $L^{2}(\mathbb{Z}%
_{p}^{N})$, thus given $f\in L^{2}(\mathbb{Z}_{p}^{N})$ and $\epsilon>0$,
there exist a positive integer $l$, and $\phi\in\mathcal{D}^{l}(\mathbb{Z}%
_{p}^{N})$, such that $\left\Vert f-\phi\right\Vert _{2}<\epsilon$. A
discretization $E_{l}$ of the energy functional $E$ is obtained by restricting
$\boldsymbol{v},\boldsymbol{h}$ to $\mathcal{D}^{l}(\mathbb{Z}_{p}^{N})$,
i.e., by taking
\[
\boldsymbol{v}\left(  y\right)  =%
{\textstyle\sum\limits_{\boldsymbol{j}\in G_{l}^{N}}}
\boldsymbol{v}\left(  \boldsymbol{j}\right)  \Omega\left(  p^{l}\left\Vert
y-\boldsymbol{j}\right\Vert _{p}\right)  \text{, \ }\boldsymbol{h}\left(
x\right)  =%
{\textstyle\sum\limits_{\boldsymbol{i}\in G_{l}^{N}}}
\boldsymbol{h}\left(  \boldsymbol{i}\right)  \Omega\left(  p^{l}\left\Vert
x-\boldsymbol{i}\right\Vert _{p}\right)  .
\]
We now set%
\begin{align*}
w(\boldsymbol{i},\boldsymbol{j})  &  :=%
{\displaystyle\iint\limits_{\mathbb{Z}_{p}^{N}\times\mathbb{Z}_{p}^{N}}}
w(x,y)\Omega\left(  p^{l}\left\Vert x-\boldsymbol{i}\right\Vert _{p}\right)
\Omega\left(  p^{l}\left\Vert y-\boldsymbol{j}\right\Vert _{p}\right)
d^{N}xd^{N}y,\\
a(\boldsymbol{i})  &  :=%
{\displaystyle\int\limits_{\mathbb{Z}_{p}^{N}}}
a(x)\Omega\left(  p^{l}\left\Vert x-\boldsymbol{i}\right\Vert _{p}\right)
d^{n}x\text{, }b(\boldsymbol{i}):=%
{\displaystyle\int\limits_{\mathbb{Z}_{p}^{N}}}
b(x)\Omega\left(  p^{l}\left\Vert x-\boldsymbol{i}\right\Vert _{p}\right)
d^{n}x,\\
c(\boldsymbol{i})  &  :=%
{\displaystyle\int\limits_{\mathbb{Z}_{p}^{N}}}
c(x)\Omega\left(  p^{l}\left\Vert x-\boldsymbol{i}\right\Vert _{p}\right)
d^{n}x\text{, }d(\boldsymbol{i}):=%
{\displaystyle\int\limits_{\mathbb{Z}_{p}^{N}}}
d(x)\Omega\left(  p^{l}\left\Vert x-\boldsymbol{i}\right\Vert _{p}\right)
d^{n}x.
\end{align*}
Then $E_{l}\left(  \boldsymbol{v},\boldsymbol{h};\boldsymbol{\theta}\right)
=E_{l}^{\text{free}}\left(  \boldsymbol{v},\boldsymbol{h};\boldsymbol{\theta
}\right)  +E_{l}^{\text{int}}\left(  \boldsymbol{v},\boldsymbol{h}%
;\boldsymbol{\theta}\right)  $, for\ $\boldsymbol{v},\boldsymbol{h}%
\in\mathcal{D}^{l}(\mathbb{Z}_{p}^{N})$, where%
\[
E_{l}^{\text{free}}\left(  \boldsymbol{v},\boldsymbol{h}\right)  =-%
{\displaystyle\sum\limits_{\boldsymbol{i}\in G_{l}^{N}}}
a(\boldsymbol{i})\boldsymbol{v}\left(  \boldsymbol{i}\right)  -%
{\displaystyle\sum\limits_{\boldsymbol{i}\in G_{l}^{N}}}
b(\boldsymbol{i})\boldsymbol{h}\left(  \boldsymbol{i}\right)  ,
\]%
\begin{align*}
E_{l}^{\text{int}}\left(  \boldsymbol{v},\boldsymbol{h}\right)   &  =-%
{\displaystyle\sum\limits_{\boldsymbol{i},\boldsymbol{j}\in G_{l}^{N}}}
\boldsymbol{h}\left(  \boldsymbol{i}\right)  w\left(  \boldsymbol{i}%
,\boldsymbol{j}\right)  \boldsymbol{v}\left(  \boldsymbol{j}\right)  +\\
&
{\displaystyle\sum\limits_{\boldsymbol{i}\in G_{l}^{N}}}
c(\boldsymbol{i})\boldsymbol{v}^{4}\left(  \boldsymbol{i}\right)  +%
{\displaystyle\sum\limits_{\boldsymbol{i}\in G_{l}^{N}}}
d(\boldsymbol{i})\boldsymbol{h}^{4}\left(  \boldsymbol{i}\right)  .
\end{align*}
By taking
\begin{align*}
v_{\boldsymbol{i}}  &  :=\boldsymbol{v}\left(  \boldsymbol{i}\right)  \text{,
}h_{\boldsymbol{i}}:=\boldsymbol{h}\left(  \boldsymbol{i}\right)  \text{,
}w_{\boldsymbol{i},\boldsymbol{j}}:=w\left(  \boldsymbol{i},\boldsymbol{j}%
\right)  \text{, \ }a_{\boldsymbol{i}}:=a(\boldsymbol{i})\text{,
}b_{\boldsymbol{i}}:=b(\boldsymbol{i})\text{,}\\
c_{\boldsymbol{i}}  &  :=c(\boldsymbol{i})\text{, }d_{\boldsymbol{i}%
}:=d(\boldsymbol{i})\text{,}%
\end{align*}
and%
\begin{align*}
\boldsymbol{v}_{l}  &  \boldsymbol{=}\left[  v_{\boldsymbol{i}}\right]
_{\boldsymbol{i}\in G_{l}^{N}}\text{, }\boldsymbol{h}_{l}\boldsymbol{=}\left[
h_{\boldsymbol{i}}^{l}\right]  _{\boldsymbol{i}\in G_{l}^{N}}\text{,
}\boldsymbol{w}_{l}=\left[  w_{\boldsymbol{i},\boldsymbol{j}}\right]
_{\boldsymbol{i},\boldsymbol{j}\in G_{l}^{N}}\text{, }\boldsymbol{a}%
_{l}=\left[  a_{\boldsymbol{i}}\right]  _{\boldsymbol{i}\in G_{l}^{N}}\text{,
}\\
\boldsymbol{b}_{l}  &  =\left[  b_{\boldsymbol{i}}\right]  _{\boldsymbol{i}\in
G_{l}^{N}}\text{, }\boldsymbol{c}_{l}=\left[  c_{\boldsymbol{i}}\right]
_{\boldsymbol{i}\in G_{l}^{N}}\text{, }\boldsymbol{d}_{l}=\left[
d_{\boldsymbol{i}}\right]  _{\boldsymbol{i}\in G_{l}^{N}}\text{,}%
\end{align*}
and identifying $\boldsymbol{v}$ with $\boldsymbol{v}_{l}$, and
$\boldsymbol{h}$ with $\boldsymbol{h}_{l}$, we have%
\begin{equation}
E_{l}^{\text{free}}\left(  \boldsymbol{v}_{l},\boldsymbol{h}_{l}%
;\boldsymbol{\theta}_{l}\right)  =-%
{\displaystyle\sum\limits_{\boldsymbol{i}\in G_{l}^{N}}}
a_{\boldsymbol{i}}v_{\boldsymbol{i}}-%
{\displaystyle\sum\limits_{\boldsymbol{i}\in G_{l}^{N}}}
b_{\boldsymbol{i}}h_{\boldsymbol{i}}, \label{Energy_free}%
\end{equation}%
\[
E_{l}^{\text{int}}\left(  \boldsymbol{v}_{l},\boldsymbol{h}_{l}%
;\boldsymbol{\theta}_{l}\right)  =-%
{\displaystyle\sum\limits_{\boldsymbol{i},\boldsymbol{j}\in G_{l}^{N}}}
h_{\boldsymbol{i}}w_{\boldsymbol{i},\boldsymbol{j}}v_{\boldsymbol{j}}+%
{\displaystyle\sum\limits_{\boldsymbol{i}\in G_{l}^{N}}}
c_{\boldsymbol{i}}v_{\boldsymbol{i}}^{4}+%
{\displaystyle\sum\limits_{\boldsymbol{i}\in G_{l}^{N}}}
d_{\boldsymbol{i}}h_{\boldsymbol{i}}^{4},
\]
where $\boldsymbol{\theta}_{l}=\left(  \boldsymbol{w}_{l},\boldsymbol{a}%
_{l},\boldsymbol{b}_{l},\boldsymbol{c}_{l},\boldsymbol{d}_{l},m_{l},\right)
$. We set
\[
E_{l}\left(  \boldsymbol{v}_{l},\boldsymbol{h}_{l};\boldsymbol{\theta}%
_{l}\right)  =E_{l}^{\text{free}}\left(  \boldsymbol{v}_{l},\boldsymbol{h}%
_{l};\boldsymbol{\theta}_{l}\right)  +E_{l}^{\text{int}}\left(  \boldsymbol{v}%
_{l},\boldsymbol{h}_{l};\boldsymbol{\theta}_{l}\right)  .
\]

\section{\label{Section_6}Discrete statistical field theories and deep
Boltzmann machines}

\subsection{Covariance matrices}

Given a positive integer $l$, we define the covariance matrix $C_{l}=\left[
C_{\boldsymbol{i},\boldsymbol{j}}\right]  _{\boldsymbol{i},\boldsymbol{j}\in
G_{l}^{N}}$, where%
\begin{gather*}
C_{\boldsymbol{i},\boldsymbol{j}}=\left\langle \square_{K}\text{ }p^{\frac
{lN}{2}}\Omega\left(  p^{l}\left\Vert x-\boldsymbol{i}\right\Vert _{p}\right)
,p^{\frac{lN}{2}}\Omega\left(  p^{l}\left\Vert x-\boldsymbol{j}\right\Vert
_{p}\right)  \right\rangle \\
=\left\langle \mathcal{F}_{x\rightarrow\xi}\left\{  \square_{K}\text{
}p^{\frac{lN}{2}}\Omega\left(  p^{l}\left\Vert x-\boldsymbol{i}\right\Vert
_{p}\right)  \right\}  ,\mathcal{F}_{x\rightarrow\xi}\left\{  p^{\frac{lN}{2}%
}\Omega\left(  p^{l}\left\Vert x-\boldsymbol{j}\right\Vert _{p}\right)
\right\}  \right\rangle .
\end{gather*}
Now by using%
\[
\mathcal{F}_{x\rightarrow\xi}\left(  p^{\frac{lN}{2}}\Omega\left(
p^{l}\left\Vert y-\boldsymbol{i}\right\Vert _{p}\right)  \right)
=p^{\frac{-lN}{2}}\chi_{p}\left(  \boldsymbol{i}\cdot\xi\right)  \Omega\left(
p^{-l}\left\Vert \xi\right\Vert _{p}\right)  ,
\]
we have%
\begin{align}
C_{\boldsymbol{i},\boldsymbol{j}}  &  =p^{-lN}%
{\displaystyle\int\limits_{\mathbb{Q}_{p}^{N}}}
\widehat{K}\left(  \xi\right)  \chi_{p}\left(  \boldsymbol{i}\cdot\xi\right)
\chi_{p}\left(  -\boldsymbol{j}\cdot\xi\right)  \Omega\left(  p^{-l}\left\Vert
\xi\right\Vert _{p}\right)  d^{N}\xi\nonumber\\
&  =p^{-lN}%
{\displaystyle\int\limits_{p^{-l}\mathbb{Z}_{p}^{N}}}
\widehat{K}\left(  \xi\right)  \chi_{p}\left(  \boldsymbol{i}\cdot\xi\right)
\chi_{p}\left(  -\boldsymbol{j}\cdot\xi\right)  d^{N}\xi. \label{Cov-0}%
\end{align}
By \ using the partition%
\[
p^{-l}\mathbb{Z}_{p}^{N}=%
{\displaystyle\bigsqcup\limits_{\boldsymbol{s}\in T_{-l}^{N}}}
\left(  \boldsymbol{s}+\mathbb{Z}_{p}^{N}\right)  ,
\]
where $T_{-l}^{N}$ is a set of representatives of the quotient group $\left(
p^{-l}\mathbb{Z}_{p}/\mathbb{Z}_{p}\right)  ^{N}$, the entry
$C_{\boldsymbol{i},\boldsymbol{j}}$ in (\ref{Cov-0}) can be rewritten as%
\begin{equation}
C_{\boldsymbol{i},\boldsymbol{j}}=p^{-lN}%
{\displaystyle\sum\limits_{\boldsymbol{s}\in T_{-l}^{N}}}
\text{ \ }%
{\displaystyle\int\limits_{\boldsymbol{s}+\mathbb{Z}_{p}^{N}}}
\widehat{K}\left(  \xi\right)  \chi_{p}\left(  \boldsymbol{i}\cdot\xi\right)
\chi_{p}\left(  -\boldsymbol{j}\cdot\xi\right)  d^{N}\xi. \label{Cov}%
\end{equation}
We recall that $\widehat{K}\left(  \xi\right)  =\widehat{K}\left(
\max\left\{  1,\left\Vert \xi\right\Vert _{p}\right\}  \right)  $, and thus
$\widehat{K}\left(  \xi+\xi_{0}\right)  =\widehat{K}\left(  \xi\right)  $ for
any $\xi\in\mathbb{Q}_{p}^{N}$, $\xi_{0}\in\mathbb{Z}_{p}^{N}$. This implies
that (\ref{Cov}) can be rewritten as%
\begin{align*}
C_{\boldsymbol{i},\boldsymbol{j}}  &  =p^{-lN}%
{\displaystyle\sum\limits_{\boldsymbol{s}\in T_{-l}^{N}}}
\text{ }\widehat{K}\left(  \boldsymbol{s}\right)  \chi_{p}\left(  \left(
\boldsymbol{i}-\boldsymbol{j}\right)  \cdot\boldsymbol{s}\right)
{\displaystyle\int\limits_{\mathbb{Z}_{p}^{N}}}
\chi_{p}\left(  \left(  \boldsymbol{i}-\boldsymbol{j}\right)  \cdot\xi\right)
d^{N}\xi\\
&  =p^{-lN}%
{\displaystyle\sum\limits_{\boldsymbol{s}\in T_{-l}^{N}}}
\text{ }\widehat{K}\left(  \boldsymbol{s}\right)  \chi_{p}\left(  \left(
\boldsymbol{i}-\boldsymbol{j}\right)  \cdot\boldsymbol{s}\right)  ,
\end{align*}
since $\left(  \boldsymbol{i}-\boldsymbol{j}\right)  \cdot\xi\in\mathbb{Z}%
_{p}$ for any $\xi\in\mathbb{Z}_{p}^{N}$. Using that $\widehat{K}\left(
\xi\right)  =\widehat{K}\left(  -\xi\right)  $, we conclude that $C_{l}$ is a
symmetric, real-valued matrix.

Finally, since $\square_{K}$ is a positive operator on $L_{\mathbb{R}}%
^{2}\left(  \mathbb{Z}_{p}^{N}\right)  $, the condition $\left\langle
\square_{K}f,f\right\rangle =0$ implies that $f=0$ almost everywhere. Take
$\varphi\in\mathcal{D}^{l}(\mathbb{Z}_{p}^{N})$, by identifying it with the
column vector $\left[  \varphi\left(  \boldsymbol{i}\right)  \right]
_{\boldsymbol{i}\in G_{l}^{N}}$, we have
\[
\left[  \varphi\left(  \boldsymbol{i}\right)  \right]  _{\boldsymbol{i}\in
G_{l}^{N}}C_{l}\left[  \varphi\left(  \boldsymbol{i}\right)  \right]
_{\boldsymbol{i}\in G_{l}^{N}}=\left\langle \square_{K}\varphi,\varphi
\right\rangle \geq0.
\]
The equality implies that $\varphi=0$ almost everywhere, and due to the local
constancy necessarily $\varphi$ is the constant function zero, i.e.,
$\varphi\left(  \boldsymbol{i}\right)  =0$ for any $\boldsymbol{i}\in
G_{l}^{N}$. Which implies that $C_{l}$\ is positive-definite. In conclusion,
we have the following result.

\begin{lemma}
$C_{l}$ is a real-valued, symmetric and positive-definite matrix.
\end{lemma}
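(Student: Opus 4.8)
The plan is to extract all three properties from the computations already carried out above, organizing them around the single identity $C_{\boldsymbol{i},\boldsymbol{j}}=\langle\square_{K}\phi_{\boldsymbol{i}},\phi_{\boldsymbol{j}}\rangle$, where $\phi_{\boldsymbol{i}}:=p^{\frac{lN}{2}}\Omega(p^{l}\Vert x-\boldsymbol{i}\Vert_{p})$ is the $L^{2}$-normalized indicator of the ball $\boldsymbol{i}+p^{l}\mathbb{Z}_{p}^{N}$. Everything reduces to the operator-theoretic facts established in Lemma~\ref{Lemma1} together with hypotheses (H2) and (H3).

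First, for real-valuedness and symmetry I would use the closed form $C_{\boldsymbol{i},\boldsymbol{j}}=p^{-lN}\sum_{\boldsymbol{s}\in T_{-l}^{N}}\widehat{K}(\boldsymbol{s})\,\chi_{p}((\boldsymbol{i}-\boldsymbol{j})\cdot\boldsymbol{s})$ derived in the preceding paragraphs. Hypothesis (H2), namely $\widehat{K}(\xi)=\widehat{K}([\xi]_{p})$, yields $\widehat{K}(-\boldsymbol{s})=\widehat{K}(\boldsymbol{s})$; pairing the term indexed by $\boldsymbol{s}$ with the one indexed by $-\boldsymbol{s}$ combines $\chi_{p}$ with its conjugate, so the sum is real. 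Interchanging $\boldsymbol{i}$ and $\boldsymbol{j}$ replaces $\boldsymbol{s}$ by $-\boldsymbol{s}$ in the character while leaving $\widehat{K}(\boldsymbol{s})$ fixed, whence $C_{\boldsymbol{i},\boldsymbol{j}}=C_{\boldsymbol{j},\boldsymbol{i}}$. Symmetry also follows at once from the self-adjointness of $\square_{K}$ recorded in Lemma~\ref{Lemma1}, so either route closes this part.

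Next, for definiteness I would pass from the matrix back to the operator. Given $\varphi\in\mathcal{D}^{l}(\mathbb{Z}_{p}^{N})$ with coordinate vector $[\varphi(\boldsymbol{i})]_{\boldsymbol{i}\in G_{l}^{N}}$, bilinearity of the inner product gives the quadratic form $[\varphi(\boldsymbol{i})]^{\top}C_{l}\,[\varphi(\boldsymbol{i})]=\langle\square_{K}\varphi,\varphi\rangle$, which is $\geq 0$ by the positivity in Lemma~\ref{Lemma1}. For strictness I would invoke the spectral expression $\langle\square_{K}\varphi,\varphi\rangle=\int_{\mathbb{Q}_{p}^{N}}\widehat{K}(\xi)\,|\widehat{\varphi}(\xi)|^{2}\,d^{N}\xi$ together with $\widehat{K}>0$ from (H3): the integral vanishes only if $\widehat{\varphi}=0$ almost everywhere, hence $\varphi=0$ almost everywhere.

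The one step requiring care --- and the only genuine obstacle --- is upgrading ``$\varphi=0$ a.e.'' to ``$\varphi(\boldsymbol{i})=0$ for every $\boldsymbol{i}\in G_{l}^{N}$,'' i.e. the vanishing of the coordinate vector itself. Here I would use the local constancy built into $\mathcal{D}^{l}(\mathbb{Z}_{p}^{N})$: every such $\varphi$ is constant, equal to $\varphi(\boldsymbol{i})$, on the ball $\boldsymbol{i}+p^{l}\mathbb{Z}_{p}^{N}$, and these balls have positive Haar measure $p^{-lN}$, so vanishing almost everywhere forces each coefficient to be zero. This is precisely what separates positive-definiteness from mere positive-semidefiniteness, and it is the reason the argument needs both the strict positivity (H3) and the discreteness of $\mathcal{D}^{l}(\mathbb{Z}_{p}^{N})$.
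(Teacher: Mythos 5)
Your proposal is correct and follows essentially the same route as the paper: real-valuedness and symmetry from the closed form $C_{\boldsymbol{i},\boldsymbol{j}}=p^{-lN}\sum_{\boldsymbol{s}}\widehat{K}(\boldsymbol{s})\chi_{p}((\boldsymbol{i}-\boldsymbol{j})\cdot\boldsymbol{s})$ together with $\widehat{K}(\xi)=\widehat{K}(-\xi)$, and positive-definiteness by identifying the quadratic form with $\langle\square_{K}\varphi,\varphi\rangle$, forcing $\varphi=0$ a.e.\ via (H3), and then using local constancy of elements of $\mathcal{D}^{l}(\mathbb{Z}_{p}^{N})$ to conclude every coefficient vanishes. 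Your explicit justification of the implication $\langle\square_{K}\varphi,\varphi\rangle=0\Rightarrow\varphi=0$ a.e.\ through the spectral expression $\int\widehat{K}(\xi)|\widehat{\varphi}|^{2}d^{N}\xi$ is a welcome filling-in of a step the paper only asserts.
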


\subsection{Discretization of $\mathbb{P}_{K_{1}}\otimes\mathbb{P}_{K_{2}}$}

On $L_{\mathbb{R}}^{2}\left(  \mathbb{Z}_{p}^{N}\right)  \times L_{\mathbb{R}%
}^{2}\left(  \mathbb{Z}_{p}^{N}\right)  $ we define the bilinear form%
\[
\left\langle \left(  f_{1},g_{1}\right)  ,\left(  f_{2},g_{2}\right)
\right\rangle =\left\langle f_{1},g_{1}\right\rangle +\left\langle f_{2}%
,g_{2}\right\rangle \in\mathbb{R}\text{.}%
\]
$L_{\mathbb{R}}^{2}\left(  \mathbb{Z}_{p}^{N}\right)  \times L_{\mathbb{R}%
}^{2}\left(  \mathbb{Z}_{p}^{N}\right)  $ endowed with this bilinear form is a
real Hilbert space. We also define $\square_{K_{1},K_{2}}\left(  f,g\right)
=\left(  \square_{K_{1}}f,\square_{K_{2}}g\right)  $, for $\left(  f,g\right)
\in$ $L_{\mathbb{R}}^{2}\left(  \mathbb{Z}_{p}^{N}\right)  \times
L_{\mathbb{R}}^{2}\left(  \mathbb{Z}_{p}^{N}\right)  $. By Lemmas
\ref{Lemma1}-\ref{Lemma4}, $\square_{K_{1},K_{2}}$ is a trace class operator.
By Theorem \ref{Theorem1}, the Fourier transform of $\mathbb{P}_{K_{1}}%
\otimes\mathbb{P}_{K_{2}}$ is
\[%
{\displaystyle\iint\limits_{L_{\mathbb{R}}^{2}\left(  \mathbb{Z}_{p}%
^{N}\right)  \times L_{\mathbb{R}}^{2}\left(  \mathbb{Z}_{p}^{N}\right)  }}
e^{\sqrt{-1}\left\{  \left\langle f,\boldsymbol{v}\right\rangle +\left\langle
g,\boldsymbol{h}\right\rangle \right\}  }d\mathbb{P}_{K_{1}}\left(
\boldsymbol{v}\right)  \otimes d\mathbb{P}_{K_{2}}\left(  \boldsymbol{v}%
\right)  =e^{\frac{-1}{2}\left\langle \square_{K_{1},K_{2}}\left(  f,g\right)
,\left(  f,g\right)  \right\rangle }.
\]
Which means that $\mathbb{P}_{K_{1}}\otimes\mathbb{P}_{K_{2}}$ is a Gaussian
measure with mean zero and covariance $\square_{K_{1},K_{2}}$.

For a positive integer $l$, we set
\[%
\begin{array}
[c]{ccc}%
L_{\mathbb{R}}^{2}\left(  \mathbb{Z}_{p}^{N}\right)  & \rightarrow &
\mathcal{D}^{l}(\mathbb{Z}_{p}^{N})\\
f & \rightarrow & \Pi_{l}\left(  f\right)  =%
{\textstyle\sum\limits_{\boldsymbol{i}\in G_{l}^{N}}}
\left\langle f,\Omega\left(  p^{l}\left\Vert x-\boldsymbol{i}\right\Vert
_{p}\right)  \right\rangle \Omega\left(  p^{l}\left\Vert x-\boldsymbol{i}%
\right\Vert _{p}\right)  .
\end{array}
\]
By abuse of notation, we also denote by $\Pi_{l}$ the projection%
\[%
\begin{array}
[c]{ccc}%
L_{\mathbb{R}}^{2}\left(  \mathbb{Z}_{p}^{N}\right)  \times L_{\mathbb{R}}%
^{2}\left(  \mathbb{Z}_{p}^{N}\right)  & \rightarrow & \mathcal{D}%
^{l}(\mathbb{Z}_{p}^{N})\times\mathcal{D}^{l}(\mathbb{Z}_{p}^{N})\\
\left(  f,g\right)  & \rightarrow & \Pi_{l}\left(  f,g\right)  :=\left(
\Pi_{l}\left(  f\right)  ,\Pi_{l}\left(  g\right)  \right)  .
\end{array}
\]
Since $\mathcal{D}^{l}(\mathbb{Z}_{p}^{N})\times\mathcal{D}^{l}(\mathbb{Z}%
_{p}^{N})$ is a finite-dimensional subspace of $L_{\mathbb{R}}^{2}\left(
\mathbb{Z}_{p}^{N}\right)  \times L_{\mathbb{R}}^{2}\left(  \mathbb{Z}_{p}%
^{N}\right)  $, the projection $\Pi_{l}$ is continuous in the $L_{\mathbb{R}%
}^{2}\left(  \mathbb{Z}_{p}^{N}\right)  \times L_{\mathbb{R}}^{2}\left(
\mathbb{Z}_{p}^{N}\right)  $-topology.

\begin{remark}
Notice that $E\circ\Pi_{l}$ agrees with the restriction of $E$ to
$\mathcal{D}^{l}(\mathbb{Z}_{p}^{N})\times\mathcal{D}^{l}(\mathbb{Z}_{p}^{N})$.
\end{remark}

\begin{definition}
Given a positive integer $l$, We denote by $\left(  \mathbb{P}_{K_{1}}%
\otimes\mathbb{P}_{K_{2}}\right)  _{l}$ the image of the measure
$\mathbb{P}_{K}\otimes\mathbb{P}_{K}$ under the map $\Pi_{l}$ (also called the
push-forward measure).
\end{definition}

$\left(  \mathbb{P}_{K_{1}}\otimes\mathbb{P}_{K_{2}}\right)  _{l}$ is a
probability measure on the Borel $\sigma$-algebra $\mathcal{B}(\mathbb{R}%
^{\#G_{l}^{N}})\times\mathcal{B}(\mathbb{R}^{\#G_{l}^{N}})$, which is
determined by its Fourier transform:
\begin{gather*}
\widehat{\left(  \mathbb{P}_{K}\otimes\mathbb{P}_{K}\right)  }_{l}\left(
\left[  \widehat{v}\left(  \boldsymbol{i}\right)  \right]  _{\boldsymbol{i}\in
G_{l}^{N}},\left[  \widehat{h}\left(  \boldsymbol{i}\right)  \right]
_{\boldsymbol{i}\in G_{l}^{N}}\right)  =\\
\exp\left\{  \frac{-1}{2}\left[  \widehat{v}\left(  \boldsymbol{i}\right)
\right]  _{\boldsymbol{i}\in G_{l}^{N}}^{T}C_{l,K_{1}}\left[  \widehat
{v}\left(  \boldsymbol{i}\right)  \right]  _{\boldsymbol{i}\in G_{l}^{N}%
}\right\}  \exp\left\{  \frac{-1}{2}\left[  \widehat{h}\left(  \boldsymbol{i}%
\right)  \right]  _{\boldsymbol{i}\in G_{l}^{N}}^{T}C_{l,K_{2}}\left[
\widehat{h}\left(  \boldsymbol{i}\right)  \right]  _{\boldsymbol{i}\in
G_{l}^{N}}\right\}  ,
\end{gather*}
where $C_{l,K_{1}}$, $C_{l,K_{2}}$ are the covariance matrices attached to
$\square_{K_{1}}$, $\square_{K_{2}}$. By a well-known calculation,
\begin{gather}
\left(  \mathbb{P}_{K}\otimes\mathbb{P}_{K}\right)  _{l}\left(  \left[
v\left(  \boldsymbol{i}\right)  \right]  _{\boldsymbol{i}\in G_{l}^{N}%
},\left[  h\left(  \boldsymbol{i}\right)  \right]  _{\boldsymbol{i}\in
G_{l}^{N}}\right)  =\nonumber\\
\frac{1}{\det\left(  2\pi C_{l,K_{1}}\right)  }\exp\left\{  \frac{-1}%
{2}\left[  v\left(  \boldsymbol{i}\right)  \right]  _{\boldsymbol{i}\in
G_{l}^{N}}^{T}C_{l,K_{1}}^{-1}\left[  v\left(  \boldsymbol{i}\right)  \right]
_{\boldsymbol{i}\in G_{l}^{N}}\right\}  \times\nonumber\\
\frac{1}{\det\left(  2\pi C_{l,K_{2}}\right)  }\exp\left\{  \frac{-1}%
{2}\left[  h\left(  \boldsymbol{i}\right)  \right]  _{\boldsymbol{i}\in
G_{l}^{N}}^{T}C_{l,K_{2}}^{-1}\left[  h\left(  \boldsymbol{i}\right)  \right]
_{\boldsymbol{i}\in G_{l}^{N}}\right\}
{\displaystyle\prod\limits_{\boldsymbol{i}\in G_{l}^{N}}}
dv\left(  \boldsymbol{i}\right)
{\displaystyle\prod\limits_{\boldsymbol{i}\in G_{l}^{N}}}
dh\left(  \boldsymbol{i}\right)  , \label{calculation}%
\end{gather}
where $%
{\textstyle\prod\nolimits_{\boldsymbol{i}\in G_{l}^{N}}}
dv\left(  \boldsymbol{i}\right)
{\textstyle\prod\nolimits_{\boldsymbol{i}\in G_{l}^{N}}}
dh\left(  \boldsymbol{i}\right)  $ denotes the Lebesgue measure of
$\mathbb{R}^{\#G_{l}^{N}}\times\mathbb{R}^{\#G_{l}^{N}}$. Notice that $\left(
\mathbb{P}_{K_{1}}\otimes\mathbb{P}_{K_{2}}\right)  _{l}\left(  \left[
v\left(  \boldsymbol{i}\right)  \right]  _{\boldsymbol{i}\in G_{l}^{N}%
},\left[  h\left(  \boldsymbol{i}\right)  \right]  _{\boldsymbol{i}\in
G_{l}^{N}}\right)  =\left(  \mathbb{P}_{K_{1}}\otimes\mathbb{P}_{K_{2}%
}\right)  _{l}\left(  \boldsymbol{v}_{l},\boldsymbol{h}_{l}\right)  $.

We\ now set%
\begin{equation}
\mathbb{P}_{l}(\boldsymbol{v}_{l},\boldsymbol{h}_{l};\boldsymbol{\theta}%
_{l}):=\frac{\exp(-E_{l}\left(  \boldsymbol{v}_{l},\boldsymbol{h}%
_{l};\boldsymbol{\theta}_{l}\right)  )\left(  \mathbb{P}_{K_{1}}%
\otimes\mathbb{P}_{K_{2}}\right)  _{l}}{\mathcal{Z}\left(  \boldsymbol{\theta
}_{l}\right)  }, \label{EQ_P-l}%
\end{equation}
where%
\begin{equation}
\mathcal{Z}\left(  \boldsymbol{\theta}_{l}\right)  =%
{\displaystyle\iint\limits_{\mathbb{R}^{\#G_{l}^{N}}\times\mathbb{R}%
^{\#G_{l}^{N}}}}
\exp(-E_{l}\left(  \boldsymbol{v}_{l},\boldsymbol{h}_{l};\boldsymbol{\theta
}_{l}\right)  )d\left(  \mathbb{P}_{K_{1}}\otimes\mathbb{P}_{K_{2}}\right)
_{l}. \label{EQ_P-l-1}%
\end{equation}

\begin{theorem}
Let $A$ be a Borel subset from $\mathcal{B}(\mathbb{R}^{\#G_{l}^{N}}%
)\times\mathcal{B}(\mathbb{R}^{\#G_{l}^{N}})$. Then%
\begin{align*}%
{\displaystyle\int\limits_{\Pi_{l}^{-1}\left(  A\right)  }}
d\mathbb{P}(\boldsymbol{v},\boldsymbol{h};\boldsymbol{\theta})  &  =\frac
{1}{\mathcal{Z}^{\left(  2\right)  }\left(  \boldsymbol{\theta}\right)  }%
{\displaystyle\int\limits_{\Pi_{l}^{-1}\left(  A\right)  }}
\exp\left(  -E\circ\Pi_{l}\left(  \boldsymbol{v},\boldsymbol{h}\right)
\right)  d\mathbb{P}_{K_{1}}\left(  \boldsymbol{v}\right)  \otimes
\mathbb{P}_{K_{2}}\left(  \boldsymbol{h}\right) \\
&  =%
{\displaystyle\int\limits_{A}}
d\mathbb{P}_{l}(\boldsymbol{v}_{l},\boldsymbol{h}_{l};\boldsymbol{\theta}%
_{l}).
\end{align*}

\end{theorem}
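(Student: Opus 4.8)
The plan is to prove the stated chain by showing that the push-forward $(\Pi_l)_\ast\mathbb{P}$ of the continuous measure equals the finite-dimensional Boltzmann distribution $\mathbb{P}_l$ of \eqref{EQ_P-l}, reducing the infinite-dimensional integral over $\Pi_l^{-1}(A)$ to a finite-dimensional Gaussian integral over $A$. The first equality rewrites the density, the middle integral transports to $\mathbb{R}^{\#G_l^N}\times\mathbb{R}^{\#G_l^N}$, and the explicit Gaussian form \eqref{calculation} then exhibits $\mathbb{P}_l$.

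I would begin with the transport (second) equality, which is the formal core. Since $\mathbf 1_A\circ\Pi_l$ is $\sigma(\Pi_l)$-measurable and, by the Remark preceding the statement, $E\circ\Pi_l$ coincides with the discrete functional $E_l$ read off the coordinates $(\boldsymbol v_l,\boldsymbol h_l)$, the integrand $\exp(-E\circ\Pi_l)\mathbf 1_{\Pi_l^{-1}(A)}$ factors through $\Pi_l$. Hence the abstract change-of-variables formula for push-forwards gives $\int_{\Pi_l^{-1}(A)}\exp(-E\circ\Pi_l)\,d(\mathbb P_{K_1}\otimes\mathbb P_{K_2})=\int_A\exp(-E_l)\,d(\mathbb P_{K_1}\otimes\mathbb P_{K_2})_l$, where $(\mathbb P_{K_1}\otimes\mathbb P_{K_2})_l=(\Pi_l)_\ast(\mathbb P_{K_1}\otimes\mathbb P_{K_2})$ is the Gaussian whose covariance matrices $C_{l,K_1},C_{l,K_2}$ arise from Theorem \ref{Theorem1} and whose explicit density is \eqref{calculation}. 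Dividing by the normalizing constant and comparing with \eqref{EQ_P-l} identifies the right-hand integral with $\int_A d\mathbb P_l$, up to matching the normalization.

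The delicate step — and the one I expect to be the main obstacle — is the first equality, i.e.\ replacing the full energy $E$ by $E\circ\Pi_l$ inside $\int_{\Pi_l^{-1}(A)}\exp(-E)\,d(\mathbb P_{K_1}\otimes\mathbb P_{K_2})$. Because $\mathbf 1_{\Pi_l^{-1}(A)}$ and $E\circ\Pi_l$ are $\sigma(\Pi_l)$-measurable, this reduces to the identity $\mathbb E[\exp(-E)\mid\Pi_l]=\exp(-E\circ\Pi_l)$ for the conditional expectation under $\mathbb P_{K_1}\otimes\mathbb P_{K_2}$; taking $A$ to be the whole space shows it is equivalent to the normalization identity $\mathcal Z^{(2)}=\mathcal Z(\boldsymbol\theta_l)$. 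To control it I would split the centered Gaussian along $L^2=\mathcal D^l\oplus(\mathcal D^l)^\perp$: by Lemma \ref{Lemma4} the basis $\{\Omega(p^l\|x-\boldsymbol i\|_p)\}$ of $\mathcal D^l$ lies in eigenspaces of $\square_{K}$, so this splitting is $\square_{K_1,K_2}$-invariant and the components $\Pi_l(\boldsymbol v,\boldsymbol h)$ and $(\mathrm{Id}-\Pi_l)(\boldsymbol v,\boldsymbol h)$ are independent. Conditioning on $\Pi_l$ and integrating out the orthogonal (high-frequency) fluctuations, one must show that their contribution decouples from the energy, the Remark guaranteeing that the surviving part of $E$ on each fibre is exactly $E\circ\Pi_l$. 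This decoupling is where the essential analytic work sits, and where the interaction terms $\langle\boldsymbol h,\boldsymbol{Wv}\rangle$, $\langle c,\boldsymbol v^4\rangle$, $\langle d,\boldsymbol h^4\rangle$ must be handled with care, using the $L^\infty$/$L^2$ bounds of Lemma \ref{Lemma5} to ensure integrability throughout.

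Finally, I would assemble the pieces: the transport equality rewrites the middle integral as a finite-dimensional Gaussian integral with density \eqref{calculation}, the decoupling identity fixes the prefactor to be precisely $\mathcal Z(\boldsymbol\theta_l)^{-1}$, and \eqref{EQ_P-l} then reads the result off as $\int_A d\mathbb P_l(\boldsymbol v_l,\boldsymbol h_l;\boldsymbol\theta_l)$. Throughout, the hypotheses \eqref{case2} and the finiteness of the partition functions from Lemma \ref{Lemma5} are what legitimize the applications of Fubini, the change of variables, and the conditioning argument.
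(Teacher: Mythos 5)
Your overall strategy --- identify $\mathbb{P}_l$ with the push-forward $\left(\Pi_l\right)_{\ast}\mathbb{P}$ and use the abstract change-of-variables formula to move the integral from $\Pi_l^{-1}(A)$ to $A$ --- is the same as the paper's, and your treatment of the second equality is sound: since $\exp\left(-E\circ\Pi_l\right)\mathbf{1}_{\Pi_l^{-1}(A)}$ factors through $\Pi_l$, transporting it to $\left(\mathbb{P}_{K_1}\otimes\mathbb{P}_{K_2}\right)_l$ is legitimate, and the explicit Gaussian density then identifies the result with $\int_A d\mathbb{P}_l$ once the normalizations match. You also correctly isolate the first equality as the crux and reformulate it as the conditional-expectation identity $\mathbb{E}\left[e^{-E}\mid\Pi_l\right]=e^{-E\circ\Pi_l}$. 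The paper does not isolate this point: it asserts, by appeal to the change-of-variables formula, that the density of $\left(\Pi_l\right)_{\ast}\mathbb{P}$ with respect to $\left(\mathbb{P}_{K_1}\otimes\mathbb{P}_{K_2}\right)_l$ is a constant times $\exp\left(-E_l\right)$, which is the same assertion in disguise, since in general the push-forward of a measure with density $f$ has density $\mathbb{E}\left[f\mid\Pi_l\right]$, not $f$ itself.

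The gap is that your proposed resolution of this crux is not carried out and, as described, would not succeed. The splitting $L_{\mathbb{R}}^{2}\left(\mathbb{Z}_{p}^{N}\right)=\mathcal{D}^{l}(\mathbb{Z}_{p}^{N})\oplus\left(\mathcal{D}^{l}(\mathbb{Z}_{p}^{N})\right)^{\perp}$ does make $\Pi_l(\boldsymbol{v},\boldsymbol{h})$ and $(\mathrm{Id}-\Pi_l)(\boldsymbol{v},\boldsymbol{h})$ independent (the wavelet basis shows $\mathcal{D}^{l}$ is $\square_{K}$-invariant), but neither the energy nor the domain of integration decouples along it. Writing $\boldsymbol{v}=\Pi_l\boldsymbol{v}+\boldsymbol{v}^{\perp}$: the term $\left\langle a,\boldsymbol{v}\right\rangle$ contributes $\left\langle a,\boldsymbol{v}^{\perp}\right\rangle$ unless $a\in\mathcal{D}^{l}$; the quartic term $\left\langle c,\boldsymbol{v}^{4}\right\rangle$ produces cross terms such as $\left\langle c,(\Pi_l\boldsymbol{v})^{2}(\boldsymbol{v}^{\perp})^{2}\right\rangle$ whose conditional expectation depends on $\Pi_l\boldsymbol{v}$ and is not of the form $\exp\left(-\left\langle c,(\Pi_l\boldsymbol{v})^{4}\right\rangle\right)$ times a constant; and the constraint $\left\Vert\boldsymbol{v}\right\Vert_{2}\leq M$ reads $\left\Vert\Pi_l\boldsymbol{v}\right\Vert_{2}^{2}+\left\Vert\boldsymbol{v}^{\perp}\right\Vert_{2}^{2}\leq M^{2}$, which couples the two components before the energy is even considered. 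Consequently $\mathbb{E}\left[e^{-E}\mathbf{1}_{\boldsymbol{B}_{M}^{(2)}\times\boldsymbol{B}_{M}^{(2)}}\mid\Pi_l\right]$ is not proportional to $e^{-E\circ\Pi_l}$ for general parameters, and the Remark you invoke (that $E\circ\Pi_l$ agrees with the restriction of $E$ to $\mathcal{D}^{l}\times\mathcal{D}^{l}$) says nothing about the behaviour of $E$ along the fibres of $\Pi_l$. The identity does hold trivially when $c=d=0$ and $a$, $b$, $w$ are locally constant of scale $l$ (then $E=E\circ\Pi_l$ identically); absent such hypotheses, the "essential analytic work" you defer is precisely the step that fails, so the proposal does not yet constitute a proof.
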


\begin{remark}
Let $X:\mathbb{R}^{\#G_{l}^{N}}\times\mathbb{R}^{\#G_{l}^{N}}\rightarrow
\mathbb{R}$ be a Borel function. Then%
\[%
{\displaystyle\int\limits_{\Pi_{l}^{-1}\left(  A\right)  }}
X(\Pi_{l}\left(  \boldsymbol{v}\right)  ,\Pi_{l}\left(  \boldsymbol{h}\right)
)d\mathbb{P}(\boldsymbol{v},\boldsymbol{h};\boldsymbol{\theta})=%
{\displaystyle\int\limits_{A}}
X(\boldsymbol{v},\boldsymbol{h})d\mathbb{P}_{l}(\boldsymbol{v},\boldsymbol{h}%
;\boldsymbol{\theta})
\]
in the sense that if one of the integrals exists, so does the other. If $X$ is
bounded then the integrals exist.
\end{remark}

\begin{proof}
We show that $\mathbb{P}_{l}(\boldsymbol{v}_{l},\boldsymbol{h}_{l}%
;\boldsymbol{\theta}_{l})$ is the push-forward measure $\left(  \Pi
_{l}\right)  _{\ast}\left(  \mathbb{P}(\boldsymbol{v},\boldsymbol{h}%
;\boldsymbol{\theta})\right)  $ of $\mathbb{P}(\boldsymbol{v},\boldsymbol{h}%
;\boldsymbol{\theta})$ by $\Pi_{l}$. By using
\[
\Pi_{l}^{-1}\left(  \mathcal{D}^{l}(\mathbb{Z}_{p}^{N})\times\mathcal{D}%
^{l}(\mathbb{Z}_{p}^{N})\right)  =L_{\mathbb{R}}^{2}\left(  \mathbb{Z}_{p}%
^{N}\right)  \times L_{\mathbb{R}}^{2}\left(  \mathbb{Z}_{p}^{N}\right)  ,
\]
and the change of variables formula, see, e.g., \cite[Theorem 1.6.12]{Ash},
\cite[Proposition 1.1]{Da prato}, it follows that $\left(  \Pi_{l}\right)
_{\ast}\left(  \mathbb{P}(\boldsymbol{v},\boldsymbol{h};\boldsymbol{\theta
})\right)  $ is a probability measure. Now, since $\mathbb{P}(\boldsymbol{v}%
,\boldsymbol{h};\boldsymbol{\theta})$ has density ($\frac{\exp\left(
-E(\boldsymbol{v},\boldsymbol{h})\right)  }{\mathcal{Z}^{\left(
\infty\right)  }\left(  \boldsymbol{\theta}\right)  }$) with respect to
$\mathbb{P}_{K_{1}}\left(  \boldsymbol{v}\right)  \otimes\mathbb{P}_{K_{2}%
}\left(  \boldsymbol{h}\right)  $ and the push-forward of this measure is
$\left(  \mathbb{P}_{K_{1}}\otimes\mathbb{P}_{K_{2}}\right)  _{l}$, the change
of variables formula implies that%
\[
\left(  \Pi_{l}\right)  _{\ast}\left(  \mathbb{P}(\boldsymbol{v}%
,\boldsymbol{h};\boldsymbol{\theta})\right)  \sim E^{\cdot}(\boldsymbol{v}%
,\boldsymbol{h};\boldsymbol{\theta})\left(  \mathbb{P}_{K_{1}}\otimes
\mathbb{P}_{K_{2}}\right)  _{l},
\]
and
\begin{align*}
E^{\cdot}(\boldsymbol{v},\boldsymbol{h};\boldsymbol{\theta})  &  =A\exp\left(
-\left(  E_{l}\circ\Pi_{l}\right)  \left(  \boldsymbol{v},\boldsymbol{h}%
\right)  \right)  =A\exp\left(  -E_{l}\left(  \boldsymbol{v}_{l}%
,\boldsymbol{h}_{l}\right)  \right) \\
&  =A\exp\left(  -E\circ\Pi_{l}\left(  \boldsymbol{v},\boldsymbol{h}\right)
\right)  ,
\end{align*}
and thus $A=\mathcal{Z}\left(  \boldsymbol{\theta}_{l}\right)  $.
\end{proof}

\begin{remark}
We set%
\[
E^{\text{kin}}\left(  \boldsymbol{v},\boldsymbol{h}\right)  =\frac{1}{2}\text{
\ }%
{\displaystyle\iint\limits_{\mathbb{Z}_{p}^{N}\times\mathbb{Z}_{p}^{N}}}
\boldsymbol{v}\left(  x\right)  \square_{K}^{-1}\boldsymbol{v}\left(
x\right)  d^{N}x+\frac{1}{2}\text{ \ }%
{\displaystyle\iint\limits_{\mathbb{Z}_{p}^{N}\times\mathbb{Z}_{p}^{N}}}
\boldsymbol{h}\left(  x\right)  \square_{K}^{-1}\boldsymbol{h}\left(
x\right)  d^{N}x.
\]
This is an analog of a classical kinetic energy functional. $E^{\text{kin}%
}\left(  \boldsymbol{v},\boldsymbol{h}\right)  $ is well-defined if
$\boldsymbol{v},\boldsymbol{h}\in\mathcal{D}_{\mathbb{R}}(\mathbb{Z}_{p}^{N}%
)$, see Remark \ref{Note_Inverse_square}. The energy functional $E^{\text{kin}%
}\left(  \boldsymbol{v},\boldsymbol{h}\right)  $ is well-defined in the
subspace%
\[
\mathcal{H}_{K}:=\mathcal{H}_{K}(\mathbb{Z}_{p}^{N})=\left\{  f\in
L_{\mathbb{R}}^{2}\left(  \mathbb{Z}_{p}^{N}\right)  ;%
{\displaystyle\int\limits_{\mathbb{Q}_{p}^{N}}}
\frac{\left\vert \widehat{f}\right\vert ^{2}}{\widehat{K}\left(  \xi\right)
}d^{N}\xi<\infty\right\}  .
\]
The discretization $E_{l}^{\text{kin}}\left(  \boldsymbol{v}_{l}%
,\boldsymbol{h}_{l}\right)  $\ of $E^{\text{kin}}\left(  \boldsymbol{v}%
,\boldsymbol{h}\right)  $ is%
\[
E_{l}^{\text{kin}}\left(  \boldsymbol{v}_{l},\boldsymbol{h}_{l}\right)
=\frac{1}{2}\left[  v_{\boldsymbol{i}}^{l}\right]  _{\boldsymbol{i}\in
G_{l}^{N}}^{T}C_{l,K_{1}}^{-1}\left[  v_{\boldsymbol{i}}^{l}\right]
_{\boldsymbol{i}\in G_{l}^{N}}^{T}+\frac{1}{2}\left[  h_{\boldsymbol{i}}%
^{l}\right]  _{\boldsymbol{i}\in G_{l}^{N}}^{T}C_{l,K_{2}}^{-1}\left[
h_{\boldsymbol{i}}^{l}\right]  _{\boldsymbol{i}\in G_{l}^{N}}^{T}.
\]

\end{remark}

\subsection{$p$-adic discrete DBMS}

Notice that (\ref{EQ_P-l})-(\ref{EQ_P-l-1}) can be rewritten as
\[
\mathbb{P}_{l}(\boldsymbol{v}_{l},\boldsymbol{h}_{l};\boldsymbol{\theta}%
_{l})=\frac{\exp(-E_{l}\left(  \boldsymbol{v}_{l},\boldsymbol{h}%
_{l};\boldsymbol{\theta}_{l}\right)  -E_{l}^{\text{kin}}\left(  \boldsymbol{v}%
_{l},\boldsymbol{h}_{l}\right)  )}{\mathcal{Z}\left(  \boldsymbol{\theta}%
_{l}\right)  }%
{\displaystyle\prod\limits_{_{\boldsymbol{i}\in G_{l}^{N}}}}
dv\left(  \boldsymbol{i}\right)
{\displaystyle\prod\limits_{_{\boldsymbol{i}\in G_{l}^{N}}}}
dh\left(  \boldsymbol{i}\right)  ,
\]%
\[
\mathcal{Z}\left(  \boldsymbol{\theta}_{l}\right)  =%
{\displaystyle\iint\limits_{\mathbb{R}^{\#G_{l}^{N}}\times\mathbb{R}%
^{\#G_{l}^{N}}}}
\exp(-E_{l}\left(  \boldsymbol{v}_{l},\boldsymbol{h}_{l};\boldsymbol{\theta
}_{l}\right)  -E_{l}^{\text{kin}}\left(  \boldsymbol{v}_{l},\boldsymbol{h}%
_{l}\right)  )%
{\displaystyle\prod\limits_{_{\boldsymbol{i}\in G_{l}^{N}}}}
dv\left(  \boldsymbol{i}\right)
{\displaystyle\prod\limits_{_{\boldsymbol{i}\in G_{l}^{N}}}}
dh\left(  \boldsymbol{i}\right)  .
\]

We attach to the energy functional $E_{l}\left(  \boldsymbol{v}_{l}%
,\boldsymbol{h}_{l}\right)  $ a $p$\textit{-adic discrete deep Boltzmann
machine} (DBM), which we identify with a discrete SFT with energy functional
$E_{l}\left(  \boldsymbol{v}_{l},\boldsymbol{h}_{l}\right)  $ and Boltzmann
distribution $\mathbb{P}_{l}(\boldsymbol{v}_{l},\boldsymbol{h}_{l}%
;\boldsymbol{\theta}_{l})$. The random variables $\boldsymbol{v}%
_{l},\boldsymbol{h}_{l}$ are continuous.

In a $p$-adic discrete DBM the visible units, and the hidden units,\ have a
tree-like architecture ($G_{l}^{N}$) as in the classical DBMs. However, only
the vertices (units) located at level $l$ of each tree in $G_{l}^{N}$ carry
states. The other vertices describe the topology of the network. Then $p$-adic
discrete DBMs have less parameters that their classical counterparts. By
changing the functions $a\left(  x\right)  $, $b(x)$, $c(x)$, $d(x)$, $w(x,y)$
several different types of networks are obtained.

A relevant case occurs taking $a(x)$, $b(x)$, $c(x)$, $d(x)\in L_{\mathbb{R}%
}^{\infty}(\mathbb{Z}_{p}^{N})$, and , $w\left(  x,y\right)  =w\left(
x-y\right)  $, with $w$ in $L_{\mathbb{R}}^{\infty}(\mathbb{Z}_{p}^{N})$. In
this case, the discrete energy functional has the form $E_{l}\left(
\boldsymbol{v}_{l},\boldsymbol{h}_{l}\right)  =E_{l}^{\text{free}}\left(
\boldsymbol{v}_{l},\boldsymbol{h}_{l}\right)  +E_{l}^{\text{int}}\left(
\boldsymbol{v}_{l},\boldsymbol{h}_{l}\right)  $, where $E_{l}^{\text{free}%
}\left(  \boldsymbol{v}_{l},\boldsymbol{h}_{l}\right)  $ is given in
(\ref{Energy_free}) and $E_{l}^{\text{int}}\left(  \boldsymbol{v}%
_{l},\boldsymbol{h}_{l}\right)  $ has the form%
\[
E_{l}^{\text{int}}\left(  \boldsymbol{v}_{l},\boldsymbol{h}_{l}\right)  =-%
{\displaystyle\sum\limits_{\boldsymbol{j}\in G_{l}^{N}}}
\text{ }%
{\displaystyle\sum\limits_{\boldsymbol{k}\in G_{l}^{N}}}
w_{k}v_{j+k}h_{j}+%
{\displaystyle\sum\limits_{\boldsymbol{i}\in G_{l}^{N}}}
c_{\boldsymbol{i}}v_{\boldsymbol{i}}^{4}+%
{\displaystyle\sum\limits_{\boldsymbol{i}\in G_{l}^{N}}}
d_{\boldsymbol{i}}h_{\boldsymbol{i}}^{4}.
\]
Since $G_{l}^{N}$ is an additive product group, the energy functional $E_{l}$
has translational symmetry, i.e., $E_{l}$ invariant under the transformations
$\boldsymbol{j}\rightarrow\boldsymbol{j}+\boldsymbol{j}_{0}$, $\boldsymbol{k}%
\rightarrow\boldsymbol{k}+\boldsymbol{k}_{0}$, for any $\boldsymbol{j}_{0}$,
$\boldsymbol{k}_{0}\in G_{l}^{N}$.

\section{\label{Section_7}Correlation functions and continuous spin glasses}

\subsection{Correlation functions}

Along this section, we assume that the fields are from $\boldsymbol{B}%
_{M}^{\left(  2\right)  }$. The information about the local properties of the
system is contained in the \textit{correlation functions }$\boldsymbol{G}%
^{\left(  m+n\right)  }\left(  x,y\right)  $, $x=\left(  x_{1},\ldots
,x_{m}\right)  \in\mathbb{Z}_{p}^{Nm}$, $y=\left(  y_{1},\ldots,y_{n}\right)
\in\mathbb{Z}_{p}^{Nn}$ of the field $\left\{  \boldsymbol{v},\boldsymbol{h}%
\right\}  $:
\begin{gather}
\boldsymbol{G}^{\left(  m+n\right)  }\left(  x,y\right)  :=\left\langle
{\displaystyle\prod\limits_{i=1}^{m}}
\boldsymbol{v}\left(  x_{i}\right)  \text{ }%
{\displaystyle\prod\limits_{j=1}^{n}}
\boldsymbol{h}\left(  y_{j}\right)  \right\rangle \label{correlation_Function}%
\\
=\frac{1}{\mathcal{Z}^{\left(  2\right)  }\left(  \boldsymbol{\theta}\right)
}%
{\displaystyle\iint\limits_{\boldsymbol{B}_{M}^{\left(  2\right)  }%
\times\boldsymbol{B}_{M}^{\left(  2\right)  }}}
\text{ }\left\{
{\displaystyle\prod\limits_{i=1}^{m}}
\boldsymbol{v}\left(  x_{i}\right)  \text{ }%
{\displaystyle\prod\limits_{j=1}^{n}}
\boldsymbol{h}\left(  y_{j}\right)  \right\}  \text{ }e^{-E(\boldsymbol{v}%
,\boldsymbol{h})}d\mathbb{P}_{K_{1}}\left(  \boldsymbol{v}\right)  \otimes
d\mathbb{P}_{K_{2}}\left(  \boldsymbol{h}\right)  .\nonumber
\end{gather}
These functions are also called \textit{the }$\left(  m+n\right)
$\textit{-point Green functions}.

\begin{remark}
If the fields are from $\boldsymbol{B}_{M}^{\left(  \infty\right)  }$, the
correlation functions are defined as in (\ref{correlation_Function}) replacing
$\mathcal{Z}^{\left(  2\right)  }\left(  \boldsymbol{\theta}\right)  $ by
$\mathcal{Z}^{\left(  \infty\right)  }\left(  \boldsymbol{\theta}\right)  $,
and $\boldsymbol{B}_{M}^{\left(  2\right)  }\times\boldsymbol{B}_{M}^{\left(
2\right)  }$ by $\boldsymbol{B}_{M}^{\left(  \infty\right)  }\times
\boldsymbol{B}_{M}^{\left(  \infty\right)  }$.
\end{remark}

\begin{notation}
From now on, we adopt the convention $%
{\textstyle\prod\nolimits_{i=0}^{0}}
\boldsymbol{\cdot}=1$, or $%
{\textstyle\prod\nolimits_{i\in\emptyset}}
\boldsymbol{\cdot}=1$.
\end{notation}

To study of these functions, one introduces two auxiliary external fields
$J_{1},$ $J_{2}$ called \textit{currents, }and adds to the energy functional
$E$ as a linear interaction energy of these currents with the field $\left\{
\boldsymbol{v},\boldsymbol{h}\right\}  $,%
\[
E_{\text{sour}}(\boldsymbol{v},\boldsymbol{h},J_{1},J_{2})=-\left\langle
J_{1},\boldsymbol{v}\right\rangle -\left\langle J_{2},\boldsymbol{h}%
\right\rangle ,
\]
now the energy functional is $E(\boldsymbol{v},\boldsymbol{h},J_{1}%
,J_{2})=E\left(  \boldsymbol{v},\boldsymbol{h}\right)  +E_{\text{sour}%
}(\boldsymbol{v},\boldsymbol{h},J_{1},J_{2})$. The partition function formed
with this energy is%
\[
Z(J_{1},J_{2})=\frac{1}{\mathcal{Z}^{\left(  2\right)  }}%
{\displaystyle\iint\limits_{\boldsymbol{B}_{M}^{\left(  2\right)  }%
\times\boldsymbol{B}_{M}^{\left(  2\right)  }}}
\text{ }e^{-E(\boldsymbol{v},\boldsymbol{h},J_{1},J_{2})}d\mathbb{P}_{K_{1}%
}\left(  \boldsymbol{v}\right)  \otimes d\mathbb{P}_{K_{2}}\left(
\boldsymbol{h}\right)  .
\]

\begin{remark}
\label{Nota_Zeta}If the fields are from $\boldsymbol{B}_{M}^{\left(
\infty\right)  }$, the generating functions has the form%
\[
Z(J_{1},J_{2})=\frac{1}{\mathcal{Z}^{\left(  \infty\right)  }}%
{\displaystyle\iint\limits_{\boldsymbol{B}_{M}^{\left(  \infty\right)  }%
\times\boldsymbol{B}_{M}^{\left(  \infty\right)  }}}
\text{ }e^{-E(\boldsymbol{v},\boldsymbol{h},J_{1},J_{2})}d\mathbb{P}_{K_{1}%
}\left(  \boldsymbol{v}\right)  \otimes d\mathbb{P}_{K_{2}}\left(
\boldsymbol{h}\right)  .
\]

\end{remark}

\begin{lemma}
$\boldsymbol{G}^{\left(  m+n\right)  }\left(  x,y\right)  \in\mathcal{D}%
_{\mathbb{R}}^{\prime}\left(  \mathbb{Z}_{p}^{Nm}\times\mathbb{Z}_{p}%
^{Nn}\right)  $.
\end{lemma}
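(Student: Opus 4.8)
The plan is to interpret (\ref{correlation_Function}) not as a pointwise function of $(x,y)$---which is meaningless because the realizations $\boldsymbol{v},\boldsymbol{h}\in L_{\mathbb{R}}^{2}(\mathbb{Z}_{p}^{N})$ are only defined up to null sets---but as a recipe for a linear functional on $\mathcal{D}_{\mathbb{R}}(\mathbb{Z}_{p}^{Nm}\times\mathbb{Z}_{p}^{Nn})$. For a test function $\varphi$ I would \emph{define}
\[
\left(\boldsymbol{G}^{(m+n)},\varphi\right):=\frac{1}{\mathcal{Z}^{(2)}(\boldsymbol{\theta})}\iint_{\boldsymbol{B}_{M}^{(2)}\times\boldsymbol{B}_{M}^{(2)}}I(\boldsymbol{v},\boldsymbol{h};\varphi)\,e^{-E(\boldsymbol{v},\boldsymbol{h})}\,d\mathbb{P}_{K_{1}}(\boldsymbol{v})\otimes d\mathbb{P}_{K_{2}}(\boldsymbol{h}),
\]
where $I(\boldsymbol{v},\boldsymbol{h};\varphi)$ is the smeared product
\[
I(\boldsymbol{v},\boldsymbol{h};\varphi):=\iint_{\mathbb{Z}_{p}^{Nm}\times\mathbb{Z}_{p}^{Nn}}\Bigl(\prod_{i=1}^{m}\boldsymbol{v}(x_{i})\Bigr)\Bigl(\prod_{j=1}^{n}\boldsymbol{h}(y_{j})\Bigr)\varphi(x,y)\,d^{Nm}x\,d^{Nn}y,
\]
the quantity produced by pairing (\ref{correlation_Function}) against $\varphi$ and formally moving the $(x,y)$-integration inside. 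The whole content of the lemma is then that this assignment is finite, linear, and continuous.

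For measurability and finiteness I would exploit that every test function is a finite linear combination of characteristic functions of balls, and that a ball in $\mathbb{Z}_{p}^{N(m+n)}$ is a product of balls in the factors; hence $I(\boldsymbol{v},\boldsymbol{h};\varphi)$ is a finite linear combination of products of terms $\langle\boldsymbol{v},1_{B}\rangle$ and $\langle\boldsymbol{h},1_{B'}\rangle$, each a continuous (so Borel) linear functional of $\boldsymbol{v}$ or $\boldsymbol{h}$. The key quantitative input is a uniform bound on $I$: since $\varphi$ is bounded and the domain factorizes,
\[
\left|I(\boldsymbol{v},\boldsymbol{h};\varphi)\right|\le\|\varphi\|_{\infty}\,\|\boldsymbol{v}\|_{1}^{m}\,\|\boldsymbol{h}\|_{1}^{n},
\]
and because $\int_{\mathbb{Z}_{p}^{N}}d^{N}x=1$ the H\"{o}lder inequality gives $\|f\|_{1}\le\|f\|_{2}$, so on $\boldsymbol{B}_{M}^{(2)}\times\boldsymbol{B}_{M}^{(2)}$ one has $|I(\boldsymbol{v},\boldsymbol{h};\varphi)|\le M^{m+n}\|\varphi\|_{\infty}$. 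Invoking the estimate $|{-E(\boldsymbol{v},\boldsymbol{h})}|\le C_{1}(a,b,M)$ of Lemma \ref{Lemma5} (the $\boldsymbol{B}_{M}^{(2)}$ case, for which the standing assumption of this section and condition (\ref{additional condition}) are exactly what is needed) and using that $\mathbb{P}_{K_{1}}\otimes\mathbb{P}_{K_{2}}$ is a probability measure, I obtain
\[
\left|\left(\boldsymbol{G}^{(m+n)},\varphi\right)\right|\le\frac{e^{C_{1}(a,b,M)}M^{m+n}}{\mathcal{Z}^{(2)}(\boldsymbol{\theta})}\,\|\varphi\|_{\infty}.
\]
Linearity in $\varphi$ is clear, and since convergence to zero in $\mathcal{D}_{\mathbb{R}}(\mathbb{Z}_{p}^{N(m+n)})$ entails uniform convergence, this bound forces $(\boldsymbol{G}^{(m+n)},\varphi_{k})\to0$ whenever $\varphi_{k}\to0$; thus $\boldsymbol{G}^{(m+n)}\in\mathcal{D}_{\mathbb{R}}^{\prime}(\mathbb{Z}_{p}^{Nm}\times\mathbb{Z}_{p}^{Nn})$.

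The step I expect to demand genuine care is the interchange of the $(x,y)$-integration with the functional integration over $\boldsymbol{B}_{M}^{(2)}\times\boldsymbol{B}_{M}^{(2)}$ implicit in passing from (\ref{correlation_Function}) to the pairing above. To legitimize it I would apply Fubini--Tonelli on the product of $(\mathbb{Z}_{p}^{N(m+n)},d^{N(m+n)}x)$ with the finite measure space $(\boldsymbol{B}_{M}^{(2)}\times\boldsymbol{B}_{M}^{(2)},\mathbb{P}_{K_{1}}\otimes\mathbb{P}_{K_{2}})$: the joint measurability of $(\boldsymbol{v},\boldsymbol{h},x,y)\mapsto\prod_{i}\boldsymbol{v}(x_{i})\prod_{j}\boldsymbol{h}(y_{j})\varphi(x,y)e^{-E(\boldsymbol{v},\boldsymbol{h})}$ follows from the factorized structure noted above, and its absolute integrability follows from exactly the computation that produced the bound on $I$, the double integral of the absolute value being at most $M^{m+n}e^{C_{1}(a,b,M)}\|\varphi\|_{\infty}<\infty$. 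Once Tonelli provides this finiteness, Fubini guarantees that the iterated integral is independent of the order and equals the pairing, completing the proof.
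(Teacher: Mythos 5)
Your proof is correct and follows essentially the same route as the paper: define $\boldsymbol{G}^{(m+n)}$ by smearing against test functions, bound the integrand uniformly via Lemma \ref{Lemma5} together with $\left\Vert \boldsymbol{v}\right\Vert _{1}\leq\left\Vert \boldsymbol{v}\right\Vert _{2}\leq M$, and invoke Fubini to identify the result with the pairing. The only (cosmetic) differences are that the paper tests against products $\prod_{i}\phi_{i}(x_{i})\prod_{j}\theta_{j}(y_{j})$ and uses Cauchy--Schwarz, whereas you handle a general test function directly via its decomposition into characteristic functions of balls and the $L^{1}$ bound, which also yields an explicit continuity estimate $\left\vert \left(  \boldsymbol{G}^{(m+n)},\varphi\right)  \right\vert \leq C\left\Vert \varphi\right\Vert _{\infty}$ (not strictly needed, since every linear functional on $\mathcal{D}$ is automatically continuous).
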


\begin{proof}
Given $\phi_{i}$, $1\leq i\leq m$ and $\theta_{j}$, $1\leq j\leq n$, the
integral
\[
\mathcal{I}=\frac{1}{\mathcal{Z}^{\left(  2\right)  }}\text{\ }%
{\displaystyle\iint\limits_{\boldsymbol{B}_{M}^{\left(  2\right)  }%
\times\boldsymbol{B}_{M}^{\left(  2\right)  }}}
e^{-E\left(  \boldsymbol{v},\boldsymbol{h}\right)  }%
{\textstyle\prod\limits_{i=1}^{m}}
\left\langle \phi_{i},\boldsymbol{v}\right\rangle
{\textstyle\prod\limits_{j=1}^{n}}
\left\langle \theta_{i},\boldsymbol{h}\right\rangle d\mathbb{P}_{K_{1}}\left(
\boldsymbol{v}\right)  \otimes d\mathbb{P}_{K_{2}}\left(  \boldsymbol{h}%
\right)
\]
converges. Indeed, $e^{-E\left(  \boldsymbol{v},\boldsymbol{h}\right)  }$ is
bounded by a constant depending on $M$, see Lemma \ref{Lemma5}, \ and by the
Cauchy-Schwarz inequality,%
\[%
{\textstyle\prod\nolimits_{i=1}^{m}}
\left\vert \left\langle \phi_{i},\boldsymbol{v}\right\rangle \right\vert \leq
M^{m}%
{\textstyle\prod\nolimits_{i=1}^{m}}
\left\Vert \phi_{i}\right\Vert _{2}\text{, and }%
{\textstyle\prod\nolimits_{i=1}^{m}}
\left\vert \left\langle \theta_{i},\boldsymbol{h}\right\rangle \right\vert
\leq M^{n}%
{\textstyle\prod\nolimits_{i=1}^{m}}
\left\Vert \phi_{i}\right\Vert _{2}\text{.}%
\]
Consequently $\left\vert \mathcal{I}\right\vert <\infty$. Now, by Fubini's
theorem,%
\begin{multline*}
\mathcal{I}=%
{\displaystyle\idotsint\limits_{\mathbb{Z}_{p}^{Nm}\times\mathbb{Z}_{p}^{Nn}}}
\text{ }%
{\textstyle\prod\limits_{i=1}^{m}}
\phi_{i}\left(  x_{i}\right)
{\textstyle\prod\limits_{j=1}^{n}}
\theta_{j}\left(  y_{j}\right)  \times\\
\left\{  \frac{1}{\mathcal{Z}^{\left(  2\right)  }}\text{\ }%
{\displaystyle\iint\limits_{\boldsymbol{B}_{M}^{\left(  2\right)  }%
\times\boldsymbol{B}_{M}^{\left(  2\right)  }}}
e^{-E\left(  \boldsymbol{v},\boldsymbol{h}\right)  }%
{\textstyle\prod\limits_{i=1}^{m}}
\boldsymbol{v}\left(  x_{i}\right)
{\textstyle\prod\limits_{j=1}^{n}}
\boldsymbol{h}\left(  y_{j}\right)  d\mathbb{P}_{K_{1}}\left(  \boldsymbol{v}%
\right)  \otimes d\mathbb{P}_{K_{2}}\left(  \boldsymbol{h}\right)  \right\}
\times\\%
{\textstyle\prod\limits_{i=1}^{m}}
d^{N}\left(  x_{i}\right)
{\textstyle\prod\limits_{j=1}^{n}}
d^{N}\left(  y_{j}\right)  .
\end{multline*}
Therefore, the functional%
\[
\left(  \left\{  \phi_{i};1\leq i\leq m\right\}  ,\left\{  \theta_{j};1\leq
j\leq n\right\}  \right)  \rightarrow\mathcal{I}%
\]
is a well-defined element from $\mathcal{D}_{\mathbb{R}}^{\prime}\left(
\mathbb{Z}_{p}^{Nm}\times\mathbb{Z}_{p}^{Nn}\right)  $.
\end{proof}

\begin{definition}
For $\theta\in\mathcal{D}_{\mathbb{R}}\left(  \mathbb{Z}_{p}^{N}\right)  $,
the functional derivative ${\Huge \partial}_{\theta}^{\left(  1\right)
}Z(J_{1},J_{2})$ of $Z(J_{1},J_{2})$ is defined as
\[
{\Huge \partial}_{\theta}^{\left(  1\right)  }Z(J_{1},J_{2})=\lim
_{\epsilon\rightarrow0}\frac{Z(J_{1}+\epsilon\theta,J_{2})-Z(J_{1},J_{2}%
)}{\epsilon}=\left[  \frac{d}{d\epsilon}Z(J_{1}+\epsilon\theta,J_{2})\right]
_{\epsilon=0}.
\]
The functional derivative ${\Huge \partial}_{\theta}^{\left(  2\right)
}Z(J_{1},J_{2})$ of $Z(J_{1},J_{2})$ is defined as
\[
{\Huge \partial}_{\theta}^{\left(  2\right)  }Z(J_{1},J_{2})=\lim
_{\epsilon\rightarrow0}\frac{Z(J_{1},J_{2}+\epsilon\theta)-Z(J_{1},J_{2}%
)}{\epsilon}=\left[  \frac{d}{d\epsilon}Z(J_{1},J_{2}+\epsilon\theta)\right]
_{\epsilon=0}.
\]

\end{definition}

\begin{lemma}
[{\cite[Lemma 6.4]{Zuniga-RMP-2022}}]\label{Lemma18}Let $\phi_{1}$%
,\ldots,$\phi_{m}$, $\theta_{1}$,\ldots,$\theta_{n}$, in $\mathcal{D}%
_{\mathbb{R}}\left(  \mathbb{Z}_{p}^{N}\right)  $. The functional derivative $%
{\textstyle\prod\nolimits_{i=1}^{n}}
{\Huge \partial}_{\theta_{i}}^{\left(  2\right)  }$ $%
{\textstyle\prod\nolimits_{i=1}^{m}}
{\Huge \partial}_{\phi_{i}}^{\left(  1\right)  }Z(J_{1},J_{2})$ exists, and
the following formula holds true:
\begin{gather*}%
{\textstyle\prod\nolimits_{i=1}^{n}}
{\Huge \partial}_{\theta_{i}}^{\left(  2\right)  }%
{\textstyle\prod\nolimits_{i=1}^{m}}
{\Huge \partial}_{\phi_{i}}^{\left(  1\right)  }Z(J_{1},J_{2})=\\
\frac{1}{\mathcal{Z}^{\left(  2\right)  }}\text{ \ }%
{\displaystyle\iint\limits_{\boldsymbol{B}_{M}^{\left(  2\right)  }%
\times\boldsymbol{B}_{M}^{\left(  2\right)  }}}
e^{-E\left(  \boldsymbol{v},\boldsymbol{h}\right)  +\left\langle
J_{1},\boldsymbol{v}\right\rangle +\left\langle J_{2},\boldsymbol{h}%
\right\rangle }%
{\textstyle\prod\nolimits_{i=1}^{n}}
\left\langle \theta_{i},\boldsymbol{h}\right\rangle \text{ }%
{\textstyle\prod\nolimits_{i=1}^{m}}
\left\langle \phi_{i},\boldsymbol{v}\right\rangle \text{ }d\mathbb{P}_{K_{1}%
}\left(  \boldsymbol{v}\right)  \otimes d\mathbb{P}_{K_{2}}\left(
\boldsymbol{h}\right)  .
\end{gather*}
Furthermore, the functional derivative $%
{\textstyle\prod\nolimits_{i=1}^{n}}
{\Huge \partial}_{\theta_{i}}^{\left(  2\right)  }%
{\textstyle\prod\nolimits_{i=1}^{m}}
{\Huge \partial}_{\phi_{i}}^{\left(  1\right)  }Z(J_{1},J_{2})$ can be
uniquely identified with the distribution%
\begin{gather}
\left(  \left\{  \phi_{i};1\leq i\leq m\right\}  ,\left\{  \theta_{j};1\leq
j\leq n\right\}  \right)  \rightarrow\frac{1}{\mathcal{Z}^{\left(  2\right)
}}\text{ \ }%
{\displaystyle\idotsint\limits_{\mathbb{Z}_{p}^{Nm}\times\mathbb{Z}_{p}^{Nn}}}
\text{ }%
{\textstyle\prod\limits_{i=1}^{n}}
\theta_{i}\left(  y_{i}\right)
{\textstyle\prod\limits_{i=1}^{m}}
\phi_{i}\left(  x_{i}\right)  \times\nonumber\\
\left\{  \text{ }%
{\displaystyle\iint\limits_{\boldsymbol{B}_{M}^{\left(  2\right)  }%
\times\boldsymbol{B}_{M}^{\left(  2\right)  }}}
e^{-E\left(  \boldsymbol{v},\boldsymbol{h}\right)  +E_{\text{sour}%
}(\boldsymbol{v},\boldsymbol{h},J_{1},J_{2})}%
{\textstyle\prod\limits_{i=1}^{n}}
\boldsymbol{h}\left(  y_{i}\right)
{\textstyle\prod\limits_{i=1}^{m}}
\boldsymbol{v}\left(  x_{i}\right)  d\mathbb{P}_{K_{1}}\left(  \boldsymbol{v}%
\right)  \otimes d\mathbb{P}_{K_{2}}\left(  \boldsymbol{h}\right)  \right\}
\times\nonumber\\%
{\textstyle\prod\limits_{i=1}^{m}}
d^{N}x_{i}%
{\textstyle\prod\limits_{i=1}^{n}}
d^{N}y_{i} \label{Eq_31}%
\end{gather}
from $\mathcal{D}_{\mathbb{R}}^{\prime}\left(  \mathbb{Z}_{p}^{Nm}%
\times\mathbb{Z}_{p}^{Nn}\right)  $.
\end{lemma}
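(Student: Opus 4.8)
The plan is to prove the statement by induction on the total order $m+n$ of differentiation, justifying the interchange of the $\epsilon$-limit with the field integral at each step by dominated convergence, and then to identify the resulting multilinear functional with the distribution (\ref{Eq_31}) by Fubini's theorem, exactly as in the lemma that showed $\boldsymbol{G}^{\left(  m+n\right)  }\in\mathcal{D}_{\mathbb{R}}^{\prime}$.

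First I would treat a single derivative, say ${\Huge \partial}_{\phi}^{\left(  1\right)  }$ with $\phi\in\mathcal{D}_{\mathbb{R}}(\mathbb{Z}_{p}^{N})$. Replacing $J_{1}$ by $J_{1}+\epsilon\phi$ only changes the source term $\left\langle J_{1},\boldsymbol{v}\right\rangle$ into $\left\langle J_{1},\boldsymbol{v}\right\rangle +\epsilon\left\langle \phi,\boldsymbol{v}\right\rangle$, so the difference quotient is
\[
\frac{Z(J_{1}+\epsilon\phi,J_{2})-Z(J_{1},J_{2})}{\epsilon}=\frac{1}{\mathcal{Z}^{\left(  2\right)  }}\iint\limits_{\boldsymbol{B}_{M}^{\left(  2\right)  }\times\boldsymbol{B}_{M}^{\left(  2\right)  }}e^{-E(\boldsymbol{v},\boldsymbol{h})+\left\langle J_{1},\boldsymbol{v}\right\rangle +\left\langle J_{2},\boldsymbol{h}\right\rangle }\,\frac{e^{\epsilon\left\langle \phi,\boldsymbol{v}\right\rangle }-1}{\epsilon}\,d\mathbb{P}_{K_{1}}\otimes d\mathbb{P}_{K_{2}}.
\]
On $\boldsymbol{B}_{M}^{\left(  2\right)  }\times\boldsymbol{B}_{M}^{\left(  2\right)  }$ the Cauchy--Schwarz inequality gives $\left\vert \left\langle \phi,\boldsymbol{v}\right\rangle \right\vert \leq M\left\Vert \phi\right\Vert _{2}$, and Lemma \ref{Lemma5} together with $\left\vert \left\langle J_{1},\boldsymbol{v}\right\rangle \right\vert +\left\vert \left\langle J_{2},\boldsymbol{h}\right\rangle \right\vert \leq M\left(  \left\Vert J_{1}\right\Vert _{2}+\left\Vert J_{2}\right\Vert _{2}\right)$ bounds the exponential prefactor by a constant. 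Using $\bigl\vert\tfrac{e^{\epsilon a}-1}{\epsilon}\bigr\vert\leq\left\vert a\right\vert e^{\left\vert \epsilon\right\vert \left\vert a\right\vert }$ with $a=\left\langle \phi,\boldsymbol{v}\right\rangle$, the integrand is dominated for $\left\vert \epsilon\right\vert \leq1$ by a constant multiple of $M\left\Vert \phi\right\Vert _{2}e^{M\left\Vert \phi\right\Vert _{2}}$, which is integrable against the finite product measure. Since $\tfrac{e^{\epsilon\left\langle \phi,\boldsymbol{v}\right\rangle }-1}{\epsilon}\to\left\langle \phi,\boldsymbol{v}\right\rangle$ pointwise, dominated convergence yields the existence of ${\Huge \partial}_{\phi}^{\left(  1\right)  }Z$ and the announced formula.

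For the inductive step I would observe that after $k$ differentiations the integrand has the form $e^{-E(\boldsymbol{v},\boldsymbol{h})+\left\langle J_{1},\boldsymbol{v}\right\rangle +\left\langle J_{2},\boldsymbol{h}\right\rangle }$ times a product of pairings $\left\langle \phi_{i},\boldsymbol{v}\right\rangle$, $\left\langle \theta_{j},\boldsymbol{h}\right\rangle$, each bounded in absolute value by $M\left\Vert \phi_{i}\right\Vert _{2}$ or $M\left\Vert \theta_{j}\right\Vert _{2}$ on the domain, so that the whole integrand is bounded. Applying the next ${\Huge \partial}_{\phi_{k+1}}^{\left(  1\right)  }$ (respectively ${\Huge \partial}_{\theta_{k+1}}^{\left(  2\right)  }$) amounts to multiplying the exponential by $e^{\epsilon\left\langle \phi_{k+1},\boldsymbol{v}\right\rangle }$ and forming the difference quotient; the same domination applies verbatim, so the derivative exists and brings down the extra factor $\left\langle \phi_{k+1},\boldsymbol{v}\right\rangle$. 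This closes the induction and establishes the displayed formula. To identify the result with a distribution, I would then expand each pairing as $\left\langle \phi_{i},\boldsymbol{v}\right\rangle =\int_{\mathbb{Z}_{p}^{N}}\phi_{i}(x_{i})\boldsymbol{v}(x_{i})\,d^{N}x_{i}$ (and likewise for the $\theta_{j}$) and interchange the field integral with the integrals over $\mathbb{Z}_{p}^{Nm}\times\mathbb{Z}_{p}^{Nn}$ by Fubini's theorem, which is licit since the integrand is bounded, the test functions have compact support, and $\mathbb{P}_{K_{1}}\otimes\mathbb{P}_{K_{2}}$ is a finite measure. This produces exactly (\ref{Eq_31}) and exhibits $\left(  \left\{  \phi_{i}\right\}  ,\left\{  \theta_{j}\right\}  \right)\mapsto\prod{\Huge \partial}Z$ as a multilinear functional, which as before defines an element of $\mathcal{D}_{\mathbb{R}}^{\prime}(\mathbb{Z}_{p}^{Nm}\times\mathbb{Z}_{p}^{Nn})$. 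The only delicate point is the uniform domination needed to differentiate under the integral sign; this hinges entirely on the cutoff $M$ confining the fields to $\boldsymbol{B}_{M}^{\left(  2\right)  }$, without which the source exponentials and the pairings would be unbounded. The $p$-adic structure plays no role here: the argument is ordinary real analysis over a finite measure space.
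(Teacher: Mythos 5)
Your proposal is correct and follows essentially the same route the paper relies on: the paper only cites this lemma from \cite[Lemma 6.4]{Zuniga-RMP-2022}, but the technique it uses in the analogous Lemma \ref{Lemma8} is exactly your argument — difference quotient, the mean value theorem bound $\bigl\vert\frac{e^{\epsilon a}-1}{\epsilon}\bigr\vert\leq\vert a\vert e^{\vert\epsilon\vert\vert a\vert}$ with the cutoff $M$ and Cauchy--Schwarz supplying the dominating function, dominated convergence, and then Fubini to read off the distributional kernel. Your identification of the cutoff $M$ as the essential ingredient making the domination uniform is also the right emphasis.
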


In an alternative way, the functional derivative $\frac{\delta}{\delta
_{1}J_{1}\left(  y\right)  }Z(J_{1},J_{2})$ \ can be defined as the
distribution from $\mathcal{D}_{\mathbb{R}}^{\prime}\left(  \mathbb{Z}_{p}%
^{N}\right)  $ satisfying%
\[%
{\textstyle\int\limits_{\mathbb{Z}_{p}^{N}}}
\theta\left(  y\right)  \left(  \frac{\delta}{\delta J\left(  y\right)
}Z(J_{1},J_{2})\right)  \left(  y\right)  d^{N}y=\left[  \frac{d}{d\epsilon
}Z(J_{1}+\epsilon\theta,J_{2})\right]  _{\epsilon=0}.
\]
The distribution attached to $\frac{\delta}{\delta_{2}J_{2}\left(  y\right)
}Z(J_{1},J_{2})$ is defined in a similar way. Using this notation and formula
(\ref{Eq_31}), one gets%
\begin{gather*}%
{\displaystyle\prod\limits_{i=1}^{n}}
\frac{\delta}{\delta_{2}J_{2}\left(  y_{m}\right)  }%
{\displaystyle\prod\limits_{i=1}^{m}}
\frac{\delta}{\delta_{1}J_{1}\left(  x_{i}\right)  }Z(J_{1},J_{2})=\\
\frac{1}{\mathcal{Z}^{\left(  2\right)  }}\text{ \ }%
{\displaystyle\iint\limits_{\boldsymbol{B}_{M}^{\left(  2\right)  }%
\times\boldsymbol{B}_{M}^{\left(  2\right)  }}}
e^{-E\left(  \boldsymbol{v},\boldsymbol{h}\right)  +\left\langle
J_{1},\boldsymbol{v}\right\rangle +\left\langle J_{2},\boldsymbol{h}%
\right\rangle }%
{\textstyle\prod\limits_{i=1}^{n}}
\boldsymbol{h}\left(  y_{i}\right)
{\textstyle\prod\limits_{i=1}^{m}}
\boldsymbol{v}\left(  x_{i}\right)  d\mathbb{P}_{K_{1}}\left(  \boldsymbol{v}%
\right)  \otimes d\mathbb{P}_{K_{2}}\left(  \boldsymbol{h}\right)
\end{gather*}
as a distribution from $\mathcal{D}_{\mathbb{R}}^{\prime}\left(
\mathbb{Z}_{p}^{Nm}\times\mathbb{Z}_{p}^{Nn}\right)  $. In conclusion, we have
the following formula:

\begin{lemma}
\label{Lemma6A}The functional derivatives of $Z(J_{1},J_{2})$ with respect to
$J_{1}$, $J_{2}$ evaluated at $J_{0}=0$, $J_{1}=0$\ give the correlation
functions of the system:%
\[
\boldsymbol{G}^{\left(  m+n\right)  }\left(  x,y\right)  =\left[
{\textstyle\prod\limits_{i=1}^{n}}
\frac{\delta}{\delta_{2}J_{2}\left(  y_{i}\right)  }\text{ }%
{\textstyle\prod\limits_{i=1}^{m}}
\frac{\delta}{\delta_{1}J_{1}\left(  x_{i}\right)  }Z(J_{1},J_{2})\right]
_{\substack{J_{0}=0\\J_{1}=0}},
\]
for $x=\left(  x_{1},\ldots,x_{m}\right)  \in\mathbb{Z}_{p}^{Nm}$, $y=\left(
y_{1},\ldots,y_{n}\right)  \in\mathbb{Z}_{p}^{Nn}$.
\end{lemma}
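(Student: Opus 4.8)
The plan is to deduce this directly from Lemma \ref{Lemma18}, which already computes the iterated functional derivatives of $Z(J_{1},J_{2})$ and identifies them with the distribution in \eqref{Eq_31}. The only remaining work is to translate between the two notations for functional derivatives and then specialize to vanishing currents. First I would recall that the ``directional'' derivative $\partial_{\theta}^{(1)}Z$ and the ``kernel'' derivative $\frac{\delta}{\delta_{1}J_{1}(x)}Z$ are two descriptions of the same distribution: by the alternative definition given just before the statement, the kernel $\frac{\delta}{\delta_{1}J_{1}(x)}Z(J_{1},J_{2})$ is precisely the element of $\mathcal{D}_{\mathbb{R}}^{\prime}(\mathbb{Z}_{p}^{N})$ whose pairing with a test function $\theta$ reproduces $\partial_{\theta}^{(1)}Z$. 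Consequently, the iterated kernel derivative $\prod_{i=1}^{n}\frac{\delta}{\delta_{2}J_{2}(y_{i})}\prod_{i=1}^{m}\frac{\delta}{\delta_{1}J_{1}(x_{i})}Z$ is exactly the distribution in $\mathcal{D}_{\mathbb{R}}^{\prime}(\mathbb{Z}_{p}^{Nm}\times\mathbb{Z}_{p}^{Nn})$ furnished by Lemma \ref{Lemma18}, now viewed as a kernel in the variables $(x,y)$ rather than tested against $\{\phi_{i}\},\{\theta_{j}\}$.

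Next I would set $J_{1}=J_{2}=0$. At zero currents the source energy $E_{\mathrm{sour}}(\boldsymbol{v},\boldsymbol{h},J_{1},J_{2})=-\langle J_{1},\boldsymbol{v}\rangle-\langle J_{2},\boldsymbol{h}\rangle$ vanishes, so the factor $e^{-E(\boldsymbol{v},\boldsymbol{h})+\langle J_{1},\boldsymbol{v}\rangle+\langle J_{2},\boldsymbol{h}\rangle}$ appearing in \eqref{Eq_31} reduces to $e^{-E(\boldsymbol{v},\boldsymbol{h})}$. The inner integral in \eqref{Eq_31} then becomes
\[
\frac{1}{\mathcal{Z}^{(2)}}\iint\limits_{\boldsymbol{B}_{M}^{(2)}\times\boldsymbol{B}_{M}^{(2)}}e^{-E(\boldsymbol{v},\boldsymbol{h})}\prod_{i=1}^{m}\boldsymbol{v}(x_{i})\prod_{j=1}^{n}\boldsymbol{h}(y_{j})\,d\mathbb{P}_{K_{1}}(\boldsymbol{v})\otimes d\mathbb{P}_{K_{2}}(\boldsymbol{h}),
\]
which is exactly the integral representation of $\boldsymbol{G}^{(m+n)}(x,y)$ from \eqref{correlation_Function}. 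Thus the kernel of the iterated functional derivative evaluated at $J_{1}=J_{2}=0$ coincides, as a distribution in $(x,y)$, with the correlation function, which is the asserted identity.

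The one point requiring care—and the only genuine obstacle—is justifying that the limit $\epsilon\to0$ defining each functional derivative commutes with the two field integrals and with the order of differentiation, so that the evaluation at $J=0$ may be carried out after forming the distribution. This is precisely what Lemma \ref{Lemma18} supplies: its proof already establishes existence of the iterated derivative and the identification with \eqref{Eq_31}, using that $e^{-E(\boldsymbol{v},\boldsymbol{h})+\langle J_{1},\boldsymbol{v}\rangle+\langle J_{2},\boldsymbol{h}\rangle}$ is uniformly bounded on $\boldsymbol{B}_{M}^{(2)}\times\boldsymbol{B}_{M}^{(2)}$ (Lemma \ref{Lemma5}) together with the Cauchy--Schwarz estimates $|\langle\phi_{i},\boldsymbol{v}\rangle|\le M\|\phi_{i}\|_{2}$, so that dominated convergence applies. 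Since $J_{1}=J_{2}=0$ is an interior point of the admissible current space, no boundary subtlety arises, and I would simply invoke Lemma \ref{Lemma18} and set the currents to zero, appealing once more to the uniform bound of Lemma \ref{Lemma5} to see that the value at $J=0$ is obtained by continuity from the formula. This completes the argument.
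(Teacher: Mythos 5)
Your proposal is correct and follows essentially the same route as the paper: the paper likewise presents Lemma \ref{Lemma6A} as an immediate consequence of Lemma \ref{Lemma18}, after rewriting the directional derivatives $\partial_{\theta}^{(i)}$ as kernel derivatives $\frac{\delta}{\delta_{i}J_{i}(\cdot)}$ via the pairing and then setting $J_{1}=J_{2}=0$ in the distribution \eqref{Eq_31} so that it reduces to the defining integral \eqref{correlation_Function}. The justification you give for evaluating at zero currents (the uniform bound from Lemma \ref{Lemma5} plus dominated convergence, already built into Lemma \ref{Lemma18}) is exactly the mechanism the paper relies on implicitly.
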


\begin{remark}
This formula is also valid if the fields are from $\boldsymbol{B}_{M}^{\left(
\infty\right)  }$, see Remark \ref{Nota_Zeta}.
\end{remark}

\subsection{\label{Section_Spin_Glassses}Continuous spin glasses}

We set the operators
\begin{align*}
\boldsymbol{W}\phi\left(  x\right)   &  :=%
{\displaystyle\int\limits_{\mathbb{Z}_{p}^{N}}}
w\left(  x,z\right)  \phi\left(  z\right)  d^{N}z,\\
\widetilde{\boldsymbol{W}}\phi\left(  y\right)   &  :=%
{\displaystyle\int\limits_{\mathbb{Z}_{p}^{N}}}
w\left(  z,y\right)  \phi\left(  z\right)  d^{N}z,
\end{align*}
for $\phi\in L_{\mathbb{R}}^{2}\left(  \mathbb{Z}_{p}^{N}\right)  $. Since
$w\left(  x,y\right)  \in L_{\mathbb{R}}^{\infty}(\mathbb{Z}_{p}^{N}%
\times\mathbb{Z}_{p}^{N})$, the Cauchy-Schwarz inequality implies that
$\boldsymbol{W}$, $\widetilde{\boldsymbol{W}}:L_{\mathbb{R}}^{2}\left(
\mathbb{Z}_{p}^{N}\right)  \rightarrow L_{\mathbb{R}}^{2}\left(
\mathbb{Z}_{p}^{N}\right)  $ are well-defined linear bounded operators.

We aim to develop a general perturbative theory for general kernels $w\left(
x,y\right)  \in L_{\mathbb{R}}^{\infty}(\mathbb{Z}_{p}^{N}\times\mathbb{Z}%
_{p}^{N})$. This requires some hypotheses \ about the equations
$\boldsymbol{W}\phi\left(  x\right)  =J(x)$, $\widetilde{\boldsymbol{W}%
}\widetilde{\phi}\left(  x\right)  =J(x)$. Motivated by \cite[Section
7.2]{Zinn-Justin}, we introduce the following hypotheses.

The equation $\boldsymbol{W}\phi\left(  x\right)  =J(x)$, with $J\in
L_{\mathbb{R}}^{2}\left(  \mathbb{Z}_{p}^{N}\right)  $, has a solution of the
form%
\begin{equation}
\phi\left(  x\right)  =%
{\displaystyle\int\limits_{\mathbb{Z}_{p}^{N}}}
G_{w}\left(  x,z\right)  J\left(  z\right)  d^{N}z\in L_{\mathbb{R}}%
^{2}\left(  \mathbb{Z}_{p}^{N}\right)  \text{, for }x\in\mathbb{Z}_{p}%
^{N}\text{;} \tag{H4}\label{Hypothesis 1}%
\end{equation}
the equation $\widetilde{\boldsymbol{W}}\widetilde{\phi}\left(  x\right)
=J(x)$, with $J\in L_{\mathbb{R}}^{2}\left(  \mathbb{Z}_{p}^{N}\right)  $, has
a solution of the form%
\begin{equation}
\widetilde{\phi}\left(  x\right)  =%
{\displaystyle\int\limits_{\mathbb{Z}_{p}^{N}}}
\widetilde{G}_{w}\left(  z,x\right)  J\left(  z\right)  d^{N}z\in
L_{\mathbb{R}}^{2}\left(  \mathbb{Q}_{p}^{N}\right)  \text{, for }%
x\in\mathbb{Z}_{p}^{N}\text{.} \tag{H5}\label{Hypothesis 2}%
\end{equation}

\begin{remark}
\label{Note_1}(i) Hypotheses (\ref{Hypothesis 1})-(\ref{Hypothesis 2}) are
mathematical reformulations about the existence of an inverse, in the
distributional sense, for the kernels of the operators $\boldsymbol{W}$,
$\widetilde{\boldsymbol{W}}$:%
\[
\int d^{N}z\text{ }w\left(  x,z\right)  G_{w}\left(  z,y\right)
=\delta\left(  x-y\right)  ,
\]%
\[
\int d^{N}z\text{ }w\left(  z,x\right)  \widetilde{G}_{w}\left(  y,z\right)
=\delta\left(  y-x\right)  ,
\]
see \cite[Section 7.2]{Zinn-Justin}.

(ii) Notice that $\boldsymbol{W}$, $\widetilde{\boldsymbol{W}}$ are
well-defined linear bounded operators on $L_{\mathbb{R}}^{2}\left(
\mathbb{Q}_{p}^{N}\right)  $, and that
\[
\left\langle f,\boldsymbol{W}g\right\rangle =%
{\displaystyle\int\limits_{\mathbb{Z}_{p}^{N}}}
f(x)\boldsymbol{W}g\left(  x\right)  d^{N}x
\]
is well-defined for $f,g\in L_{\mathbb{R}}^{2}\left(  \mathbb{Q}_{p}%
^{N}\right)  $.
\end{remark}

We now consider the energy functional:%
\begin{align}
E^{\text{spin}}(\boldsymbol{v},\boldsymbol{h},J_{1},J_{2})  &  :=-\left\langle
\boldsymbol{h},\boldsymbol{Wv}\right\rangle -\left\langle \boldsymbol{v}%
,J_{1}\right\rangle -\left\langle \boldsymbol{h},J_{2}\right\rangle
\nonumber\\
&  =-\left\langle \boldsymbol{v},\widetilde{\boldsymbol{W}}\boldsymbol{h}%
\right\rangle -\left\langle \boldsymbol{v},J_{1}\right\rangle -\left\langle
\boldsymbol{h},J_{2}\right\rangle . \label{Energy_Spin_Glass}%
\end{align}
This functional resembles the energy functional of a continuous spin glass
with two types of continuous spins $\boldsymbol{v},\boldsymbol{h}$ and two
different magnetic fields \ $J_{1},J_{2}$. The partition function
corresponding to $E^{\text{spin}}(\boldsymbol{v},\boldsymbol{h},J_{1},J_{2})$
is given by
\[
Z^{\text{spin}}(J_{1},J_{2})=\frac{1}{\mathcal{Z}_{\text{spin}}^{\left(
2\right)  }}\text{ }%
{\displaystyle\iint\limits_{\boldsymbol{B}_{M}^{\left(  2\right)  }%
\times\boldsymbol{B}_{M}^{\left(  2\right)  }}}
\text{ }e^{-E^{\text{spin}}(\boldsymbol{v},\boldsymbol{h},J_{1},J_{2}%
)}d\mathbb{P}_{K_{1}}\left(  \boldsymbol{v}\right)  \otimes d\mathbb{P}%
_{K_{2}}\left(  \boldsymbol{h}\right)  ,
\]
where%
\[
\mathcal{Z}_{\text{spin}}^{\left(  2\right)  }=%
{\displaystyle\iint\limits_{\boldsymbol{B}_{M}^{\left(  2\right)  }%
\times\boldsymbol{B}_{M}^{\left(  2\right)  }}}
\text{ }e^{\left\langle \boldsymbol{h},\boldsymbol{Wv}\right\rangle
}d\mathbb{P}_{K_{1}}\left(  \boldsymbol{v}\right)  \otimes d\mathbb{P}_{K_{2}%
}\left(  \boldsymbol{h}\right)  .
\]
The goal of this section is to establish the following formula:

\begin{theorem}
\label{Lemma6}Under hypotheses (\ref{Hypothesis 1})-(\ref{Hypothesis 2}), the
following formula holds true:
\begin{equation}
Z^{\text{spin}}(J_{1},J_{2})=C(\boldsymbol{v}_{0},\boldsymbol{h}%
_{0},\mathcal{Z}_{\text{spin}}^{\left(  2\right)  })\exp\left(  \left\langle
\text{ }%
{\displaystyle\int\limits_{\mathbb{Z}_{p}^{N}}}
G_{w}\left(  x,z\right)  J_{2}\left(  z\right)  d^{N}z,J_{1}\left(  x\right)
\right\rangle \right)  , \label{Z_spin}%
\end{equation}
where $C(\boldsymbol{v}_{0},\boldsymbol{h}_{0},\mathcal{Z}_{\text{spin}%
}^{\left(  2\right)  })$ is a positive constant.
\end{theorem}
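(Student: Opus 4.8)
The plan is to evaluate the integral defining $Z^{\text{spin}}(J_1,J_2)$ by completing the square in the exponent $-E^{\text{spin}}(\boldsymbol{v},\boldsymbol{h},J_1,J_2)=\langle\boldsymbol{h},\boldsymbol{Wv}\rangle+\langle\boldsymbol{v},J_1\rangle+\langle\boldsymbol{h},J_2\rangle$ \emph{at the level of the integrand}, so that the source dependence is isolated before any integration and no change of variables in the Gaussian measure is required. The first step is to use hypotheses (\ref{Hypothesis 1})--(\ref{Hypothesis 2}) to manufacture two shift fields
\[
\boldsymbol{v}_0\left(x\right)=\int_{\mathbb{Z}_p^N}G_w\left(x,z\right)J_2\left(z\right)d^Nz,\qquad\boldsymbol{h}_0\left(x\right)=\int_{\mathbb{Z}_p^N}\widetilde{G}_w\left(z,x\right)J_1\left(z\right)d^Nz,
\]
which, by (\ref{Hypothesis 1})--(\ref{Hypothesis 2}), are honest square-integrable functions satisfying $\boldsymbol{Wv}_0=J_2$ and $\widetilde{\boldsymbol{W}}\boldsymbol{h}_0=J_1$; here one reads $\boldsymbol{W},\widetilde{\boldsymbol{W}}$ as the bounded operators on $L_{\mathbb{R}}^2\left(\mathbb{Q}_p^N\right)$ of Remark \ref{Note_1}(ii), since (\ref{Hypothesis 2}) only places $\boldsymbol{h}_0$ in $L_{\mathbb{R}}^2\left(\mathbb{Q}_p^N\right)$.

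The key algebraic step is the identity
\[
-E^{\text{spin}}\left(\boldsymbol{v},\boldsymbol{h},J_1,J_2\right)=\left\langle\boldsymbol{h}+\boldsymbol{h}_0,\boldsymbol{W}\left(\boldsymbol{v}+\boldsymbol{v}_0\right)\right\rangle-\left\langle\boldsymbol{h}_0,\boldsymbol{Wv}_0\right\rangle,
\]
which I would verify by expanding the right-hand side and using $\boldsymbol{Wv}_0=J_2$, $\widetilde{\boldsymbol{W}}\boldsymbol{h}_0=J_1$, and the adjointness $\left\langle\boldsymbol{h}_0,\boldsymbol{Wv}\right\rangle=\left\langle\boldsymbol{v},\widetilde{\boldsymbol{W}}\boldsymbol{h}_0\right\rangle$ coming from the definitions of $\boldsymbol{W},\widetilde{\boldsymbol{W}}$. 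The two mixed terms collapse to $\left\langle\boldsymbol{h},J_2\right\rangle+\left\langle\boldsymbol{v},J_1\right\rangle$ and cancel the source terms exactly, while the surviving $c$-number simplifies, again by adjointness, to
\[
\left\langle\boldsymbol{h}_0,\boldsymbol{Wv}_0\right\rangle=\left\langle\widetilde{\boldsymbol{W}}\boldsymbol{h}_0,\boldsymbol{v}_0\right\rangle=\left\langle J_1,\boldsymbol{v}_0\right\rangle=\left\langle\int_{\mathbb{Z}_p^N}G_w\left(x,z\right)J_2\left(z\right)d^Nz,\,J_1\left(x\right)\right\rangle,
\]
i.e.\ precisely the bilinear pairing whose exponential appears in (\ref{Z_spin}). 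Substituting the identity into the definition of $Z^{\text{spin}}$, the exponential of this $c$-number factors out of the integral, leaving the prefactor
\[
C\left(\boldsymbol{v}_0,\boldsymbol{h}_0,\mathcal{Z}_{\text{spin}}^{\left(2\right)}\right)=\frac{1}{\mathcal{Z}_{\text{spin}}^{\left(2\right)}}\iint_{\boldsymbol{B}_M^{\left(2\right)}\times\boldsymbol{B}_M^{\left(2\right)}}e^{\left\langle\boldsymbol{h}+\boldsymbol{h}_0,\boldsymbol{W}\left(\boldsymbol{v}+\boldsymbol{v}_0\right)\right\rangle}d\mathbb{P}_{K_1}\left(\boldsymbol{v}\right)\otimes d\mathbb{P}_{K_2}\left(\boldsymbol{h}\right).
\]

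Positivity of $C$ is immediate because the integrand is strictly positive, and finiteness follows from an estimate of the Lemma \ref{Lemma5} type: on $\boldsymbol{B}_M^{\left(2\right)}\times\boldsymbol{B}_M^{\left(2\right)}$ one has $\left\vert\left\langle\boldsymbol{h}+\boldsymbol{h}_0,\boldsymbol{W}\left(\boldsymbol{v}+\boldsymbol{v}_0\right)\right\rangle\right\vert\le\left\Vert w\right\Vert_{\infty}\left(M+\left\Vert\boldsymbol{h}_0\right\Vert_2\right)\left(M+\left\Vert\boldsymbol{v}_0\right\Vert_2\right)$, via $\left\Vert f\right\Vert_1\le\left\Vert f\right\Vert_2$ and Cauchy--Schwarz, so the integrand is uniformly bounded and $\mathbb{P}_{K_1}\otimes\mathbb{P}_{K_2}$ being a probability measure yields a finite value. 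Note that $C$ depends on the sources only through the shifts $\boldsymbol{v}_0,\boldsymbol{h}_0$, consistent with its notation; this is exactly the dependence that will later produce the nonzero one-point functions.

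I expect the main difficulty to be conceptual rather than computational: the entire scheme rests on absorbing the linear source terms into the quadratic interaction, which is possible \emph{only} because (\ref{Hypothesis 1})--(\ref{Hypothesis 2}) furnish the (distributional) inverses $G_w,\widetilde{G}_w$ of the generally non-symmetric, non-invertible kernels of $\boldsymbol{W},\widetilde{\boldsymbol{W}}$; this is the precise role of these hypotheses and the reason the resulting picture later departs from the usual Wick--Isserlis calculus. The one genuinely delicate point requiring care is the bookkeeping of function spaces — ensuring $\boldsymbol{v}_0,\boldsymbol{h}_0$ are square-integrable functions rather than mere distributions so that every pairing above is well defined, which forces one to work throughout with the $L_{\mathbb{R}}^2\left(\mathbb{Q}_p^N\right)$-extensions of $\boldsymbol{W},\widetilde{\boldsymbol{W}}$ from Remark \ref{Note_1}(ii).
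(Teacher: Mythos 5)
Your argument is correct and rests on the same key idea as the paper's proof --- using hypotheses (\ref{Hypothesis 1})--(\ref{Hypothesis 2}) to manufacture shifts $\boldsymbol{v}_{0},\boldsymbol{h}_{0}$ that absorb the source terms into the bilinear interaction --- but the execution is genuinely different. The paper performs two successive changes of variables, $\boldsymbol{v}\mapsto\boldsymbol{v}-\boldsymbol{v}_{0}$ and then $\boldsymbol{h}\mapsto\boldsymbol{h}-\boldsymbol{h}_{0}$, pushing the Gaussian measure forward onto the translated balls $\boldsymbol{B}_{M}^{\left(  2\right)  }\left(  -\boldsymbol{v}_{0}\right)  \times\boldsymbol{B}_{M}^{\left(  2\right)  }\left(  -\boldsymbol{h}_{0}\right)  $ and identifying the surviving integral of $e^{\left\langle \boldsymbol{v}^{\prime\prime},\widetilde{\boldsymbol{W}}\boldsymbol{h}^{\prime\prime}\right\rangle }$ against the transported measure as the constant $C$. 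You instead complete the square purely at the level of the integrand and never touch the measure or the domain; your $C$ is the same number, written as an integral over the original $\boldsymbol{B}_{M}^{\left(  2\right)  }\times\boldsymbol{B}_{M}^{\left(  2\right)  }$ against $\mathbb{P}_{K_{1}}\otimes\mathbb{P}_{K_{2}}$. This buys you two things: you avoid the push-forward machinery entirely (and supply the explicit boundedness estimate for $C$ that the paper leaves implicit), and your formulation makes unmistakable the one real subtlety of the statement, namely that $C$ depends on $J_{1},J_{2}$ through $\boldsymbol{v}_{0},\boldsymbol{h}_{0}$, because neither the ball $\boldsymbol{B}_{M}^{\left(  2\right)  }$ nor the Gaussian measure is translation invariant.

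Two cautions. First, track the sign: with your convention $\boldsymbol{W}\boldsymbol{v}_{0}=J_{2}$, your identity yields $Z^{\text{spin}}=C\exp\left(  -\left\langle \boldsymbol{h}_{0},\boldsymbol{W}\boldsymbol{v}_{0}\right\rangle \right)  $, i.e.\ the exponent is $-\left\langle \int_{\mathbb{Z}_{p}^{N}}G_{w}\left(  x,z\right)  J_{2}\left(  z\right)  d^{N}z,J_{1}\left(  x\right)  \right\rangle $, which differs from (\ref{Z_spin}) by a sign; the paper's proof, which imposes $\boldsymbol{W}\boldsymbol{v}_{0}+J_{2}=0$, has the mirror-image discrepancy, so this is a convention ambiguity inherited from the source rather than an error peculiar to your argument, but your claim that the $c$-number is ``precisely'' the exponent of (\ref{Z_spin}) glosses over it. Second, your closing remark that the source dependence of $C$ ``will later produce the nonzero one-point functions'' is not right: in Theorem \ref{Theorem2} and the subsequent computations the prefactor $A$ is differentiated as though it were independent of $J_{1},J_{2}$, and the one-point functions come entirely from the explicit exponential factor; the residual dependence of $C$ on the sources is a loose end of the theorem as stated, not the mechanism behind the correlators.
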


\begin{proof}
We first notice that%
\begin{equation}
E^{\text{spin}}(\boldsymbol{v},\boldsymbol{h},J_{1},J_{2})=-\left\langle
\boldsymbol{h},\boldsymbol{Wv+}J_{2}\right\rangle -\left\langle \boldsymbol{v}%
,J_{1}\right\rangle , \label{E_spin_1}%
\end{equation}
and%
\begin{equation}
E^{\text{spin}}(\boldsymbol{v},\boldsymbol{h},J_{1},J_{2})=-\left\langle
\boldsymbol{v},\widetilde{\boldsymbol{W}}\boldsymbol{h+}J_{1}\right\rangle
-\left\langle \boldsymbol{h},J_{2}\right\rangle . \label{E_spin_2}%
\end{equation}
By using hypotheses (\ref{Hypothesis 1})-(\ref{Hypothesis 2}), we pick two
functions $\boldsymbol{v}_{0},\boldsymbol{h}_{0}\in L_{\mathbb{R}}^{2}\left(
\mathbb{Z}_{p}^{N}\right)  $ satisfying%
\begin{equation}
\boldsymbol{Wv}_{0}\boldsymbol{+}J_{2}=0\text{, \ and }\widetilde
{\boldsymbol{W}}\boldsymbol{h}_{0}\boldsymbol{+}J_{1}=0.
\label{special_solutions}%
\end{equation}
We now change variables in $Z^{\text{spin}}(J_{1},J_{2})$ as%
\[%
\begin{array}
[c]{ccc}%
\boldsymbol{B}_{M}^{\left(  2\right)  }\times\boldsymbol{B}_{M}^{\left(
2\right)  } & \rightarrow & \boldsymbol{B}_{M}^{\left(  2\right)  }\left(
-\boldsymbol{v}_{0}\right)  \times\boldsymbol{B}_{M}^{\left(  2\right)  }\\
\left(  \boldsymbol{v},\boldsymbol{h}\right)  & \rightarrow & \left(
\boldsymbol{v}^{\prime},\boldsymbol{h}^{\prime}\right)  ,\\
&  &
\end{array}
\]
where $\boldsymbol{v}^{\prime}=\boldsymbol{v}-\boldsymbol{v}_{0}$,
$\boldsymbol{h}^{\prime}=\boldsymbol{h}$. We denote by $dQ\left(
\boldsymbol{v}^{\prime},\boldsymbol{h}^{\prime}\right)  $ the push-forward of
the measure $d\mathbb{P}_{K_{1}}\left(  \boldsymbol{v}\right)  \otimes
d\mathbb{P}_{K_{2}}\left(  \boldsymbol{h}\right)  $\ to $\boldsymbol{B}%
_{M}^{\left(  2\right)  }\left(  -\boldsymbol{v}_{0}\right)  \times
\boldsymbol{B}_{M}^{\left(  2\right)  }$, here $\boldsymbol{B}_{M}^{\left(
2\right)  }\left(  -\boldsymbol{v}_{0}\right)  \times\boldsymbol{B}%
_{M}^{\left(  2\right)  }$\ is endowed with the Borel $\sigma$-algebra. For
further details, the reader may consult \cite[Theorem 1.6.12]{Ash}.

By using (\ref{E_spin_1}) and (\ref{special_solutions}), we have%
\begin{align*}
Z^{\text{spin}}(J_{1},J_{2})  &  =\frac{e^{\left\langle \boldsymbol{v}%
_{0},J_{1}\right\rangle }}{\mathcal{Z}_{\text{spin}}^{\left(  2\right)  }%
}\text{ }%
{\displaystyle\iint\limits_{\boldsymbol{B}_{M}^{\left(  2\right)  }\left(
-\boldsymbol{v}_{0}\right)  \times\boldsymbol{B}_{M}^{\left(  2\right)  }}}
\text{ }e^{\left\langle \boldsymbol{h}^{\prime},\boldsymbol{Wv}^{\prime
}\right\rangle +\left\langle \boldsymbol{v}^{\prime},J_{1}\right\rangle
}dQ\left(  \boldsymbol{v}^{\prime},\boldsymbol{h}^{\prime}\right) \\
&  =\frac{e^{\left\langle \boldsymbol{v}_{0},J_{1}\right\rangle }}%
{\mathcal{Z}_{\text{spin}}^{\left(  2\right)  }}\text{ }%
{\displaystyle\iint\limits_{\boldsymbol{B}_{M}^{\left(  2\right)  }\left(
-\boldsymbol{v}_{0}\right)  \times\boldsymbol{B}_{M}^{\left(  2\right)  }}}
\text{ }e^{\left\langle \boldsymbol{v}^{\prime},\widetilde{\boldsymbol{W}%
}\boldsymbol{h}^{\prime}+J_{1}\right\rangle }dQ\left(  \boldsymbol{v}^{\prime
},\boldsymbol{h}^{\prime}\right)  .
\end{align*}
Notice that $\left\langle \boldsymbol{v}^{\prime},\widetilde{\boldsymbol{W}%
}\boldsymbol{h}^{\prime}+J_{1}\right\rangle $ is well-defined, see Remark
\ref{Note_1}-(ii). We now change variables \ as $\boldsymbol{v}^{\prime\prime
}=\boldsymbol{v}^{\prime}$, $\boldsymbol{h}^{\prime\prime}=\boldsymbol{h}%
^{\prime}-\boldsymbol{h}_{0}$. We denote\ by $dS\left(  \boldsymbol{v}%
^{\prime\prime},\boldsymbol{h}^{\prime\prime}\right)  $ the push-forward of
the measure $dQ\left(  \boldsymbol{v}^{\prime},\boldsymbol{h}^{\prime}\right)
$ to $\boldsymbol{B}_{M}^{\left(  2\right)  }\left(  -\boldsymbol{v}%
_{0}\right)  \times\boldsymbol{B}_{M}^{\left(  2\right)  }\left(
-\boldsymbol{h}_{0}\right)  $, then%
\begin{align*}
Z^{\text{spin}}(J_{1},J_{2})  &  =\frac{e^{\left\langle \boldsymbol{v}%
_{0},J_{1}\right\rangle }}{\mathcal{Z}_{\text{spin}}^{\left(  2\right)  }%
}\text{ }%
{\displaystyle\iint\limits_{\boldsymbol{B}_{M}^{\left(  2\right)  }\left(
-\boldsymbol{v}_{0}\right)  \times\boldsymbol{B}_{M}^{\left(  2\right)
}\left(  -\boldsymbol{h}_{0}\right)  }}
\text{ }e^{\left\langle \boldsymbol{v}^{\prime\prime},\widetilde
{\boldsymbol{W}}\boldsymbol{h}^{\prime\prime}\right\rangle }dR\left(
\boldsymbol{v}^{\prime\prime},\boldsymbol{h}^{\prime\prime}\right) \\
&  =C(\boldsymbol{v}_{0},\boldsymbol{h}_{0},\mathcal{Z}_{\text{spin}}^{\left(
2\right)  })e^{\left\langle \boldsymbol{v}_{0},J_{1}\right\rangle }\\
&  =C(\boldsymbol{v}_{0},\boldsymbol{h}_{0},\mathcal{Z}_{\text{spin}}^{\left(
2\right)  })\exp\left(  \left\langle \text{ }%
{\displaystyle\int\limits_{\mathbb{Z}_{p}^{N}}}
G_{w}\left(  x,z\right)  J_{2}\left(  z\right)  d^{N}z,J_{1}\left(  x\right)
\right\rangle \right)  ,
\end{align*}
where $C(\boldsymbol{v}_{0},\boldsymbol{h}_{0},\mathcal{Z}_{\text{spin}%
}^{\left(  2\right)  })$\ is a normalization constant.
\end{proof}

\begin{remark}
The following formula is also valid: \
\[
Z^{\text{spin}}(J_{1},J_{2})=C^{\prime}(\boldsymbol{v}_{0},\boldsymbol{h}%
_{0},\mathcal{Z}_{\text{spin}}^{\left(  2\right)  })\exp\left(  \left\langle
\text{ }%
{\displaystyle\int\limits_{\mathbb{Z}_{p}^{N}}}
\widetilde{G}_{w}\left(  z,x\right)  J_{1}\left(  z\right)  d^{N}%
z,J_{2}\left(  x\right)  \right\rangle \right)  .
\]

\end{remark}

\section{\label{Section_8}Continuous Boltzmann machines}

Along this section, we assume that the fields are from $\boldsymbol{B}%
_{M}^{\left(  2\right)  }$. We consider continuous Boltzmann machines with
parameters of the form $\boldsymbol{\theta}=\left(  w,a,b,0,0\right)  $. The
generating functional $Z(J_{1},J_{2};\boldsymbol{\theta})$ can be computed by
replacing $J_{1}$ by $a+J_{1}$ and $J_{2}$ by $b+J_{2}$ in $Z^{\text{spin}%
}(J_{1},J_{2})$, see (\ref{Z_spin}):%
\begin{gather*}
Z(J_{1},J_{2};\boldsymbol{\theta})=A\exp\left(  \left\langle \text{ }%
{\displaystyle\int\limits_{\mathbb{Z}_{p}^{N}}}
G_{w}\left(  x,z\right)  \left(  b\left(  z\right)  +J_{2}\left(  z\right)
\right)  d^{N}z,J_{1}\left(  x\right)  \right\rangle \right)  \times\\
\exp\left(  \left\langle \text{ }%
{\displaystyle\int\limits_{\mathbb{Z}_{p}^{N}}}
G_{w}\left(  x,z\right)  \left(  b\left(  z\right)  +J_{2}\left(  z\right)
\right)  d^{N}z,a\left(  x\right)  \right\rangle \right)  ,
\end{gather*}
where $A:=A(\boldsymbol{v}_{0},\boldsymbol{h}_{0},\mathcal{Z}_{\text{spin}%
}^{\left(  2\right)  },a,b)$ is defined as
\[
A=C^{\prime}(\boldsymbol{v}_{0},\boldsymbol{h}_{0},\mathcal{Z}_{\text{spin}%
}^{\left(  2\right)  })\exp\left(  \left\langle \text{ }\int_{\mathbb{Z}%
_{p}^{N}}G_{w}\left(  x,z\right)  b\left(  z\right)  d^{N}z,a\left(  x\right)
\right\rangle \right)  .
\]
By applying Lemma \ref{Lemma6A}, we have the following formula:

\begin{theorem}
\label{Theorem2}Set
\begin{align*}
{\huge F}(J_{1},J_{2};a,b,G_{w})  &  :=\left\langle \text{ }%
{\displaystyle\int\limits_{\mathbb{Z}_{p}^{N}}}
G_{w}\left(  x,z\right)  \left(  b\left(  z\right)  +J_{2}\left(  z\right)
\right)  d^{N}z,\text{ }J_{1}\left(  x\right)  \right\rangle \\
&  +\left\langle \text{ }%
{\displaystyle\int\limits_{\mathbb{Z}_{p}^{N}}}
G_{w}\left(  x,z\right)  \left(  b\left(  z\right)  +J_{2}\left(  z\right)
\right)  d^{N}z,\text{ }a\left(  x\right)  \right\rangle .
\end{align*}
Then
\[
\boldsymbol{G}^{\left(  m+n\right)  }\left(  x,y;\boldsymbol{\theta}\right)
=A%
{\textstyle\prod\limits_{j=1}^{n}}
\frac{\delta}{\delta_{2}J_{2}\left(  y_{j}\right)  }\text{ }%
{\textstyle\prod\limits_{i=1}^{m}}
\frac{\delta}{\delta_{1}J_{1}\left(  x_{j}\right)  }\left.  \exp\left(
{\huge F}(J_{1},J_{2};a,b,G_{w})\right)  \right\vert _{_{\substack{J_{0}%
=0\\J_{1}=0}}},
\]
where $\boldsymbol{\theta}=\left(  w,a,b,0,0\right)  $, for $x\in
\mathbb{Z}_{p}^{Nm}$, $y\in\mathbb{Z}_{p}^{Nn}$.
\end{theorem}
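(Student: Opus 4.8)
The plan is to read the result off from the closed-form expression for $Z(J_1,J_2;\boldsymbol{\theta})$ derived immediately before the statement, combined with the differentiation formula of Lemma \ref{Lemma6A}. First I would record that for $\boldsymbol{\theta}=(w,a,b,0,0)$ the interaction energy collapses to $E^{\text{int}}(\boldsymbol{v},\boldsymbol{h})=-\langle\boldsymbol{h},\boldsymbol{Wv}\rangle$, so that adjoining the source term $E_{\text{sour}}=-\langle J_1,\boldsymbol{v}\rangle-\langle J_2,\boldsymbol{h}\rangle$ yields
\[
E(\boldsymbol{v},\boldsymbol{h},J_1,J_2)=-\langle\boldsymbol{h},\boldsymbol{Wv}\rangle-\langle a+J_1,\boldsymbol{v}\rangle-\langle b+J_2,\boldsymbol{h}\rangle=E^{\text{spin}}(\boldsymbol{v},\boldsymbol{h},a+J_1,b+J_2).
\]
Hence $Z(J_1,J_2;\boldsymbol{\theta})$ coincides with $Z^{\text{spin}}(a+J_1,b+J_2)$ up to the $J$-independent ratio $\mathcal{Z}_{\text{spin}}^{(2)}/\mathcal{Z}^{(2)}(\boldsymbol{\theta})$ of partition functions, both finite by Lemma \ref{Lemma5}.

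Next I would substitute $J_1\mapsto a+J_1$ and $J_2\mapsto b+J_2$ into the formula of Theorem \ref{Lemma6} and split the resulting pairing by bilinearity of $\langle\cdot,\cdot\rangle$; this reproduces exactly the displayed identity $Z(J_1,J_2;\boldsymbol{\theta})=A\exp\left(F(J_1,J_2;a,b,G_w)\right)$, with $A$ absorbing $C(\boldsymbol{v}_0,\boldsymbol{h}_0,\mathcal{Z}_{\text{spin}}^{(2)})$, the partition-function ratio, and the factor $\exp\bigl(\langle\int_{\mathbb{Z}_p^N} G_w(x,z)b(z)\,d^{N}z,\,a(x)\rangle\bigr)$. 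I would then invoke Lemma \ref{Lemma6A}, which identifies $\boldsymbol{G}^{(m+n)}(x,y;\boldsymbol{\theta})$ with the iterated functional derivative $\prod_j\frac{\delta}{\delta_2 J_2(y_j)}\prod_i\frac{\delta}{\delta_1 J_1(x_i)}Z(J_1,J_2;\boldsymbol{\theta})$ evaluated at $J_1=J_2=0$; Lemma \ref{Lemma18} guarantees that these mixed derivatives exist and may be computed in any order. Since $A$ depends on neither $J_1$ nor $J_2$, it passes outside every derivative, and substituting $Z=A\exp(F)$ gives the asserted formula.

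The only genuinely delicate point is bookkeeping rather than analysis: one must check that shifting the currents commutes with functional differentiation and with evaluation at zero, so that differentiating $Z^{\text{spin}}(a+J_1,b+J_2)$ in $J_1,J_2$ and setting $J_1=J_2=0$ matches differentiating $\exp(F)$—in which the shift by $(a,b)$ has already been incorporated—at $J_1=J_2=0$. This follows directly from the difference-quotient definition of $\partial_\theta^{(1)}$, $\partial_\theta^{(2)}$ underlying Lemma \ref{Lemma18}, since translating the argument by a fixed element of $L_{\mathbb{R}}^2(\mathbb{Z}_p^N)$ leaves those difference quotients unchanged. Beyond Lemmas \ref{Lemma5}, \ref{Lemma6A}, \ref{Lemma18} and Theorem \ref{Lemma6}, no new estimates are required, so I do not anticipate a substantial obstacle; the content of the statement lies entirely in the explicit generating functional of Theorem \ref{Lemma6}, which this theorem merely differentiates.
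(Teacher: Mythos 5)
Your proposal is correct and follows essentially the same route as the paper: identify $Z(J_{1},J_{2};\boldsymbol{\theta})$ with $Z^{\text{spin}}(a+J_{1},b+J_{2})$ up to normalization, substitute the shifted currents into the closed form of Theorem \ref{Lemma6}, and differentiate via Lemma \ref{Lemma6A}. Your explicit tracking of the partition-function ratio and the remark that the shift commutes with the difference quotients defining the functional derivatives are points the paper leaves implicit, but they do not change the argument.
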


\subsection{Some examples of BMs}

In this section, we study the correlation functions $\boldsymbol{G}^{\left(
1+0\right)  }\left(  x\right)  $, $\boldsymbol{G}^{\left(  0+1\right)
}\left(  y\right)  $, $\boldsymbol{G}^{\left(  1+1\right)  }\left(
x,y\right)  $, with $x,y\in\mathbb{Z}_{p}^{N}$, which are, respectively, the
means of the visible and hidden fields, and the correlation between the
visible and hidden fields. In this section, we assume that $\boldsymbol{W}%
\Phi\left(  x\right)  =w\left(  x\right)  \ast\Phi\left(  x\right)  $, then
$\frac{1}{\widehat{w}\left(  \xi\right)  }$is the corresponding propagator .
In $p$-adic QFT several types of propagators have been proposed:%
\[
\text{\ }\frac{1}{\left\vert \sum_{i=1}^{N}\xi_{i}^{2}\right\vert _{p}+m^{2}%
}\text{, \ }\frac{1}{\sum_{i=1}^{N}\left\vert \xi_{i}\right\vert _{p}%
^{2}+m^{2}}\text{, }\frac{1}{\left\Vert \xi\right\Vert _{p}^{2}+m^{2}}\text{,
}\frac{1}{\left[  \max\left\{  \left\Vert \xi\right\Vert _{p},m\right\}
\right]  ^{2}},
\]
where $m\in\mathbb{R}$, and%
\[
\frac{1}{\left\vert \sum_{i=1}^{N}\xi_{i}^{2}+m^{2}\right\vert _{p}},
\]
where $m\in\mathbb{Q}_{p}$, see \cite{Smirnov}. At this moment, it is not
clear how to pick propagators relevant for applications in artificial
intelligence. In our view, this task is out of the scope of the quantum field theory.

In this section, we take
\[
\frac{1}{\widehat{w}\left(  \xi\right)  }=\frac{1}{\left[  \max\left\{
\left\Vert \xi\right\Vert _{p},1\right\}  \right]  ^{\alpha}}=\frac{1}{\left[
\xi\right]  _{p}^{\alpha}}\text{ for }\alpha>0.
\]
Then%
\begin{equation}
G_{w}\left(  x\right)  :=\mathcal{F}_{\xi\rightarrow x}^{-1}\left(  \frac
{1}{\left[  \xi\right]  _{p}^{\alpha}}\right)  =\left\{
\begin{array}
[c]{ccc}%
\frac{1}{\Gamma\left(  \alpha\right)  }\left\{  \left\Vert x\right\Vert
_{p}^{\alpha-N}-p^{\alpha-N}\right\}  \Omega\left(  \left\Vert x\right\Vert
_{p}\right)  & \text{if} & \alpha\neq N\\
&  & \\
\left(  1-p^{-N}\right)  \log_{p}\left(  \frac{p}{\left\Vert x\right\Vert
_{p}}\right)  \Omega\left(  \left\Vert x\right\Vert _{p}\right)  & \text{if} &
\alpha=N,
\end{array}
\right.  \label{Eq_G_w}%
\end{equation}
where $\Gamma\left(  \alpha\right)  =\frac{1-p^{\alpha-N}}{1-p^{-\alpha}}$.
Furthermore, $G_{w}\in L_{\mathbb{R}}^{1}\left(  \mathbb{Z}_{p}^{N}\right)  $,
see \cite[Chap III, Lemma 5.2]{Taibleson}. Thus%
\[
\left(  w\ast\Phi\right)  \left(  x\right)  =\mathcal{F}_{\xi\rightarrow
x}^{-1}\left(  \left[  \xi\right]  _{p}^{\alpha}\mathcal{F}_{x\rightarrow\xi
}\Phi\right)  ,
\]
and the solution of the equation
\[
\boldsymbol{W}\Phi\left(  x\right)  :=\left(  w\ast\Phi\right)  \left(
x\right)  =J(x)\text{, with }J\in L_{\mathbb{R}}^{2}\left(  \mathbb{Z}_{p}%
^{N}\right)  \text{,}%
\]
is%
\[
\Phi\left(  x\right)  =G_{w}\left(  x\right)  \ast J(x)\in L_{\mathbb{R}}%
^{2}\left(  \mathbb{Z}_{p}^{N}\right)  ,
\]
since $\left\Vert G_{w}\ast J\right\Vert _{2}\leq\left\Vert G_{w}\right\Vert
_{1}\left\Vert J\right\Vert _{2}$, see \cite[Chap III, Section 1]{Taibleson}.
We set$\Phi$%
\[
\widetilde{G}_{w}\left(  x\right)  :=G_{w}\left(  -x\right)  \text{ \ \ \ and
\ \ \ \ }\widetilde{\boldsymbol{W}}\Phi\left(  x\right)  =\widetilde{G}%
_{w}\left(  x\right)  \ast\Phi\left(  x\right)  .
\]

\begin{remark}
\label{Note_2}Let $m\in\mathbb{N}$, $w\in L_{\mathbb{R}}^{\infty}\left(
p^{m}\mathbb{Z}_{p}^{N}\right)  \subset L_{\mathbb{R}}^{1}\left(
p^{m}\mathbb{Z}_{p}^{N}\right)  $. Assume that%
\[
\widehat{\Phi}\left(  \xi\right)  =\frac{\widehat{J}\left(  \xi\right)
}{\widehat{w}\left(  \xi\right)  }\in\mathcal{F}(L_{\mathbb{R}}^{2}%
(\mathbb{Q}_{p}^{N})),
\]
for any $J\in L_{\mathbb{R}}^{1}\left(  p^{m}\mathbb{Z}_{p}^{N}\right)  $.
Take
\[
G_{w}\left(  x\right)  :=\mathcal{F}_{\xi\rightarrow x}^{-1}\left(  \frac
{1}{\widehat{w}\left(  \xi\right)  }\right)  \text{, for }x\in p^{m}%
\mathbb{Z}_{p}^{N},
\]
as a distribution from $\mathcal{D}_{\mathbb{R}}^{\prime}(p^{m}\mathbb{Z}%
_{p}^{N})$, then%
\[
\Phi\left(  x\right)  =G_{w}\left(  x\right)  \ast J\left(  x\right)  \in
L_{\mathbb{R}}^{2}\left(  p^{m}\mathbb{Z}_{p}^{N}\right)  .
\]

\end{remark}

\subsubsection{Computation of $\boldsymbol{G}^{\left(  m+0\right)  }\left(
\left[  x_{i}\right]  _{1\leq i\leq m}\right)  $}

We now consider BMs with energy functionals of the form%
\[
E(\boldsymbol{v},\boldsymbol{h})=-\left\langle \boldsymbol{h},\boldsymbol{Wv}%
\right\rangle -\left\langle \boldsymbol{v},a\right\rangle -\left\langle
\boldsymbol{h},b\right\rangle =-\left\langle \boldsymbol{v},\widetilde
{\boldsymbol{W}}\boldsymbol{h}\right\rangle -\left\langle \boldsymbol{v}%
,a\right\rangle -\left\langle \boldsymbol{h},b\right\rangle .
\]
To compute the correlations function of a such theory, we\ use Theorem
\ref{Theorem2}. In this case%
\[
{\huge F}(J_{1},J_{2};a,b,G_{w}):={\huge F}(J_{1},J_{2})=\left\langle
G_{w}\ast\left(  b+J_{2}\right)  ,\text{ }J_{1}\right\rangle +\left\langle
G_{w}\ast J_{2},\text{ }a\right\rangle .
\]
The calculations are valid if $G_{w}\left(  x\right)  :=\mathcal{F}%
_{\xi\rightarrow x}^{-1}\left(  \frac{1}{\widehat{w}\left(  \xi\right)
}\right)  \in\mathcal{D}_{\mathbb{R}}^{\prime}(\mathbb{Q}_{p}^{N})$. In
particular, if $G_{w}\left(  x\right)  $ has the form (\ref{Eq_G_w}).

We first compute
\begin{gather*}
\left[  \frac{d}{d\epsilon}\exp\left(  {\huge F}(J_{1}+\epsilon\theta
,J_{2})\right)  \right]  _{\epsilon=0}=\\
\exp\left(  {\huge F}(J_{1},J_{2})\right)  \left[  \frac{d}{d\epsilon}%
\exp\left(  \left\langle G_{w}\ast\left(  b+J_{2}\right)  ,\text{ }%
\epsilon\theta\right\rangle \right)  \right]  _{\epsilon=0}=\\
\exp\left(  {\huge F}(J_{1},J_{2})\right)  \left\langle G_{w}\ast\left(
b+J_{2}\right)  ,\text{ }\theta\right\rangle =\\
\exp\left(  {\huge F}(J_{1},J_{2})\right)
{\displaystyle\int\limits_{\mathbb{Z}_{p}^{N}}}
\left(  G_{w}\ast\left(  b+J_{2}\right)  \right)  \left(  x\right)  \text{
}\theta\left(  x\right)  d^{N}y,
\end{gather*}
i.e.,%
\[
\frac{\delta}{\delta_{1}J_{1}\left(  x_{i}\right)  }\exp\left(  {\huge F}%
(J_{1},J_{2})\right)  =\exp\left(  {\huge F}(J_{1},J_{2})\right)  \left(
G_{w}\ast\left(  b+J_{2}\right)  \right)  \left(  x_{i}\right)  .
\]
Now by using this the formula recursively,%
\begin{equation}%
{\displaystyle\prod\limits_{i=1}^{m}}
\frac{\delta}{\delta_{1}J_{1}\left(  x_{i}\right)  }\exp\left(  {\huge F}%
(J_{1},J_{2})\right)  =\exp\left(  {\huge F}(J_{1},J_{2})\right)
{\displaystyle\prod\limits_{i=1}^{m}}
\left[  G_{w}\ast\left(  b+J_{2}\right)  \right]  \left(  x_{i}\right)  .
\label{Formula_1}%
\end{equation}
By applying Theorem \ref{Theorem2},%
\[
\boldsymbol{G}^{\left(  m+0\right)  }\left(  \left[  x_{i}\right]  _{1\leq
i\leq m}\right)  =A%
{\displaystyle\prod\limits_{i=1}^{m}}
\left(  G_{w}\ast b\right)  \left(  x_{i}\right)  .
\]

\subsubsection{Computation of $\boldsymbol{G}^{\left(  0+n\right)  }\left(
\left[  y_{i}\right]  _{1\leq i\leq n}\right)  $}

Now, we compute%
\begin{gather*}
\left[  \frac{d}{d\epsilon}\exp\left(  {\huge F}(J_{1},J_{2}+\epsilon
\phi)\right)  \right]  _{\epsilon=0}=\\
\exp\left(  {\huge F}(J_{1},J_{2})\right)  \left[  \frac{d}{d\epsilon}%
\exp\left(  \epsilon\left\langle G_{w}\ast\phi,\text{ }a+J_{1}\right\rangle
\right)  \right]  _{\epsilon=0}=\\
\exp\left(  {\huge F}(J_{1},J_{2})\right)  \left\langle G_{w}\ast\phi,\text{
}a+J_{1}\right\rangle =\\
\exp\left(  {\huge F}(J_{1},J_{2})\right)
{\displaystyle\int\limits_{\mathbb{Z}_{p}^{N}}}
\left(  a+J_{1}\right)  \left(  z\right)  \left(  G_{w}\ast\theta\right)
\left(  z\right)  d^{N}z=\\
\exp\left(  {\huge F}(J_{1},J_{2})\right)
{\displaystyle\int\limits_{\mathbb{Z}_{p}^{N}}}
{\displaystyle\int\limits_{\mathbb{Z}_{p}^{N}}}
G_{w}\left(  z-y\right)  \theta\left(  y\right)  \left(  a+J_{1}\right)
\left(  z\right)  d^{N}yd^{N}z=\\
\exp\left(  {\huge F}(J_{1},J_{2})\right)
{\displaystyle\int\limits_{\mathbb{Z}_{p}^{N}}}
\theta\left(  y\right)  \left\{
{\displaystyle\int\limits_{\mathbb{Z}_{p}^{N}}}
G_{w}\left(  z-y\right)  \left(  a+J_{1}\right)  \left(  z\right)
d^{N}z\right\}  d^{N}y=\\
\exp\left(  {\huge F}(J_{1},J_{2})\right)
{\displaystyle\int\limits_{\mathbb{Z}_{p}^{N}}}
\theta\left(  y\right)  \left[  \widetilde{G}_{w}\ast\left(  a+J_{1}\right)
\right]  \left(  y\right)  d^{N}y.
\end{gather*}
Thus%
\begin{equation}
\frac{\delta}{\delta_{2}J_{2}\left(  y_{i}\right)  }\exp\left(  {\huge F}%
(J_{1},J_{2})\right)  =\exp\left(  {\huge F}(J_{1},J_{2})\right)  \left(
\widetilde{G}_{w}\ast\left(  a+J_{1}\right)  \right)  \left(  y_{i}\right)  .
\label{Eq_Dervative_2}%
\end{equation}
More generally,%
\begin{equation}%
{\displaystyle\prod\limits_{i=1}^{n}}
\frac{\delta}{\delta_{2}J_{2}\left(  y_{i}\right)  }\exp\left(  {\huge F}%
(J_{1},J_{2})\right)  =\exp\left(  {\huge F}(J_{1},J_{2})\right)
{\displaystyle\prod\limits_{i=1}^{n}}
\left(  \widetilde{G}_{w}\ast\left(  a+J_{1}\right)  \right)  \left(
y_{i}\right)  \label{Formula_2}%
\end{equation}
and
\[
\boldsymbol{G}^{\left(  0+n\right)  }\left(  \left[  y_{i}\right]  _{1\leq
i\leq n}\right)  =A%
{\displaystyle\prod\limits_{i=1}^{n}}
\left(  \widetilde{G}_{w}\ast a\right)  \left(  y_{i}\right)  .
\]

\subsubsection{Computation of $\boldsymbol{G}^{\left(  m+1\right)  }\left(
\left[  x_{i}\right]  _{1\leq i\leq m},y\right)  $}

By using (\ref{Formula_1}) and
\begin{equation}
\frac{\delta}{\delta_{1}J_{1}\left(  x_{i}\right)  }H\left(  J_{1}\right)
K\left(  J_{1}\right)  =K\left(  J_{1}\right)  \frac{\delta}{\delta_{1}%
J_{1}\left(  x_{i}\right)  }H\left(  J_{1}\right)  +H\left(  J_{1}\right)
\frac{\delta}{\delta_{1}J_{1}\left(  x_{i}\right)  }K\left(  J_{1}\right)  ,
\label{Eq_Detivative_0}%
\end{equation}
one gets%
\begin{gather*}
\frac{\delta}{\delta_{2}J_{2}\left(  y\right)  }%
{\displaystyle\prod\limits_{i=1}^{m}}
\frac{\delta}{\delta_{1}J_{1}\left(  x_{i}\right)  }\exp\left(  {\huge F}%
(J_{1},J_{2})\right)  =\\
\frac{\delta}{\delta_{2}J_{2}\left(  y\right)  }\left(  \exp\left(
{\huge F}(J_{1},J_{2})\right)
{\displaystyle\prod\limits_{i=1}^{m}}
\left[  G_{w}\ast\left(  b+J_{2}\right)  \right]  \left(  x_{i}\right)
\right)  =\\%
{\displaystyle\prod\limits_{i=1}^{m}}
\left[  G_{w}\ast\left(  b+J_{2}\right)  \right]  \left(  x_{i}\right)
\frac{\delta}{\delta_{2}J_{2}\left(  y\right)  }\left(  \exp\left(
{\huge F}(J_{1},J_{2})\right)  \right)  +\\
\exp\left(  {\huge F}(J_{1},J_{2})\right)  \frac{\delta}{\delta_{2}%
J_{2}\left(  y\right)  }\left(
{\displaystyle\prod\limits_{i=1}^{m}}
\left[  G_{w}\ast\left(  b+J_{2}\right)  \right]  \left(  x_{i}\right)
\right)  .
\end{gather*}
Now, by using (\ref{Eq_Dervative_2}) and (\ref{Eq_Detivative_0}),%
\begin{gather*}
\frac{\delta}{\delta_{2}J_{2}\left(  y\right)  }%
{\displaystyle\prod\limits_{i=1}^{m}}
\frac{\delta}{\delta_{1}J_{1}\left(  x_{i}\right)  }\exp\left(  {\huge F}%
(J_{1},J_{2})\right)  =\\%
{\displaystyle\prod\limits_{i=1}^{m}}
\left[  G_{w}\ast\left(  b+J_{2}\right)  \right]  \left(  x_{i}\right)  \text{
}\left[  \widetilde{G}_{w}\ast\left(  a+J_{1}\right)  \right]  \left(
y\right)  \text{ }\exp\left(  {\huge F}(J_{1},J_{2})\right)  +\\
\exp\left(  {\huge F}(J_{1},J_{2})\right)
{\displaystyle\sum\limits_{k=1}^{m}}
G_{w}\left(  x_{k}-y\right)
{\displaystyle\prod\limits_{\substack{i=1\\i\neq k}}^{m}}
\left[  G_{w}\ast\left(  b+J_{2}\right)  \right]  \left(  x_{i}\right)  .
\end{gather*}
In the last line, we used that
\begin{equation}
\frac{\delta}{\delta_{2}J_{2}\left(  y\right)  }\left[  G\ast\left(
b+J_{2}\right)  \right]  \left(  x_{k}\right)  =G_{w}\left(  x_{k}-y\right)  .
\label{Eq_Propagator}%
\end{equation}
Therefore,%
\begin{align}
\boldsymbol{G}^{\left(  m+1\right)  }\left(  \left[  x_{i}\right]  _{1\leq
i\leq m},y\right)   &  =\left[  \widetilde{G}_{w}\ast a\right]  \left(
y\right)  \text{ }%
{\displaystyle\prod\limits_{i=1}^{m}}
\left[  G_{w}\ast b\right]  \left(  x_{i}\right)  +\label{Eq-Correlation}\\
&
{\displaystyle\sum\limits_{k=1}^{m}}
G_{w}\left(  x_{k}-y\right)
{\displaystyle\prod\limits_{\substack{i=1\\i\neq k}}^{m}}
\left[  G_{w}\ast b\right]  \left(  x_{i}\right)  .\nonumber
\end{align}

\subsubsection{$\boldsymbol{G}^{\left(  1+1\right)  }\left(  x,y\right)  $
versus $G_{w}\left(  \left\Vert x-y\right\Vert _{p}\right)  $}

By using formula (\ref{Eq-Correlation}) and assuming that the Green function
$\mathcal{F}_{\xi\rightarrow x}^{-1}\left(  \frac{1}{\widehat{w}\left(
\xi\right)  }\right)  =G_{w}\left(  \left\Vert x\right\Vert _{p}\right)  $ is
radial, we have%
\begin{equation}
\boldsymbol{G}^{\left(  1+1\right)  }\left(  x,y\right)  =\left(
\widetilde{G}_{w}\ast a\right)  \left(  x\right)  \left(  G_{w}\ast b\right)
\left(  y\right)  =\left(  G_{w}\ast a\right)  \left(  x\right)  \left(
G_{w}\ast b\right)  \left(  y\right)  . \label{Eq-Correlation-2}%
\end{equation}
We now assume that $a(x)=\Omega\left(  p^{L}\left\Vert x-\boldsymbol{x}%
_{0}\right\Vert _{p}\right)  $, where $L\in\mathbb{N\smallsetminus}\left\{
0\right\}  $, and $\boldsymbol{x}_{0}\in G_{L}^{N}=\left(  \mathbb{Z}%
_{p}/p^{l}\mathbb{Z}_{p}\right)  ^{N}$. This choice can be interpreted as
saying that the neurons (storing visible variables) in the ball
$\boldsymbol{x}_{0}+p^{L}\mathbb{Z}_{p}^{N}$ are excited while the neurons
outside of this ball are inhibited. We also take $b(x)=\Omega\left(
p^{L}\left\Vert x-\boldsymbol{y}_{0}\right\Vert _{p}\right)  $, where
$L\in\mathbb{N\smallsetminus}\left\{  0\right\}  $, and $\boldsymbol{y}_{0}\in
G_{L}^{N}$, with $\boldsymbol{x}_{0}\neq\boldsymbol{y}_{0}$. This choice has
an analog interpretation as the one given for $a(x)$.

Given a radial function $f\left(  \left\Vert x\right\Vert _{p}\right)  $
supported in $\mathbb{Z}_{p}^{N}$, and $\Omega\left(  p^{L}\left\Vert
x-\boldsymbol{x}_{0}\right\Vert _{p}\right)  $ as before,%
\begin{align*}
f\left(  \left\Vert x\right\Vert _{p}\right)  \ast\Omega\left(  p^{L}%
\left\Vert x-\boldsymbol{x}_{0}\right\Vert _{p}\right)   &  =%
{\displaystyle\sum\limits_{\boldsymbol{i}\in G_{L}^{N}}}
p^{-LN}f\left(  \left\Vert \boldsymbol{i}-\boldsymbol{x}_{0}\right\Vert
_{p}\right)  \Omega\left(  p^{L}\left\Vert x-\boldsymbol{i}\right\Vert
_{p}\right)  +\\
&  \left(
{\displaystyle\int\nolimits_{p^{L}\mathbb{Z}_{p}^{N}}}
f\left(  \left\Vert z\right\Vert _{p}\right)  d^{N}z\right)  \Omega\left(
p^{L}\left\Vert x-\boldsymbol{x}_{0}\right\Vert _{p}\right)  .
\end{align*}
By using this formula and (\ref{Eq-Correlation-2}), we have%
\[
\boldsymbol{G}^{\left(  1+1\right)  }\left(  x,y\right)  =%
{\displaystyle\sum\limits_{\boldsymbol{i}\in G_{L}^{N}}}
\text{ }%
{\displaystyle\sum\limits_{\boldsymbol{j}\in G_{L}^{N}}}
C_{\boldsymbol{i},\boldsymbol{j}}\Omega\left(  p^{L}\left\Vert
x-\boldsymbol{i}\right\Vert _{p}\right)  \Omega\left(  p^{L}\left\Vert
y-\boldsymbol{j}\right\Vert _{p}\right)  ,
\]
where%
\[
C_{\boldsymbol{i},\boldsymbol{j}}=\left\{
\begin{array}
[c]{ll}%
p^{-2LN}G_{w}(\left\Vert \boldsymbol{i}-\boldsymbol{x}_{0}\right\Vert
_{p})G_{w}(\left\Vert \boldsymbol{j}-\boldsymbol{y}_{0}\right\Vert _{p}) &
\boldsymbol{i}\neq\boldsymbol{x}_{0}\text{ and }\boldsymbol{j}\neq
\boldsymbol{y}_{0}\\
& \\
p^{-LN}\left(
{\displaystyle\int\nolimits_{p^{L}\mathbb{Z}_{p}^{N}}}
G_{w}\left(  \left\Vert z\right\Vert _{p}\right)  d^{N}z\right)
G_{w}(\left\Vert \boldsymbol{i}-\boldsymbol{x}_{0}\right\Vert _{p}) &
\boldsymbol{i}\neq\boldsymbol{x}_{0}\text{ and }\boldsymbol{j}=\boldsymbol{y}%
_{0}\\
& \\
p^{-LN}\left(
{\displaystyle\int\nolimits_{p^{L}\mathbb{Z}_{p}^{N}}}
G_{w}\left(  \left\Vert z\right\Vert _{p}\right)  d^{N}z\right)
G_{w}(\left\Vert \boldsymbol{j}-\boldsymbol{y}_{0}\right\Vert _{p}) &
\boldsymbol{i}=\boldsymbol{x}_{0}\text{ and }\boldsymbol{j}\neq\boldsymbol{y}%
_{0}\\
& \\
\left(
{\displaystyle\int\nolimits_{p^{L}\mathbb{Z}_{p}^{N}}}
G_{w}\left(  \left\Vert z\right\Vert _{p}\right)  d^{N}z\right)  ^{2} &
\boldsymbol{i}=\boldsymbol{x}_{0}\text{ and }\boldsymbol{j}=\boldsymbol{y}%
_{0}.
\end{array}
\right.
\]

\subsection{Computation of the correlation functions}

In this section, we provide a recursive formula to compute $\boldsymbol{G}%
^{\left(  m+n\right)  }\left(  \left[  x_{i}\right]  _{1\leq i\leq m},\left[
y_{i}\right]  _{1\leq i\leq n}\right)  $, for $m\geq1$, $n\geq0$, with $m\geq
n$, in terms of
\[
\left(  G_{w}\ast b\right)  \left(  x_{i}\right)  \text{, }G(x_{i}%
-y_{j})\text{, }\left(  \widetilde{G}_{w}\ast a\right)  \left(  y_{j}\right)
\text{, }1\leq i\leq m\text{, }1\leq j\leq n\text{.}%
\]
This formula is the analogue of the Wick-Isserlis Theorem. We first introduce
some formulae that we use in computation of the correlation functions.

\subsubsection{Formula 1}

Let $m,n\geq1$, and let $\left\{  r_{1},\ldots,r_{n}\right\}  $ be a finite
subset of positive integers. Then the following formula holds true:%
\begin{gather}
\left(
{\displaystyle\prod\limits_{i=1}^{n}}
\frac{\delta}{\delta_{2}J_{2}\left(  y_{r_{i}}\right)  }\right)
{\displaystyle\prod\limits_{i=1}^{m}}
\left[  G_{w}\ast\left(  b+J_{2}\right)  \right]  \left(  x_{i}\right)
=\label{Formula_3}\\
\nonumber\\
\left\{
\begin{array}
[c]{lll}%
{\displaystyle\sum\limits_{\left\{  k_{1},\ldots,k_{n}\right\}  \subseteq
\left\{  1,\ldots,m\right\}  }}
\text{ }%
{\displaystyle\prod\limits_{i=1}^{n}}
G_{w}(x_{k_{i}}-y_{r_{i}})%
{\displaystyle\prod\limits_{\substack{i=1\\i\notin\left\{  k_{1},\ldots
,k_{n}\right\}  }}^{m}}
\left[  G_{w}\ast\left(  b+J_{2}\right)  \right]  \left(  x_{i}\right)  &
\text{if} & m\geq n\\
&  & \\
0 & \text{if} & m<n.
\end{array}
\right. \nonumber
\end{gather}

This formula follows from (\ref{Eq_Propagator}) by induction.

\subsubsection{Formula 2}

For $n\geq1$,%

\begin{equation}%
{\displaystyle\prod\limits_{i=1}^{n}}
\frac{\delta}{\delta_{2}J_{2}\left(  y_{i}\right)  }A(J_{2})B(J_{2})=%
{\displaystyle\sum\limits_{k=0}^{n}}
A^{\left(  k\right)  }(J_{2})B^{\left(  n-k\right)  }(J_{2}),
\label{Formula_4}%
\end{equation}
where $A^{\left(  0\right)  }(J_{2}):=A(J_{2})$ and $B^{\left(  0\right)
}(J_{2}):=B(J_{2})$. If $k\geq1$ and $n-k\geq1$, then there are partitions of
$\left\{  1,\ldots,n\right\}  $ of the form%
\begin{equation}
\left\{  1,\ldots,n\right\}  =\left\{  i_{1},\ldots,i_{k}\right\}
{\textstyle\bigsqcup}
\left\{  j_{1},\ldots,j_{n-k}\right\}  , \label{partitions}%
\end{equation}
with $i_{1}<\ldots<i_{k}$\ \ and $j_{1}<\ldots<j_{n-k}$. In the cases $k=0$,
$n$ there are only trivial partitions: $\left\{  1,\ldots,n\right\}
=\varnothing%
{\displaystyle\bigsqcup}
\left\{  1,\ldots,n\right\}  =\left\{  1,\ldots,n\right\}
{\textstyle\bigsqcup}
\varnothing$. With this notation%
\[
A^{\left(  k\right)  }(J_{2})B^{\left(  n-k\right)  }(J_{2})=%
{\displaystyle\sum\limits_{\substack{\left\{  i_{1},\ldots,i_{k}\right\}
\\\left\{  j_{1},\ldots,j_{n-k}\right\}  }}}
\left(
{\displaystyle\prod\limits_{r=1}^{k}}
\frac{\delta}{\delta_{2}J_{2}\left(  y_{i_{r}}\right)  }A(J_{2})\right)
\left(
{\displaystyle\prod\limits_{r=1}^{n-k}}
\frac{\delta}{\delta_{2}J_{2}\left(  y_{j_{r}}\right)  }B(J_{2})\right)  ,
\]
where the sum runs over all partition of the form (\ref{partitions}),
including the trivial ones.

\subsubsection{Main calculation}

By using (\ref{Formula_1})-(\ref{Formula_4}),%

\begin{gather}%
{\displaystyle\prod\limits_{i=1}^{n}}
\frac{\delta}{\delta_{2}J_{2}\left(  y_{i}\right)  }%
{\displaystyle\prod\limits_{i=1}^{m}}
\frac{\delta}{\delta_{1}J_{1}\left(  x_{i}\right)  }\exp\left(  {\huge F}%
(J_{1},J_{2})\right)  =\label{Formula_5}\\
\left(
{\displaystyle\prod\limits_{i=1}^{n}}
\frac{\delta}{\delta_{2}J_{2}\left(  y_{i}\right)  }\right)  \exp\left(
{\huge F}(J_{1},J_{2})\right)
{\displaystyle\prod\limits_{i=1}^{m}}
\left[  G_{w}\ast\left(  b+J_{2}\right)  \right]  \left(  x_{i}\right)
=\nonumber\\%
{\displaystyle\sum\limits_{k=0}^{n}}
(\exp\left(  {\huge F}(J_{1},J_{2})\right)  ^{\left(  k\right)  })(%
{\displaystyle\prod\limits_{i=1}^{m}}
\left[  G_{w}\ast\left(  b+J_{2}\right)  \right]  \left(  x_{i}\right)
)^{\left(  n-k\right)  },\nonumber
\end{gather}
where%
\begin{gather}
(\exp\left(  {\huge F}(J_{1},J_{2})\right)  ^{\left(  k\right)  })(%
{\displaystyle\prod\limits_{i=1}^{m}}
\left[  G_{w}\ast\left(  b+J_{2}\right)  \right]  \left(  x_{i}\right)
)^{\left(  n-k\right)  }=\label{Formula_6}\\%
{\displaystyle\sum\limits_{\substack{\left\{  i_{1},\ldots,i_{k}\right\}
\\\left\{  j_{1},\ldots,j_{n-k}\right\}  }}}
\left(
{\displaystyle\prod\limits_{r=1}^{k}}
\frac{\delta}{\delta_{2}J_{2}\left(  y_{i_{r}}\right)  }\exp\left(
{\huge F}(J_{1},J_{2})\right)  \right)  \left(
{\displaystyle\prod\limits_{r=1}^{n-k}}
\frac{\delta}{\delta_{2}J_{2}\left(  y_{j_{r}}\right)  }%
{\displaystyle\prod\limits_{i=1}^{m}}
\left[  G_{w}\ast\left(  b+J_{2}\right)  \right]  \left(  x_{i}\right)
\right)  ,\nonumber
\end{gather}
and by (\ref{Formula_2}),%
\begin{gather}%
{\displaystyle\prod\limits_{r=1}^{k}}
\frac{\delta}{\delta_{2}J_{2}\left(  y_{i_{r}}\right)  }\exp\left(
{\huge F}(J_{1},J_{2})\right)  =\label{Formula_7}\\
\exp\left(  {\huge F}(J_{1},J_{2})\right)
{\displaystyle\prod\limits_{i=1}^{k}}
\left(  \widetilde{G}_{w}\ast\left(  a+J_{1}\right)  \right)  \left(
y_{i_{r}}\right)  ,\nonumber
\end{gather}
and by (\ref{Formula_3}),%
\begin{gather}%
{\displaystyle\prod\limits_{r=1}^{n-k}}
\frac{\delta}{\delta_{2}J_{2}\left(  y_{j_{r}}\right)  }%
{\displaystyle\prod\limits_{i=1}^{m}}
\left[  G_{w}\ast\left(  b+J_{2}\right)  \right]  \left(  x_{i}\right)
=\label{Formula_8}\\
\nonumber\\
\left\{
\begin{array}
[c]{ll}%
{\displaystyle\sum\limits_{\left\{  j_{1},\ldots,j_{n-k}\right\}
\subseteq\left\{  1,\ldots,m\right\}  }}
\text{ }%
{\displaystyle\prod\limits_{i=1}^{n-k}}
G_{w}(x_{j_{i}}-y_{j_{i}})%
{\displaystyle\prod\limits_{\substack{i=1\\i\notin\left\{  j_{1}%
,\ldots,j_{n-k}\right\}  }}^{m}}
\left[  G_{w}\ast\left(  b+J_{2}\right)  \right]  \left(  x_{i}\right)  &
\text{if }m\geq n-k\\
& \\
0 & \text{if }m<n-k.
\end{array}
\right. \nonumber
\end{gather}
Therefore,%
\begin{gather}%
{\displaystyle\sum\limits_{\substack{\left\{  i_{1},\ldots,i_{k}\right\}
\\\left\{  j_{1},\ldots,j_{n-k}\right\}  }}}
\left(
{\displaystyle\prod\limits_{r=1}^{k}}
\frac{\delta}{\delta_{2}J_{2}\left(  y_{i_{r}}\right)  }\exp\left(
{\huge F}(J_{1},J_{2})\right)  \right)  \left(
{\displaystyle\prod\limits_{r=1}^{n-k}}
\frac{\delta}{\delta_{2}J_{2}\left(  y_{j_{r}}\right)  }%
{\displaystyle\prod\limits_{i=1}^{m}}
\left[  G_{w}\ast\left(  b+J_{2}\right)  \right]  \left(  x_{i}\right)
\right) \label{Formula_9}\\
=%
{\displaystyle\sum\limits_{\substack{\left\{  i_{1},\ldots,i_{k}\right\}
\\\left\{  j_{1},\ldots,j_{n-k}\right\}  }}}
\left[  \exp\left(  {\huge F}(J_{1},J_{2})\right)
{\displaystyle\prod\limits_{i=1}^{k}}
\left(  \widetilde{G}_{w}\ast\left(  a+J_{1}\right)  \right)  \left(
y_{i_{r}}\right)  \right]  \times\nonumber\\
\left[
{\displaystyle\sum\limits_{\left\{  j_{1},\ldots,j_{n-k}\right\}
\subseteq\left\{  1,\ldots,m\right\}  }}
\text{ }%
{\displaystyle\prod\limits_{i=1}^{n-k}}
G_{w}(x_{j_{i}}-y_{j_{i}})%
{\displaystyle\prod\limits_{\substack{i=1\\i\notin\left\{  j_{1}%
,\ldots,j_{n-k}\right\}  }}^{m}}
\left[  G_{w}\ast\left(  b+J_{2}\right)  \right]  \left(  x_{i}\right)
\right]  ,\nonumber
\end{gather}
where we assume that the last factor is zero if $m<n-k$, and that $\left\{
1,\ldots,n\right\}  =\left\{  i_{1},\ldots,i_{k}\right\}
{\textstyle\bigsqcup}
\left\{  j_{1},\ldots,j_{n-k}\right\}  $.

In order to give an explicit formula for $\boldsymbol{G}^{\left(  m+n\right)
}\left(  \left[  x_{i}\right]  _{1\leq i\leq m},\left[  y_{i}\right]  _{1\leq
i\leq n}\right)  $, we recast formulae (\ref{Formula_5})-(\ref{Formula_9})
into a combinatorial framework.

\begin{definition}
Let $\mathbb{I}$, $\mathbb{J}$ be two finite subsets of positive integers. A
$3$-partition of $\mathbb{I}$ subordinated to $\mathbb{J}$ is determined by
three subsets $(A,B,C)$, where $A\subseteq\mathbb{I}$, and $B$, $C$
$\subseteq\mathbb{J}$. These subsets are determined as follows:

\begin{enumerate}
\item If $\mathbb{I=}\left\{  1,\ldots,n\right\}  $, $\mathbb{J=}\left\{
1,\ldots,m\right\}  $ with $m,n\geq1$. Given any $A\subseteq\mathbb{I}$ with
cardinality $\#A=k$, if $n-k\leq m$, then $B$ is a subset of $\mathbb{J}$ with
cardinality $n-k$ and $C$ is the complement of $B$\ in $\mathbb{J}$, i.e.,
$\mathbb{J}=B%
{\textstyle\bigsqcup}
C$.

\item If $\mathbb{I=\emptyset}$, $\mathbb{J=}\left\{  1,\ldots,m\right\}  $
with $m\geq1$, then $B$ $=\mathbb{J}$, $A$, $C\mathbb{=\emptyset}$.

\item If $\mathbb{I=}1,\ldots,n$, $\mathbb{J}=\emptyset$, then $A=\mathbb{I}$,
$B=C=\varnothing$.
\end{enumerate}
\end{definition}

\begin{definition}
Given $A$, $B$, $C$ a $3$-partition of $\mathbb{I=}\left\{  1,\ldots
,n\right\}  $ subordinated to $\mathbb{J=}\left\{  1,\ldots,m\right\}  $, we
define%
\begin{align*}
\boldsymbol{G}_{(A,B,C)}^{\left(  m+n\right)  }\left(  \left[  x_{i}\right]
_{1\leq i\leq m},\left[  y_{i}\right]  _{1\leq i\leq n}\right)   &  =\\
&
{\displaystyle\prod\limits_{j\in A}}
\left(  \widetilde{G}_{w}\ast a\right)  \left(  y_{j}\right)
{\displaystyle\prod\limits_{j\in B}}
G_{w}\left(  x_{j}-y_{j}\right)
{\displaystyle\prod\limits_{j\in C}}
\left(  G_{w}\ast b\right)  \left(  x_{j}\right)
\end{align*}
as a distribution from $\mathcal{D}^{\prime}(\mathbb{Z}_{p}^{\left(
m+n\right)  })$. Here we use the convention $%
{\textstyle\prod\nolimits_{j\in\emptyset}}
$ $=1$.
\end{definition}

\begin{theorem}
\label{Theorem3}With the above notation,%
\[
\boldsymbol{G}^{\left(  m+n\right)  }\left(  \left[  x_{i}\right]  _{1\leq
i\leq m},\left[  y_{i}\right]  _{1\leq i\leq n}\right)  =%
{\displaystyle\sum\limits_{\left(  A,B,C\right)  }}
\boldsymbol{G}_{(A,B,C)}^{\left(  m+n\right)  }\left(  \left[  x_{i}\right]
_{1\leq i\leq m},\left[  y_{i}\right]  _{1\leq i\leq n}\right)  ,
\]
where $\left(  A,B,C\right)  $ runs over all the $3$-partitions of
$\mathbb{I}$ subordinated to $\mathbb{J}$.
\end{theorem}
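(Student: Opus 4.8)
The plan is to start from the closed expression supplied by Theorem \ref{Theorem2} (equivalently Lemma \ref{Lemma6A}), which represents $\boldsymbol{G}^{\left(m+n\right)}$ as the mixed functional derivative $\prod_{i=1}^{n}\frac{\delta}{\delta_{2}J_{2}\left(y_{i}\right)}\prod_{i=1}^{m}\frac{\delta}{\delta_{1}J_{1}\left(x_{i}\right)}\exp\left({\huge F}(J_{1},J_{2})\right)$ evaluated at $J_{1}=J_{2}=0$, carrying the overall constant produced in that theorem. Since every individual differentiation rule has already been recorded in formulas (\ref{Formula_1})--(\ref{Formula_9}), the proof is not an analytic computation but a bookkeeping argument: I would assemble those formulas into one explicit expression and then recognize the resulting multiple sum as an enumeration indexed by the $3$-partitions $(A,B,C)$.

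Concretely, I would first carry out all $m$ derivatives in $J_{1}$ using (\ref{Formula_1}), which turns the left-hand side into $\exp\left({\huge F}\right)\prod_{i=1}^{m}\left[G_{w}\ast\left(b+J_{2}\right)\right]\left(x_{i}\right)$. Then I would apply the $n$ derivatives in $J_{2}$ to this \emph{product} of two $J_{2}$-dependent factors by the generalized Leibniz rule (\ref{Formula_4}), which distributes the $n$ derivatives into $k$ that strike $\exp\left({\huge F}\right)$ and $n-k$ that strike $\prod_{i}\left[G_{w}\ast\left(b+J_{2}\right)\right]\left(x_{i}\right)$, summed over all splittings $\left\{1,\ldots,n\right\}=\left\{i_{1},\ldots,i_{k}\right\}\sqcup\left\{j_{1},\ldots,j_{n-k}\right\}$. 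The $k$ derivatives on $\exp\left({\huge F}\right)$ reproduce $\exp\left({\huge F}\right)\prod\left(\widetilde{G}_{w}\ast\left(a+J_{1}\right)\right)\left(y_{i_{r}}\right)$ by (\ref{Formula_7}); the $n-k$ derivatives on the $x$-product, via the single-propagator identity (\ref{Eq_Propagator}) iterated into (\ref{Formula_3})/(\ref{Formula_8}), yield a sum over injective assignments of the remaining $y$-indices to distinct $x$-indices, producing propagators $G_{w}\left(x-y\right)$ on the matched pairs and leaving $G_{w}\ast\left(b+J_{2}\right)$ factors on the unmatched $x$-indices. Collecting these is precisely (\ref{Formula_9}).

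Finally I would set $J_{1}=J_{2}=0$, so that $\left[G_{w}\ast\left(b+J_{2}\right)\right]\left(x\right)\to\left(G_{w}\ast b\right)\left(x\right)$ and $\left(\widetilde{G}_{w}\ast\left(a+J_{1}\right)\right)\left(y\right)\to\left(\widetilde{G}_{w}\ast a\right)\left(y\right)$, and then read off from each surviving term a triple $(A,B,C)$: $A\subseteq\left\{1,\ldots,n\right\}$ is the set of $y$-indices whose derivative struck $\exp\left({\huge F}\right)$ (contributing the factors $\left(\widetilde{G}_{w}\ast a\right)\left(y_{j}\right)$), $B\subseteq\left\{1,\ldots,m\right\}$ is the image of the injection pairing the remaining $y$-indices with $x$-indices (contributing $G_{w}\left(x_{j}-y_{j}\right)$), and $C=\left\{1,\ldots,m\right\}\smallsetminus B$ gives the factors $\left(G_{w}\ast b\right)\left(x_{j}\right)$. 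I would then verify that the compound index set (the value of $k$, the ordered splitting of $\left\{1,\ldots,n\right\}$, and the injection into $\left\{1,\ldots,m\right\}$) is in bijection with the set of $3$-partitions, so that the term attached to each $(A,B,C)$ is exactly $\boldsymbol{G}_{(A,B,C)}^{\left(m+n\right)}$, and separately dispose of the degenerate cases $\mathbb{I}=\varnothing$ and $\mathbb{J}=\varnothing$ together with the boundary values $k=0$ and $k=n$.

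The main obstacle is precisely this combinatorial identification, and within it the pairing bookkeeping: the shorthand $\prod_{j\in B}G_{w}\left(x_{j}-y_{j}\right)$ in the definition of $\boldsymbol{G}_{(A,B,C)}^{\left(m+n\right)}$ must be read as a pairing along a bijection between $B$ and $\mathbb{I}\smallsetminus A$, and I must check that summing over $3$-partitions equipped with this pairing datum reproduces the injection sum of (\ref{Formula_3}) with neither over- nor under-counting. The feasibility constraint $n-k\le m$ appearing in (\ref{Formula_8}) corresponds exactly to the existence of such injections, which is why the hypothesis $m\ge n$ is needed: it guarantees that even the term with $A=\varnothing$ (all $n$ of the $y$-indices paired) is nonempty, so no admissible $3$-partition is spuriously killed.
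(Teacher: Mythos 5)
Your proposal follows essentially the same route as the paper: the paper's argument for Theorem \ref{Theorem3} is precisely the ``Main calculation'' assembling (\ref{Formula_1}), (\ref{Formula_4}), (\ref{Formula_7}), and (\ref{Formula_8}) into (\ref{Formula_5})--(\ref{Formula_9}) and then reading off the $3$-partition indexing after setting $J_{1}=J_{2}=0$. Your additional remarks --- that $\prod_{j\in B}G_{w}\left(x_{j}-y_{j}\right)$ must be interpreted via the bijection between $B$ and $\mathbb{I}\smallsetminus A$, and that $m\geq n$ ensures no splitting is killed by the feasibility constraint in (\ref{Formula_8}) --- are correct clarifications of bookkeeping the paper leaves implicit, not a different argument.
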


\section{\label{Section_9}$\left\{  \boldsymbol{v},\boldsymbol{h}\right\}
^{4}$-deep Boltzmann machines}

In this section, we compute the correlation functions for DBMs having energy
functionals of the form
\begin{gather}
E(\boldsymbol{v},\boldsymbol{h};\boldsymbol{\theta})=-\left\langle
\boldsymbol{v},a\right\rangle -\left\langle \boldsymbol{h},b\right\rangle -%
{\displaystyle\iint\limits_{\mathbb{Z}_{p}^{N}\times\mathbb{Z}_{p}^{N}}}
\boldsymbol{h}\left(  x\right)  w\left(  x-y\right)  \boldsymbol{v}\left(
y\right)  d^{N}yd^{N}x+\label{energy_functional_ultimo}\\
c%
{\displaystyle\int\limits_{\mathbb{Z}_{p}^{N}}}
\boldsymbol{v}^{4}\left(  x\right)  d^{N}x+d%
{\displaystyle\int\limits_{\mathbb{Z}_{p}^{N}}}
\boldsymbol{h}^{4}\left(  x\right)  d^{N}x=E^{\text{spin}}(\boldsymbol{v}%
,\boldsymbol{h};\boldsymbol{\theta})-c\left\langle 1,\boldsymbol{v}%
^{4}\right\rangle -d\left\langle 1,\boldsymbol{h}^{4}\right\rangle ,\nonumber
\end{gather}
where $a,b,w\in L_{\mathbb{R}}^{\infty}\left(  \mathbb{Z}_{p}^{N}\right)  $,
$c,d\in\mathbb{R}$, and $\boldsymbol{\theta}=\left(  w,a,b,c,d\right)  $. In
practical applications the parameters $\boldsymbol{\theta}=\left(
w,a,b,c,d\right)  $ of a discretization of $E(\boldsymbol{v},\boldsymbol{h}%
;\boldsymbol{\theta})$ are tuned by using a stochastic gradient descent
algorithm, and thus the sign of the parameters $c,d$ cannot be fixed a priory.
For this reason, in this section, we assume that $\boldsymbol{v}%
,\boldsymbol{h}\in\boldsymbol{B}_{M}^{\left(  \infty\right)  }=\left\{  f\in
L^{\infty}(\mathbb{Z}_{p}^{N});\left\Vert f\right\Vert _{\infty}\leq
M\right\}  $. We set%
\[
\mathcal{Z}^{\left(  \infty\right)  }:=\mathcal{Z}^{\left(  \infty\right)
}\left(  \boldsymbol{\theta}\right)  =%
{\displaystyle\iint\limits_{\boldsymbol{B}_{M}^{\left(  \infty\right)  }%
\times\boldsymbol{B}_{M}^{\left(  \infty\right)  }}}
e^{-E(\boldsymbol{v},\boldsymbol{h};\boldsymbol{\theta})}d\mathbb{P}_{K_{1}%
}\left(  \boldsymbol{v}\right)  \otimes d\mathbb{P}_{K_{2}}\left(
\boldsymbol{h}\right)
\]
The generating functional attached to $E(\boldsymbol{v},\boldsymbol{h}%
;\boldsymbol{\theta})$ is
\begin{multline*}
Z(J_{1},J_{2};c,d):=Z(J_{1},J_{2};\boldsymbol{\theta})=\frac{1}{\mathcal{Z}%
^{\left(  \infty\right)  }}%
{\displaystyle\iint\limits_{\boldsymbol{B}_{M}^{\left(  \infty\right)  }%
\times\boldsymbol{B}_{M}^{\left(  \infty\right)  }}}
e^{-E^{\text{spin}}(\boldsymbol{v},\boldsymbol{h};\boldsymbol{\theta
})-c\left\langle 1,\boldsymbol{v}^{4}\right\rangle -d\left\langle
1,\boldsymbol{h}^{4}\right\rangle }\times\\
e^{\left\langle \boldsymbol{v},J_{1}\right\rangle +\left\langle \boldsymbol{h}%
,J_{2}\right\rangle }d\mathbb{P}_{K_{1}}\left(  \boldsymbol{v}\right)  \otimes
d\mathbb{P}_{K_{2}}\left(  \boldsymbol{h}\right)  .
\end{multline*}
Since
\[
\exp\left(  E^{\text{spin}}(\boldsymbol{v},\boldsymbol{h};\boldsymbol{\theta
})-c\left\langle 1,\boldsymbol{v}^{4}\right\rangle -d\left\langle
1,\boldsymbol{h}^{4}\right\rangle +\left\langle \boldsymbol{v},J_{1}%
\right\rangle +\left\langle \boldsymbol{h},J_{2}\right\rangle \right)
\]
is integrable, cf. Lemma \ref{Lemma5}, by using the dominated convergence
theorem and the Taylor expansion of the functions $e^{-c\left\langle
1,\boldsymbol{v}^{4}\right\rangle }$, $e^{-d\left\langle 1,\boldsymbol{h}%
^{4}\right\rangle }$, one gets that%
\begin{gather*}
Z(J_{1},J_{2};c,d)=\frac{1}{\mathcal{Z}^{\left(  \infty\right)  }}%
{\displaystyle\iint\limits_{\boldsymbol{B}_{M}^{\left(  \infty\right)  }%
\times\boldsymbol{B}_{M}^{\left(  \infty\right)  }}}
e^{-E^{\text{spin}}(\boldsymbol{v},\boldsymbol{h};\boldsymbol{\theta
})+\left\langle \boldsymbol{v},J_{1}\right\rangle +\left\langle \boldsymbol{h}%
,J_{2}\right\rangle }d\mathbb{P}_{K_{1}}\left(  \boldsymbol{v}\right)  \otimes
d\mathbb{P}_{K_{2}}\left(  \boldsymbol{h}\right)  -\\
\frac{c}{\mathcal{Z}^{\left(  \infty\right)  }}%
{\displaystyle\iint\limits_{\boldsymbol{B}_{M}^{\left(  \infty\right)  }%
\times\boldsymbol{B}_{M}^{\left(  \infty\right)  }}}
e^{-E^{\text{spin}}(\boldsymbol{v},\boldsymbol{h};\boldsymbol{\theta
})+\left\langle \boldsymbol{v},J_{1}\right\rangle +\left\langle \boldsymbol{h}%
,J_{2}\right\rangle }\left\{
{\displaystyle\int\limits_{\mathbb{Z}_{p}^{N}}}
\boldsymbol{v}^{4}\left(  r_{1}\right)  d^{N}r_{1}\right\}  \text{
}d\mathbb{P}_{K_{1}}\left(  \boldsymbol{v}\right)  \otimes d\mathbb{P}_{K_{2}%
}\left(  \boldsymbol{h}\right)  -\\
\frac{d}{\mathcal{Z}^{\left(  \infty\right)  }}%
{\displaystyle\iint\limits_{\boldsymbol{B}_{M}^{\left(  \infty\right)  }%
\times\boldsymbol{B}_{M}^{\left(  \infty\right)  }}}
e^{-E^{\text{spin}}(\boldsymbol{v},\boldsymbol{h};\boldsymbol{\theta
})+\left\langle \boldsymbol{v},J_{1}\right\rangle +\left\langle \boldsymbol{h}%
,J_{2}\right\rangle }\left\{
{\displaystyle\int\limits_{\mathbb{Z}_{p}^{N}}}
\boldsymbol{h}^{4}\left(  s_{1}\right)  d^{N}s_{1}\right\}  \text{
}d\mathbb{P}_{K_{1}}\left(  \boldsymbol{v}\right)  \otimes d\mathbb{P}_{K_{2}%
}\left(  \boldsymbol{h}\right)  +\\
\frac{1}{\mathcal{Z}^{\left(  \infty\right)  }}%
{\displaystyle\sum\limits_{i=1}^{\infty}}
{\displaystyle\sum\limits_{j=1}^{\infty}}
\frac{\left(  -1\right)  ^{i+j}c^{i}d^{j}}{i!j!}%
{\displaystyle\iint\limits_{\boldsymbol{B}_{M}^{\left(  \infty\right)  }%
\times\boldsymbol{B}_{M}^{\left(  \infty\right)  }}}
\left\{  e^{-E^{\text{spin}}(\boldsymbol{v},\boldsymbol{h};\boldsymbol{\theta
})+\left\langle \boldsymbol{v},J_{1}\right\rangle +\left\langle \boldsymbol{h}%
,J_{2}\right\rangle }\times\right. \\
\left.  \left(
{\displaystyle\prod_{k=1}^{i}}
\text{ }%
{\displaystyle\int\limits_{\mathbb{Z}_{p}^{N}}}
\boldsymbol{v}^{4}\left(  r_{k}\right)  d^{N}x_{k}\right)  \left(
{\displaystyle\prod_{k=1}^{j}}
\text{ }%
{\displaystyle\int\limits_{\mathbb{Z}_{p}^{N}}}
\boldsymbol{h}^{4}\left(  s_{k}\right)  d^{N}y_{k}\right)  \right\}
d\mathbb{P}_{K_{1}}\left(  \boldsymbol{v}\right)  \otimes d\mathbb{P}_{K_{2}%
}\left(  \boldsymbol{h}\right)
\end{gather*}%
\[
=:\frac{\mathcal{Z}_{\text{spin}}^{\left(  \infty\right)  }}{\mathcal{Z}%
^{\left(  \infty\right)  }}%
{\displaystyle\sum\limits_{i=0}^{\infty}}
{\displaystyle\sum\limits_{j=0}^{\infty}}
\frac{\left(  -1\right)  ^{i+j}c^{i}d^{j}}{i!j!}Z_{i,j}^{\text{spin}}%
(J_{1},J_{2};\boldsymbol{\theta}),
\]
where in $Z_{i,j}^{\text{spin}}(J_{1},J_{2};\boldsymbol{\theta})$,
$\boldsymbol{\theta}=(w,a,b,0,0)$, and
\[
\mathcal{Z}_{\text{spin}}^{\left(  \infty\right)  }=%
{\displaystyle\iint\limits_{\boldsymbol{B}_{M}^{\left(  \infty\right)  }%
\times\boldsymbol{B}_{M}^{\left(  \infty\right)  }}}
e^{\left\langle \boldsymbol{h},\boldsymbol{Wv}\right\rangle }d\mathbb{P}%
_{K_{1}}\left(  \boldsymbol{v}\right)  \otimes d\mathbb{P}_{K_{2}}\left(
\boldsymbol{h}\right)  .
\]

All the results given in Section \ref{Section_Spin_Glassses}, in particular
Theorem \ref{Theorem3}, are valid for%
\[
Z_{0,0}^{\text{spin}}(J_{1},J_{2};\boldsymbol{\theta})=\frac{1}{\mathcal{Z}%
^{\left(  \infty\right)  }}%
{\displaystyle\iint\limits_{\boldsymbol{B}_{M}^{\left(  \infty\right)  }%
\times\boldsymbol{B}_{M}^{\left(  \infty\right)  }}}
e^{-E^{\text{spin}}(\boldsymbol{v},\boldsymbol{h};\boldsymbol{\theta
})+\left\langle \boldsymbol{v},J_{1}\right\rangle +\left\langle \boldsymbol{h}%
,J_{2}\right\rangle }d\mathbb{P}_{K_{1}}\left(  \boldsymbol{v}\right)  \otimes
d\mathbb{P}_{K_{2}}\left(  \boldsymbol{h}\right)  .
\]
It is only necessary to change $\mathcal{Z}^{\left(  2\right)  }$\ by
$\mathcal{Z}^{\left(  \infty\right)  }$, and $\boldsymbol{B}_{M}^{\left(
2\right)  }\times\boldsymbol{B}_{M}^{\left(  2\right)  }$ by $\boldsymbol{B}%
_{M}^{\left(  \infty\right)  }\times\boldsymbol{B}_{M}^{\left(  \infty\right)
}$.\ We denote the correlation functions attached to $Z_{0,0}^{\text{spin}%
}(J_{1},J_{2};\boldsymbol{\theta})$ as $\boldsymbol{G}^{\left(  m+n\right)
}\left(  x,y;0,0\right)  $, $x\in\mathbb{Z}_{p}^{Nm}$, $y\in\mathbb{Z}%
_{p}^{Nn}$.

By using Fubini's theorem , one gets that%
\begin{multline*}
Z_{1,0}^{\text{spin}}(J_{1},J_{2};\boldsymbol{\theta})=\frac{1}{\mathcal{Z}%
_{\text{spin}}^{\left(  \infty\right)  }}%
{\displaystyle\int\limits_{\mathbb{Z}_{p}^{N}}}
\text{ \ }%
{\displaystyle\iint\limits_{\boldsymbol{B}_{M}^{\left(  \infty\right)  }%
\times\boldsymbol{B}_{M}^{\left(  \infty\right)  }}}
e^{-E^{\text{spin}}(\boldsymbol{v},\boldsymbol{h};\boldsymbol{\theta
})+\left\langle \boldsymbol{v},J_{1}\right\rangle +\left\langle \boldsymbol{h}%
,J_{2}\right\rangle }\boldsymbol{v}^{4}\left(  r_{1}\right)  \times\\
d\mathbb{P}_{K_{1}}\left(  \boldsymbol{v}\right)  \otimes d\mathbb{P}_{K_{2}%
}\left(  \boldsymbol{h}\right)  d^{N}r_{1}=%
{\displaystyle\int\limits_{\mathbb{Z}_{p}^{N}}}
\text{ }\boldsymbol{G}^{\left(  4+0\right)  }\left(  r_{1},r_{1},r_{1}%
,r_{1};0,0\right)  d^{N}r_{1}\text{,}%
\end{multline*}%
\begin{multline*}
Z_{0,1}^{\text{spin}}(J_{1},J_{2};\boldsymbol{\theta})=\frac{1}{\mathcal{Z}%
_{\text{spin}}^{\left(  \infty\right)  }}%
{\displaystyle\int\limits_{\mathbb{Z}_{p}^{N}}}
\text{ \ }%
{\displaystyle\iint\limits_{\boldsymbol{B}_{M}^{\left(  \infty\right)  }%
\times\boldsymbol{B}_{M}^{\left(  \infty\right)  }}}
e^{-E^{\text{spin}}(\boldsymbol{v},\boldsymbol{h};\boldsymbol{\theta
})+\left\langle \boldsymbol{v},J_{1}\right\rangle +\left\langle \boldsymbol{h}%
,J_{2}\right\rangle }\boldsymbol{h}^{4}\left(  s_{1}\right)  \text{ }\times\\
d\mathbb{P}_{K_{1}}\left(  \boldsymbol{v}\right)  \otimes d\mathbb{P}_{K_{2}%
}\left(  \boldsymbol{h}\right)  \text{ \ }d^{N}s_{1}=%
{\displaystyle\int\limits_{\mathbb{Z}_{p}^{N}}}
\text{ }\boldsymbol{G}^{\left(  0+4\right)  }\left(  s_{1},s_{1},s_{1}%
,s_{1};0,0\right)  d^{N}s_{1},
\end{multline*}%
\begin{gather*}
Z_{\left(  i,j\right)  }^{\text{spin}}(J_{1},J_{2};\boldsymbol{\theta}%
)=\frac{1}{\mathcal{Z}^{\left(  \infty\right)  }}%
{\displaystyle\int\limits_{\mathbb{Z}_{p}^{iN}}}
\text{ \ }%
{\displaystyle\int\limits_{\mathbb{Z}_{p}^{jN}}}
\text{ \ }%
{\displaystyle\iint\limits_{\boldsymbol{B}_{M}^{\left(  \infty\right)  }%
\times\boldsymbol{B}_{M}^{\left(  \infty\right)  }}}
e^{-E^{\text{spin}}(\boldsymbol{v},\boldsymbol{h};\boldsymbol{\theta
})+\left\langle \boldsymbol{v},J_{1}\right\rangle +\left\langle \boldsymbol{h}%
,J_{2}\right\rangle }\times\\
\left(
{\displaystyle\prod_{k=1}^{i}}
\text{ }\boldsymbol{v}^{4}\left(  r_{k}\right)  \right)  \left(
{\displaystyle\prod_{k=1}^{j}}
\text{ }\boldsymbol{h}^{4}\left(  s_{k}\right)  \right)  d\mathbb{P}_{K_{1}%
}\left(  \boldsymbol{v}\right)  \otimes d\mathbb{P}_{K_{2}}\left(
\boldsymbol{h}\right)  \text{ \ }%
{\displaystyle\prod_{k=1}^{i}}
d^{N}r_{k}%
{\displaystyle\prod_{k=1}^{j}}
d^{N}s_{k}.
\end{gather*}
By using the notation
\[
H\left(  4\left[  x_{i}\right]  _{1\leq i\leq m},4\left[  y_{i}\right]
_{1\leq i\leq n}\right)  :=H\left(  \ldots,x_{i},x_{i},x_{i},x_{i}%
,\ldots,y_{i},y_{i},y_{i},y_{i},\ldots\right)  ,
\]
we have%
\[
Z_{m,n}^{\text{spin}}(0,0;\boldsymbol{\theta})=%
{\displaystyle\int\limits_{\mathbb{Z}_{p}^{mN}}}
\text{ \ }%
{\displaystyle\int\limits_{\mathbb{Z}_{p}^{nN}}}
\text{ \ }\boldsymbol{G}^{\left(  4m+4n\right)  }\left(  4\left[
r_{i}\right]  _{1\leq i\leq m},4\left[  s_{i}\right]  _{1\leq i\leq
n};0,0\right)
{\displaystyle\prod_{k=1}^{m}}
d^{N}x_{k}%
{\displaystyle\prod_{k=1}^{n}}
d^{N}y_{i}.
\]
We denote by $\boldsymbol{G}^{\left(  m+n\right)  }\left(  x,y;c,d\right)  $,
$x\in\mathbb{Z}_{p}^{Nm}$, $y\in\mathbb{Z}_{p}^{Nn}$ the correlation functions
attached to $Z(J_{1},J_{2};c,d)$. By using Lemma \ref{Lemma6A},%
\[
\boldsymbol{G}^{\left(  m+n\right)  }\left(  x,y;c,d\right)  =\left[
{\textstyle\prod\limits_{j=1}^{n}}
\frac{\delta}{\delta_{2}J_{2}\left(  y_{j}\right)  }\text{ }%
{\textstyle\prod\limits_{i=1}^{m}}
\frac{\delta}{\delta_{1}J_{1}\left(  x_{i}\right)  }Z(J_{1},J_{2};c,d)\right]
_{\substack{J_{0}=0\\J_{1}=0}}.
\]
In order to compute $\boldsymbol{G}^{\left(  m+n\right)  }\left(
x,y;c,d\right)  $, we need some preliminary results.

\begin{lemma}
\label{Lemma8}Set
\[
\mathcal{G}(J_{1},J_{2}):=\left(
{\displaystyle\sum\limits_{i=1}^{\infty}}
\text{ }%
{\displaystyle\sum\limits_{j=1}^{\infty}}
\frac{\left(  -1\right)  ^{i+j}c^{i}d^{j}}{i!j!}Z_{i,j}^{\text{spin}}%
(J_{1},J_{2};\boldsymbol{\theta})\right)  .
\]
Then%
\begin{gather*}
\left[  \frac{d}{d\epsilon}\mathcal{G}(J_{1}+\epsilon\phi,J_{2})\right]
_{\epsilon=0}=\\
=%
{\displaystyle\sum\limits_{i=1}^{\infty}}
\text{ }%
{\displaystyle\sum\limits_{j=1}^{\infty}}
\frac{\left(  -1\right)  ^{i+j}c^{i}d^{j}}{i!j!}\frac{\mathcal{Z}%
_{\text{spin}}^{\left(  \infty\right)  }}{\mathcal{Z}^{\left(  \infty\right)
}}%
{\displaystyle\int\limits_{\mathbb{Z}_{p}^{iN}}}
\text{ \ }%
{\displaystyle\int\limits_{\mathbb{Z}_{p}^{jN}}}
\text{ \ }%
{\displaystyle\iint\limits_{\boldsymbol{B}_{M}^{\left(  \infty\right)  }%
\times\boldsymbol{B}_{M}^{\left(  \infty\right)  }}}
e^{-E^{\text{spin}}(\boldsymbol{v},\boldsymbol{h};\boldsymbol{\theta
})+\left\langle \boldsymbol{v},J_{1}\right\rangle +\left\langle \boldsymbol{h}%
,J_{2}\right\rangle }\left\langle \boldsymbol{v},\phi\right\rangle \times\\
\left(
{\displaystyle\prod_{k=1}^{i}}
\text{ }\boldsymbol{v}^{4}\left(  r_{k}\right)  \right)  \left(
{\displaystyle\prod_{k=1}^{j}}
\text{ }\boldsymbol{h}^{4}\left(  s_{k}\right)  \right)  d\mathbb{P}_{K_{1}%
}\left(  \boldsymbol{v}\right)  \otimes d\mathbb{P}_{K_{2}}\left(
\boldsymbol{h}\right)  \text{ \ }%
{\displaystyle\prod_{k=1}^{i}}
d^{N}r_{k}%
{\displaystyle\prod_{k=1}^{j}}
d^{N}s_{k},
\end{gather*}
and%
\begin{multline*}
\frac{\delta}{\delta_{1}J_{1}\left(  x_{i}\right)  }\left(
{\displaystyle\sum\limits_{i=1}^{\infty}}
\text{ }%
{\displaystyle\sum\limits_{j=1}^{\infty}}
\frac{\left(  -1\right)  ^{i+j}c^{i}d^{j}}{i!j!}Z_{i,j}^{\text{spin}}%
(J_{1},J_{2};\boldsymbol{\theta})\right)  =\\
=%
{\displaystyle\sum\limits_{i=1}^{\infty}}
\text{ }%
{\displaystyle\sum\limits_{j=1}^{\infty}}
\frac{\left(  -1\right)  ^{i+j}c^{i}d^{j}}{i!j!}\text{ \ }\frac{\mathcal{Z}%
_{\text{spin}}^{\left(  \infty\right)  }}{\mathcal{Z}^{\left(  \infty\right)
}}%
{\displaystyle\int\limits_{\mathbb{Z}_{p}^{iN}}}
\text{ \ }%
{\displaystyle\int\limits_{\mathbb{Z}_{p}^{jN}}}
\text{ \ }%
{\displaystyle\iint\limits_{\boldsymbol{B}_{M}^{\left(  \infty\right)  }%
\times\boldsymbol{B}_{M}^{\left(  \infty\right)  }}}
e^{-E^{\text{spin}}(\boldsymbol{v},\boldsymbol{h};\boldsymbol{\theta
})+\left\langle \boldsymbol{v},J_{1}\right\rangle +\left\langle \boldsymbol{h}%
,J_{2}\right\rangle }\times\\
\boldsymbol{v}\left(  x_{i}\right)  \left(
{\displaystyle\prod_{k=1}^{i}}
\text{ }\boldsymbol{v}^{4}\left(  r_{k}\right)  \right)  \left(
{\displaystyle\prod_{k=1}^{j}}
\text{ }\boldsymbol{h}^{4}\left(  s_{k}\right)  \right)  d\mathbb{P}_{K_{1}%
}\left(  \boldsymbol{v}\right)  \otimes d\mathbb{P}_{K_{2}}\left(
\boldsymbol{h}\right)  \text{ \ }%
{\displaystyle\prod_{k=1}^{i}}
d^{N}r_{k}%
{\displaystyle\prod_{k=1}^{j}}
d^{N}s_{k}.
\end{multline*}

\end{lemma}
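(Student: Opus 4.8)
The plan is to establish both identities by interchanging the derivative (the $\epsilon$-derivative in the first case, the functional derivative $\frac{\delta}{\delta_1 J_1(x_i)}$ in the second) with the infinite double sum over $i,j$ and with the nested integrals defining each $Z_{i,j}^{\text{spin}}(J_1,J_2;\boldsymbol{\theta})$. The decisive structural fact is that the fields range over $\boldsymbol{B}_M^{(\infty)}$, so $\|\boldsymbol{v}\|_\infty,\|\boldsymbol{h}\|_\infty\leq M$ and every factor in each integrand is uniformly bounded; this uniform control is what legitimizes all the required limit interchanges.

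First I would differentiate under the Gaussian integral for a fixed pair $(i,j)$. The $\epsilon$-derivative of the integrand
\[
e^{-E^{\text{spin}}(\boldsymbol{v},\boldsymbol{h};\boldsymbol{\theta})+\langle\boldsymbol{v},J_1+\epsilon\phi\rangle+\langle\boldsymbol{h},J_2\rangle}\left(\prod_{k=1}^{i}\boldsymbol{v}^4(r_k)\right)\left(\prod_{k=1}^{j}\boldsymbol{h}^4(s_k)\right)
\]
equals $\langle\boldsymbol{v},\phi\rangle$ times the same expression. On $\boldsymbol{B}_M^{(\infty)}$, Lemma \ref{Lemma5} bounds the exponent, while $|\langle\boldsymbol{v},\phi\rangle|\leq M\|\phi\|_1$ and $|\boldsymbol{v}^4(r_k)|,|\boldsymbol{h}^4(s_k)|\leq M^4$. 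Hence the $\epsilon$-derivative of the integrand is dominated, uniformly in $(\boldsymbol{v},\boldsymbol{h})$ and in $|\epsilon|\leq1$, by a constant depending only on $M,a,b,w,\phi$, and dominated convergence permits differentiation under the integral, producing exactly the factor $\langle\boldsymbol{v},\phi\rangle$.

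Next I would justify term-by-term differentiation of the double series. Using $\int_{\mathbb{Z}_p^N}d^Nx=1$ together with the bounds above, the modulus of the $(i,j)$-term of the differentiated series is dominated by $\frac{|c|^i|d|^j}{i!j!}M^{4i+4j}\,C(M,a,b,w,\phi)$, and summing over $i,j\geq1$ gives the finite majorant $C(M,a,b,w,\phi)(e^{|c|M^4}-1)(e^{|d|M^4}-1)$. Therefore the differentiated series converges absolutely and uniformly on a neighborhood of $\epsilon=0$, so the derivative of the sum is the sum of the derivatives. Combining this with the previous step yields the first displayed formula.

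Finally, the second identity follows from the first through the definition of the functional derivative $\frac{\delta}{\delta_1 J_1}$ preceding Lemma \ref{Lemma6A}: the Gateaux derivative in the direction $\phi$ must equal $\int_{\mathbb{Z}_p^N}\phi(x)\left(\frac{\delta}{\delta_1 J_1(x)}\mathcal{G}\right)(x)\,d^Nx$, and writing $\langle\boldsymbol{v},\phi\rangle=\int_{\mathbb{Z}_p^N}\boldsymbol{v}(x)\phi(x)\,d^Nx$ and applying Fubini to pull the $\phi$-pairing outside the $r_k,s_k$- and Gaussian integrals identifies the kernel as the integrand with $\boldsymbol{v}(x_i)$ inserted. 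I expect the main obstacle to be the bookkeeping of this double interchange — passing the derivative past both the infinite sum and the several layers of integration at once — but once the uniform bounds of the first two steps are in hand it reduces to routine applications of dominated convergence and Fubini, both available because the Gaussian measures are probability measures, $\int_{\mathbb{Z}_p^N}d^Nx=1$, and all integrands are bounded.
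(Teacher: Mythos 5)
Your proposal is correct and follows essentially the same route as the paper's proof: both rest on the uniform bounds available because the fields lie in $\boldsymbol{B}_{M}^{\left(\infty\right)}$ (Lemma \ref{Lemma5} for the exponential factor, $M^{4i+4j}$ for the quartic products) and on dominated convergence to interchange the $\epsilon$-derivative with the Gaussian integrals and with the double series. The only cosmetic difference is the order of the interchanges --- you differentiate under the integral term by term and then sum the derivatives using a uniform majorant, whereas the paper first passes the limit of difference quotients through the series via the mean value theorem and then through each integral --- but the estimates involved are the same.
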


\begin{proof}
We first compute%
\begin{gather}
\left[  \frac{d}{d\epsilon}\mathcal{G}(J_{1}+\epsilon\phi,J_{2})\right]
_{\epsilon=0}=\lim_{\epsilon\rightarrow0}\frac{\mathcal{G}(J_{1}+\epsilon
\phi,J_{2})-\mathcal{G}(J_{1},J_{2})}{\epsilon}=\nonumber\\
\nonumber\\
\lim_{\epsilon\rightarrow0}%
{\displaystyle\sum\limits_{i=1}^{\infty}}
\text{ }%
{\displaystyle\sum\limits_{j=1}^{\infty}}
\frac{\left(  -1\right)  ^{i+j}c^{i}d^{j}}{i!j!}\left[  \frac{Z_{i,j}%
^{\text{spin}}(J_{1}+\epsilon\phi,J_{2};\boldsymbol{\theta})-Z_{i,j}%
^{\text{spin}}(J_{1},J_{2};\boldsymbol{\theta})}{\epsilon}\right]  .
\label{limit}%
\end{gather}
On the other hand,%
\begin{gather*}
\frac{Z_{i,j}^{\text{spin}}(J_{1}+\epsilon\phi,J_{2};\boldsymbol{\theta
})-Z_{i,j}^{\text{spin}}(J_{1},J_{2};\boldsymbol{\theta})}{\epsilon}=\\
\frac{\mathcal{Z}_{\text{spin}}^{\left(  \infty\right)  }}{\mathcal{Z}%
^{\left(  \infty\right)  }}%
{\displaystyle\int\limits_{\mathbb{Z}_{p}^{iN}}}
\text{ \ }%
{\displaystyle\int\limits_{\mathbb{Z}_{p}^{jN}}}
\text{ \ }%
{\displaystyle\iint\limits_{\boldsymbol{B}_{M}^{\left(  \infty\right)  }%
\times\boldsymbol{B}_{M}^{\left(  \infty\right)  }}}
e^{-E^{\text{spin}}(\boldsymbol{v},\boldsymbol{h};\boldsymbol{\theta
})+\left\langle \boldsymbol{v},J_{1}\right\rangle +\left\langle \boldsymbol{h}%
,J_{2}\right\rangle }\left(  \frac{e^{\epsilon\left\langle \boldsymbol{v}%
,\phi\right\rangle }-1}{\epsilon}\right)  \times\\
\left(
{\displaystyle\prod_{k=1}^{i}}
\text{ }\boldsymbol{v}^{4}\left(  r_{k}\right)  \right)  \left(
{\displaystyle\prod_{k=1}^{j}}
\text{ }\boldsymbol{h}^{4}\left(  s_{k}\right)  \right)  d\mathbb{P}_{K_{1}%
}\left(  \boldsymbol{v}\right)  \otimes d\mathbb{P}_{K_{2}}\left(
\boldsymbol{h}\right)  \text{ \ }%
{\displaystyle\prod_{k=1}^{i}}
d^{N}r_{k}%
{\displaystyle\prod_{k=1}^{j}}
d^{N}s_{k}.
\end{gather*}
We \ consider that $\epsilon>0$ sufficiently small, the case $\epsilon<0$ is
treated in similar way. By using the mean value theorem,%
\begin{equation}
\frac{e^{\epsilon\left\langle \boldsymbol{v},\phi\right\rangle }-1}{\epsilon
}=\left\langle \boldsymbol{v},\phi\right\rangle e^{\epsilon_{0}\left\langle
\boldsymbol{v},\phi\right\rangle }\text{, for some }\epsilon_{0}\in\left(
0,\epsilon\right)  . \label{estimation}%
\end{equation}
Then%
\begin{multline*}
\left\vert \frac{Z_{i,j}^{\text{spin}}(J_{1}+\epsilon\phi,J_{2}%
;\boldsymbol{\theta})-Z_{i,j}^{\text{spin}}(J_{1},J_{2};\boldsymbol{\theta}%
)}{\epsilon}\right\vert \leq\\
\frac{\mathcal{Z}_{\text{spin}}^{\left(  \infty\right)  }}{\mathcal{Z}%
^{\left(  \infty\right)  }}%
{\displaystyle\iint\limits_{\boldsymbol{B}_{M}^{\left(  \infty\right)  }%
\times\boldsymbol{B}_{M}^{\left(  \infty\right)  }}}
\left\vert \left\langle \boldsymbol{v},\phi\right\rangle \right\vert
e^{-E^{\text{spin}}(\boldsymbol{v},\boldsymbol{h};\boldsymbol{\theta
})+c\left\langle 1,\boldsymbol{v}^{4}\right\rangle +d\left\langle
1,\boldsymbol{h}^{4}\right\rangle }\times\\
e^{\left\langle \boldsymbol{v},J_{1}+\epsilon_{0}\phi\right\rangle
+\left\langle \boldsymbol{h},J_{2}\right\rangle }d\mathbb{P}_{K_{1}}\left(
\boldsymbol{v}\right)  \otimes d\mathbb{P}_{K_{2}}\left(  \boldsymbol{h}%
\right)  <\infty,
\end{multline*}
cf. Lemma \ref{Lemma5}. And, by applying the dominated convergence theorem, we
can interchange $\lim_{\epsilon\rightarrow0}$ and $%
{\displaystyle\sum\limits_{i=1}^{\infty}}
$ $%
{\displaystyle\sum\limits_{j=1}^{\infty}}
$ in (\ref{limit}):%
\begin{align*}
&  \lim_{\epsilon\rightarrow0}%
{\displaystyle\sum\limits_{i=1}^{\infty}}
\text{ }%
{\displaystyle\sum\limits_{j=1}^{\infty}}
\frac{\left(  -1\right)  ^{i+j}c^{i}d^{j}}{i!j!}\left[  \frac{Z_{i,j}%
^{\text{spin}}(J_{1}+\epsilon\phi,J_{2};\boldsymbol{\theta})-Z_{i,j}%
^{\text{spin}}(J_{1},J_{2};\boldsymbol{\theta})}{\epsilon}\right] \\
&  =%
{\displaystyle\sum\limits_{i=1}^{\infty}}
\text{ }%
{\displaystyle\sum\limits_{j=1}^{\infty}}
\frac{\left(  -1\right)  ^{i+j}c^{i}d^{j}}{i!j!}\lim_{\epsilon\rightarrow
0}\left[  \frac{Z_{i,j}^{\text{spin}}(J_{1}+\epsilon\phi,J_{2}%
;\boldsymbol{\theta})-Z_{i,nj}^{\text{spin}}(J_{1},J_{2};\boldsymbol{\theta}%
)}{\epsilon}\right]  .
\end{align*}
We now\ compute the limit $\epsilon\rightarrow0^{+}$, the other limit is
treated in a similar way. Now, since $J_{1},J_{2},\boldsymbol{v}%
,\boldsymbol{h}\in\boldsymbol{B}_{M}^{\left(  \infty\right)  }$, by using
Lemma \ref{Lemma5}, (\ref{estimation}), and the Cauchy-Schwarz inequality,%
\begin{gather*}
\frac{\mathcal{Z}_{\text{spin}}^{\left(  \infty\right)  }}{\mathcal{Z}%
^{\left(  \infty\right)  }}\left\vert e^{-E^{\text{spin}}(\boldsymbol{v}%
,\boldsymbol{h};\boldsymbol{\theta})+\left\langle \boldsymbol{v}%
,J_{1}\right\rangle +\left\langle \boldsymbol{h},J_{2}\right\rangle
}\left\langle \boldsymbol{v},\phi\right\rangle e^{\epsilon_{0}\left\langle
\boldsymbol{v},\phi\right\rangle }\left(
{\displaystyle\prod_{k=1}^{i}}
\text{ }\boldsymbol{v}^{4}\left(  r_{k}\right)  \right)  \left(
{\displaystyle\prod_{k=1}^{j}}
\text{ }\boldsymbol{h}^{4}\left(  s_{k}\right)  \right)  \right\vert \leq\\
C(M,\left\Vert \phi\right\Vert _{2})1_{\mathbb{Z}_{p}^{iN}\times\mathbb{Z}%
_{p}^{jN}}\left(  r,s\right)  1_{\boldsymbol{B}_{M}^{\left(  \infty\right)
}\times\boldsymbol{B}_{M}^{\left(  \infty\right)  }}\left(  \boldsymbol{v}%
,\boldsymbol{h}\right)  ,
\end{gather*}
where the last function is integrable with respect to the product measure%
\[
d\mathbb{P}_{K_{1}}\left(  \boldsymbol{v}\right)  \otimes d\mathbb{P}_{K_{2}%
}\left(  \boldsymbol{h}\right)  \text{ \ }%
{\displaystyle\prod_{k=1}^{i}}
d^{N}r_{k}%
{\displaystyle\prod_{k=1}^{j}}
d^{N}s_{k}.
\]
Then, by applying again, the dominated convergence theorem,%
\begin{gather*}
\lim_{\epsilon\rightarrow0}\frac{Z_{i,j}^{\text{spin}}(J_{1}+\epsilon
\phi,J_{2};\boldsymbol{\theta})-Z_{i,j}^{\text{spin}}(J_{1},J_{2}%
;\boldsymbol{\theta})}{\epsilon}=\\
\frac{\mathcal{Z}_{\text{spin}}^{\left(  \infty\right)  }}{\mathcal{Z}%
^{\left(  \infty\right)  }}%
{\displaystyle\int\limits_{\mathbb{Z}_{p}^{iN}}}
\text{ \ }%
{\displaystyle\int\limits_{\mathbb{Z}_{p}^{jN}}}
\text{ \ }%
{\displaystyle\iint\limits_{\boldsymbol{B}_{M}^{\left(  \infty\right)  }%
\times\boldsymbol{B}_{M}^{\left(  \infty\right)  }}}
e^{-E^{\text{spin}}(\boldsymbol{v},\boldsymbol{h};\boldsymbol{\theta
})+\left\langle \boldsymbol{v},J_{1}\right\rangle +\left\langle \boldsymbol{h}%
,J_{2}\right\rangle }\left\langle \boldsymbol{v},\phi\right\rangle \times\\
\left(
{\displaystyle\prod_{i=1}^{m}}
\text{ }\boldsymbol{v}^{4}\left(  r_{i}\right)  \right)  \left(
{\displaystyle\prod_{i=1}^{n}}
\text{ }\boldsymbol{h}^{4}\left(  s_{i}\right)  \right)  d\mathbb{P}_{K_{1}%
}\left(  \boldsymbol{v}\right)  \otimes d\mathbb{P}_{K_{2}}\left(
\boldsymbol{h}\right)  \text{ \ }%
{\displaystyle\prod_{i=1}^{m}}
d^{N}r_{i}%
{\displaystyle\prod_{i=1}^{n}}
d^{N}s_{i}.
\end{gather*}

\end{proof}

By applying Lemma \ref{Lemma8} recursively, one gets the following formula:

\begin{lemma}
\label{Lemma9}%
\begin{gather*}%
{\textstyle\prod\limits_{i=1}^{n}}
\frac{\delta}{\delta_{2}J_{2}\left(  y_{i}\right)  }%
{\textstyle\prod\limits_{i=1}^{m}}
\frac{\delta}{\delta_{1}J_{1}\left(  x_{i}\right)  }\left(  \frac
{\mathcal{Z}_{\text{spin}}^{\left(  \infty\right)  }}{\mathcal{Z}^{\left(
\infty\right)  }}%
{\displaystyle\sum\limits_{i=0}^{\infty}}
\text{ }%
{\displaystyle\sum\limits_{j=10}^{\infty}}
\frac{\left(  -1\right)  ^{i+j}c^{i}d^{j}}{i!j!}Z_{i,j}^{\text{spin}}%
(J_{1},J_{2};\boldsymbol{\theta})\right)  =\\
\frac{\mathcal{Z}_{\text{spin}}^{\left(  \infty\right)  }}{\mathcal{Z}%
^{\left(  \infty\right)  }}%
{\displaystyle\sum\limits_{i=0}^{\infty}}
\text{ }%
{\displaystyle\sum\limits_{j=0}^{\infty}}
\frac{\left(  -1\right)  ^{i+j}c^{i}d^{j}}{i!j!}%
{\textstyle\prod\limits_{j=1}^{n}}
\frac{\delta}{\delta_{2}J_{2}\left(  y_{j}\right)  }\text{ \ }%
{\textstyle\prod\limits_{i=1}^{m}}
\frac{\delta}{\delta_{1}J_{1}\left(  x_{i}\right)  }Z_{i,j}^{\text{spin}%
}(J_{1},J_{2};\boldsymbol{\theta})=\\
\frac{\mathcal{Z}_{\text{spin}}^{\left(  \infty\right)  }}{\mathcal{Z}%
^{\left(  \infty\right)  }}%
{\displaystyle\sum\limits_{i=0}^{\infty}}
\text{ }%
{\displaystyle\sum\limits_{j=0}^{\infty}}
\frac{\left(  -1\right)  ^{i+j}c^{i}d^{j}}{i!j!}\text{\ }\frac{1}%
{\mathcal{Z}_{\text{spin}}^{\left(  \infty\right)  }}\text{ \ }%
{\displaystyle\int\limits_{\mathbb{Z}_{p}^{iN}}}
\text{ \ }%
{\displaystyle\int\limits_{\mathbb{Z}_{p}^{jN}}}
\text{ \ }%
{\displaystyle\iint\limits_{\boldsymbol{B}_{M}^{\left(  \infty\right)  }%
\times\boldsymbol{B}_{M}^{\left(  \infty\right)  }}}
e^{-E^{\text{spin}}(\boldsymbol{v},\boldsymbol{h};\boldsymbol{\theta
})+\left\langle \boldsymbol{v},J_{1}\right\rangle +\left\langle \boldsymbol{h}%
,J_{2}\right\rangle }\times\\%
{\textstyle\prod\limits_{k=1}^{n}}
\boldsymbol{h}\left(  y_{k}\right)
{\displaystyle\prod_{k=1}^{j}}
\text{ }\boldsymbol{h}^{4}\left(  s_{k}\right)
{\textstyle\prod\limits_{k=1}^{m}}
\boldsymbol{v}\left(  x_{k}\right)
{\displaystyle\prod_{k=1}^{i}}
\text{ }\boldsymbol{v}^{4}\left(  r_{i}\right)  d\mathbb{P}_{K_{1}}\left(
\boldsymbol{v}\right)  \otimes d\mathbb{P}_{K_{2}}\left(  \boldsymbol{h}%
\right)  \text{ \ }%
{\displaystyle\prod_{k=1}^{i}}
d^{N}r_{k}%
{\displaystyle\prod_{k=1}^{j}}
d^{N}s_{k},
\end{gather*}
with the convention that for $i=0$, respectively $j=0$, the integral
$\int_{\mathbb{Z}_{p}^{iN}}%
{\displaystyle\prod_{k=1}^{i}}
d^{N}r_{k}$ is omitted, respectively the integral $\int_{\mathbb{Z}_{p}^{jN}}%
{\displaystyle\prod_{k=1}^{j}}
d^{N}s_{k}$.
\end{lemma}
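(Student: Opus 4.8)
The plan is to prove the formula by induction on the total number $m+n$ of functional derivatives, using Lemma \ref{Lemma8} as the engine that justifies both the interchange of each derivative with the infinite double sum and the action of the derivative on the integrand. The base case is precisely the content of Lemma \ref{Lemma8}: a single derivative $\frac{\delta}{\delta_1 J_1(x_1)}$ (respectively $\frac{\delta}{\delta_2 J_2(y_1)}$) applied to $\frac{\mathcal{Z}_{\text{spin}}^{(\infty)}}{\mathcal{Z}^{(\infty)}} \sum_{i,j} \frac{(-1)^{i+j} c^i d^j}{i!j!} Z_{i,j}^{\text{spin}}$ can be pushed past the summation and inside the Gaussian integral, producing a factor $\boldsymbol{v}(x_1)$ (respectively $\boldsymbol{h}(y_1)$) in the integrand while leaving the source exponential $e^{\langle \boldsymbol{v}, J_1 \rangle + \langle \boldsymbol{h}, J_2 \rangle}$ and the quartic monomials $\prod_k \boldsymbol{v}^4(r_k) \prod_k \boldsymbol{h}^4(s_k)$ untouched. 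The boundary indices $i=0$ or $j=0$, which are not covered by the sums in Lemma \ref{Lemma8}, are treated by the same mechanism, reducing in the purely spin-glass case to the computation already recorded in Theorem \ref{Theorem3}.

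For the inductive step I would assume that after applying $k$ derivatives the result is already written as $\frac{\mathcal{Z}_{\text{spin}}^{(\infty)}}{\mathcal{Z}^{(\infty)}} \sum_{i,j} \frac{(-1)^{i+j} c^i d^j}{i!j!}$ times an integral whose integrand is $e^{-E^{\text{spin}} + \langle \boldsymbol{v}, J_1 \rangle + \langle \boldsymbol{h}, J_2 \rangle}$ multiplied by the already-accumulated field factors and the quartic monomials. Applying the next derivative, say $\frac{\delta}{\delta_1 J_1(x_{k+1})}$, I would repeat verbatim the two-step argument of Lemma \ref{Lemma8}: first use the mean value theorem to write the difference quotient $\frac{e^{\epsilon \langle \boldsymbol{v}, \phi \rangle} - 1}{\epsilon}$ as $\langle \boldsymbol{v}, \phi \rangle e^{\epsilon_0 \langle \boldsymbol{v}, \phi \rangle}$ for some $\epsilon_0 \in (0, \epsilon)$, and then invoke the dominated convergence theorem twice, once to exchange $\lim_{\epsilon \to 0}$ with the double sum over $i,j$ and once to exchange it with the field integral. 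The new factor $\boldsymbol{v}(x_{k+1})$ is thereby brought down into the integrand, completing the induction.

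The dominating functions needed at each stage are supplied by Lemma \ref{Lemma5}: on $\boldsymbol{B}_M^{(\infty)} \times \boldsymbol{B}_M^{(\infty)}$ the exponent $-E^{\text{spin}}$ is bounded by a constant depending only on $a, b, M$, the source term is controlled because $\|J_1\|_\infty, \|J_2\|_\infty \leq M$ and $\|\boldsymbol{v}\|_\infty, \|\boldsymbol{h}\|_\infty \leq M$, and every accumulated field factor together with each quartic monomial is bounded in absolute value by the corresponding power of $M$. Since the Haar integrals $\int_{\mathbb{Z}_p^{iN}} \prod_k d^N r_k$ and $\int_{\mathbb{Z}_p^{jN}} \prod_k d^N s_k$ each equal $1$, the integrand is dominated uniformly in $\epsilon$ by a constant times the indicator of $\boldsymbol{B}_M^{(\infty)} \times \boldsymbol{B}_M^{(\infty)}$, which is integrable against the product of the Gaussian measures.

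The step I expect to be the main obstacle is the uniformity of the dominating bound \emph{across the infinite double sum}: one must verify that the Taylor coefficients $\frac{|c|^i |d|^j}{i!j!}$ decay fast enough that the series of dominating integrals still converges after the polynomial factors have been inserted, so that dominated convergence governs the $\sum_{i,j}$ and not merely each fixed $(i,j)$ summand. This is exactly where the confinement of the fields to $\boldsymbol{B}_M^{(\infty)}$ rather than $\boldsymbol{B}_M^{(2)}$ is indispensable: it guarantees that each integral $Z_{i,j}^{\text{spin}}$ with the inserted monomial of degree $4(i+j)+(m+n)$ is bounded by $C \cdot M^{4(i+j)+(m+n)}$ uniformly in the sources, so that $\sum_{i,j} \frac{|c|^i |d|^j}{i!j!} M^{4(i+j)} = e^{|c| M^4 + |d| M^4} < \infty$ furnishes the convergent majorant legitimizing every interchange.
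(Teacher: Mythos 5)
Your proposal is correct and follows essentially the same route as the paper, which simply states that the lemma follows ``by applying Lemma \ref{Lemma8} recursively'': your induction on the number of derivatives, with the mean-value-theorem plus dominated-convergence argument repeated at each step, is exactly that recursion spelled out. Your closing observation that the $L^\infty$-confinement of the fields yields the uniform majorant $\sum_{i,j}\frac{|c|^{i}|d|^{j}}{i!j!}M^{4(i+j)}=e^{(|c|+|d|)M^{4}}<\infty$ supplies the justification for interchanging the derivatives with the double series, a point the paper leaves implicit.
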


\begin{theorem}
\label{Theorem4}The $(m+n)$-correlation function $\boldsymbol{G}^{\left(
m+n\right)  }\left(  x,y;c,d\right)  $ of the $\left\{  \boldsymbol{v}%
,\boldsymbol{h}\right\}  ^{4}$-SFT, with energy functional
(\ref{energy_functional_ultimo}), admits the following convergent power series
expansion in the coupling parameters $c$, $d\in\mathbb{R}$ :%
\[
\boldsymbol{G}^{\left(  m+n\right)  }\left(  x,y;c,d\right)  =\frac
{\mathcal{Z}_{\text{spin}}^{\left(  \infty\right)  }}{\mathcal{Z}^{\left(
\infty\right)  }}%
{\displaystyle\sum\limits_{i=0}^{\infty}}
\text{ }%
{\displaystyle\sum\limits_{j=0}^{\infty}}
\frac{\left(  -1\right)  ^{i+j}c^{i}d^{j}}{i!j!}\boldsymbol{G}_{i,j}^{\left(
m+n\right)  }\left(  x,y;c,d\right)
\]
in $\mathcal{D}^{\prime}(\mathbb{Z}_{p}^{Nm}\times\mathbb{Z}_{p}^{Nn})$,
where
\begin{multline*}
\boldsymbol{G}_{i,j}^{\left(  m+n\right)  }\left(  x,y;c,d\right)  =%
{\displaystyle\int\limits_{\mathbb{Z}_{p}^{iN}}}
\
{\displaystyle\int\limits_{\mathbb{Z}_{p}^{jN}}}
\boldsymbol{G}^{\left(  m+4i+n+4j\right)  }\left(  x,4\left[  r_{k}\right]
_{1\leq k\leq i},y,\left[  s_{k}\right]  _{1\leq k\leq j};0,0\right)  \times\\%
{\displaystyle\prod_{k=1}^{i}}
d^{N}r_{k}%
{\displaystyle\prod_{k=1}^{j}}
d^{N}s_{k}\in\mathcal{D}^{\prime}(\mathbb{Z}_{p}^{Nm}\times\mathbb{Z}_{p}%
^{Nn}).
\end{multline*}

\end{theorem}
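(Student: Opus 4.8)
The plan is to read the correlation functions off the current expansion of $Z(J_1,J_2;c,d)$ derived above, to carry out the functional derivatives term by term via Lemma \ref{Lemma9}, and then to check that the resulting double power series converges in $\mathcal{D}^{\prime}(\mathbb{Z}_p^{Nm}\times\mathbb{Z}_p^{Nn})$.

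First, by Lemma \ref{Lemma6A} the correlation function is obtained by applying $\prod_{j=1}^{n}\frac{\delta}{\delta_2 J_2(y_j)}\prod_{i=1}^{m}\frac{\delta}{\delta_1 J_1(x_i)}$ to $Z(J_1,J_2;c,d)$ and setting $J_1=J_2=0$. Inserting the expansion $Z(J_1,J_2;c,d)=\frac{\mathcal{Z}_{\text{spin}}^{(\infty)}}{\mathcal{Z}^{(\infty)}}\sum_{i,j\ge 0}\frac{(-1)^{i+j}c^i d^j}{i!j!}Z_{i,j}^{\text{spin}}(J_1,J_2;\boldsymbol{\theta})$ and invoking Lemma \ref{Lemma9} --- whose proof already justifies, through the dominated-convergence estimates of Lemma \ref{Lemma8}, the interchange of the functional derivatives with the double summation --- I would obtain the termwise formula in which the $(i,j)$-th summand carries, after the currents are set to zero, the integral of $e^{-E^{\text{spin}}}$ against $\prod_{k}h(y_k)\prod_{k}h^{4}(s_k)\prod_{k}v(x_k)\prod_{k}v^{4}(r_k)$ over $\boldsymbol{B}_M^{(\infty)}\times\boldsymbol{B}_M^{(\infty)}$, integrated further over the auxiliary variables $r_k,s_k$.

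The identification step is then immediate: writing $v^{4}(r_k)=v(r_k)v(r_k)v(r_k)v(r_k)$ and $h^{4}(s_k)=h(s_k)h(s_k)h(s_k)h(s_k)$, each factor contributes four copies of the field at a coincident argument, so the inner integral is precisely the spin-glass correlator $\boldsymbol{G}^{(m+4i+n+4j)}(x,4[r_k]_{1\le k\le i},y,4[s_k]_{1\le k\le j};0,0)$ in the coincident-point notation introduced above, and integrating over $r_k,s_k$ yields $\boldsymbol{G}_{i,j}^{(m+n)}(x,y;c,d)$ with the common prefactor $\frac{\mathcal{Z}_{\text{spin}}^{(\infty)}}{\mathcal{Z}^{(\infty)}}$. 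Theorem \ref{Theorem3} guarantees that every such spin correlator is a genuine element of $\mathcal{D}^{\prime}$, so each coefficient $\boldsymbol{G}_{i,j}^{(m+n)}$ is a well-defined distribution.

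The remaining and principal task is the convergence of the series in $\mathcal{D}^{\prime}$. Pairing the $(i,j)$-th coefficient with a test function $\Phi$, I would estimate using Lemma \ref{Lemma5}, which bounds $e^{-E^{\text{spin}}}$ by a constant uniform on $\boldsymbol{B}_M^{(\infty)}\times\boldsymbol{B}_M^{(\infty)}$, together with $|v(\cdot)|\le M$, $|h(\cdot)|\le M$ and the fact that $\mathbb{P}_{K_1},\mathbb{P}_{K_2}$ and the normalized Haar measure on each $\mathbb{Z}_p^{N}$ are probability measures. This produces a bound $|(\boldsymbol{G}_{i,j}^{(m+n)},\Phi)|\le C\,M^{4(i+j)}\,\|\Phi\|_{1}$ with $C$ independent of $i,j$, whence the series is dominated by $\sum_{i,j\ge 0}\frac{|c|^{i}|d|^{j}}{i!j!}M^{4(i+j)}=e^{(|c|+|d|)M^{4}}<\infty$ and converges absolutely in $\mathcal{D}^{\prime}$. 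The main obstacle is exactly this convergence: one must keep the estimate uniform in $(i,j)$ and exploit the $L^{\infty}$ cutoff $M$ so that the geometric growth $M^{4(i+j)}$ is tamed by the factorials $i!\,j!$; this is the structural reason the present theorem requires the fields in $\boldsymbol{B}_M^{(\infty)}$ rather than merely $\boldsymbol{B}_M^{(2)}$, since the quartic terms $\langle 1,\boldsymbol{v}^{4}\rangle$, $\langle 1,\boldsymbol{h}^{4}\rangle$ are only uniformly controlled there.
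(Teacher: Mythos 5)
Your proposal follows essentially the same route as the paper: apply Lemma \ref{Lemma6A}, insert the double power series for $Z(J_{1},J_{2};c,d)$, use Lemmas \ref{Lemma8}--\ref{Lemma9} to differentiate termwise and interchange with the summation, and identify each $(i,j)$-term with the spin correlator at coincident arguments. Your explicit convergence bound $\left\vert \left(  \boldsymbol{G}_{i,j}^{\left(  m+n\right)  },\Phi\right)  \right\vert \leq C\,M^{4(i+j)}\left\Vert \Phi\right\Vert _{1}$, summing to $e^{(\left\vert c\right\vert +\left\vert d\right\vert )M^{4}}$, is a correct and useful elaboration of a point the paper leaves implicit in the dominated-convergence arguments of Lemma \ref{Lemma8}, and your identification of the $L^{\infty}$ cutoff as the structural reason for working in $\boldsymbol{B}_{M}^{\left(  \infty\right)  }$ matches the paper's discussion.
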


The announced perturbative expansion in the coupling parameters $c$, $d$,
follows from Lemmas (\ref{Lemma8})-(\ref{Lemma9}) by the following calculation:%

\begin{gather*}
\boldsymbol{G}^{\left(  m+n\right)  }\left(  x,y;c,d\right)  =\left[
{\textstyle\prod\limits_{j=1}^{n}}
\frac{\delta}{\delta_{2}J_{2}\left(  y_{j}\right)  }\text{ }%
{\textstyle\prod\limits_{i=1}^{m}}
\frac{\delta}{\delta_{1}J_{1}\left(  x_{i}\right)  }Z(J_{1},J_{2};c,d)\right]
_{\substack{J_{0}=0\\J_{1}=0}}=\\
\left[
{\textstyle\prod\limits_{j=1}^{n}}
\frac{\delta}{\delta_{2}J_{2}\left(  y_{j}\right)  }\text{ }%
{\textstyle\prod\limits_{i=1}^{m}}
\frac{\delta}{\delta_{1}J_{1}\left(  x_{i}\right)  }\frac{\mathcal{Z}%
_{\text{spin}}^{\left(  \infty\right)  }}{\mathcal{Z}^{\left(  \infty\right)
}}%
{\displaystyle\sum\limits_{i=0}^{\infty}}
{\displaystyle\sum\limits_{j=0}^{\infty}}
\frac{\left(  -1\right)  ^{i+j}c^{i}d^{j}}{i!j!}Z_{i,j}^{\text{spin}}%
(J_{1},J_{2};\boldsymbol{\theta})\right]  _{\substack{J_{0}=0\\J_{1}=0}}=
\end{gather*}%
\begin{gather*}
\frac{\mathcal{Z}_{\text{spin}}^{\left(  \infty\right)  }}{\mathcal{Z}%
^{\left(  \infty\right)  }}%
{\displaystyle\sum\limits_{i=0}^{\infty}}
{\displaystyle\sum\limits_{j=0}^{\infty}}
\frac{\left(  -1\right)  ^{i+j}c^{i}d^{j}}{i!j!}\left[
{\textstyle\prod\limits_{j=1}^{n}}
\frac{\delta}{\delta_{2}J_{2}\left(  y_{j}\right)  }\text{ }%
{\textstyle\prod\limits_{i=1}^{m}}
\frac{\delta}{\delta_{1}J_{1}\left(  x_{i}\right)  }Z_{i,j}^{\text{spin}%
}(J_{1},J_{2};\boldsymbol{\theta})\right]  _{\substack{J_{0}=0\\J_{1}=0}}=\\
\frac{\mathcal{Z}_{\text{spin}}^{\left(  \infty\right)  }}{\mathcal{Z}%
^{\left(  \infty\right)  }}%
{\displaystyle\sum\limits_{i=0}^{\infty}}
{\displaystyle\sum\limits_{j=0}^{\infty}}
\frac{\left(  -1\right)  ^{i+j}c^{i}d^{j}}{i!j!}\frac{1}{\mathcal{Z}%
_{\text{spin}}^{\left(  \infty\right)  }}\text{ \ }%
{\displaystyle\int\limits_{\mathbb{Z}_{p}^{iN}}}
\text{ \ }%
{\displaystyle\int\limits_{\mathbb{Z}_{p}^{jN}}}
\text{ \ }%
{\displaystyle\iint\limits_{\boldsymbol{B}_{M}^{\left(  \infty\right)  }%
\times\boldsymbol{B}_{M}^{\left(  \infty\right)  }}}
e^{-E^{\text{spin}}(\boldsymbol{v},\boldsymbol{h};\boldsymbol{\theta})}%
{\textstyle\prod\limits_{k=1}^{m}}
\boldsymbol{v}\left(  x_{k}\right) \\%
{\displaystyle\prod_{k=1}^{i}}
\text{ }\boldsymbol{v}^{4}\left(  r_{k}\right)  \times%
{\textstyle\prod\limits_{k=1}^{n}}
\boldsymbol{h}\left(  y_{k}\right)
{\displaystyle\prod_{k=1}^{j}}
\text{ }\boldsymbol{h}^{4}\left(  s_{k}\right)  d\mathbb{P}_{K_{1}}\left(
\boldsymbol{v}\right)  \otimes d\mathbb{P}_{K_{2}}\left(  \boldsymbol{h}%
\right)
{\displaystyle\prod_{k=1}^{i}}
d^{N}r_{k}%
{\displaystyle\prod_{k=1}^{j}}
d^{N}s_{k}=
\end{gather*}%
\begin{multline*}
\frac{\mathcal{Z}_{\text{spin}}^{\left(  \infty\right)  }}{\mathcal{Z}%
^{\left(  \infty\right)  }}%
{\displaystyle\sum\limits_{i=0}^{\infty}}
{\displaystyle\sum\limits_{j=0}^{\infty}}
\frac{\left(  -1\right)  ^{i+j}c^{i}d^{j}}{i!j!}%
{\displaystyle\int\limits_{\mathbb{Z}_{p}^{iN}}}
\
{\displaystyle\int\limits_{\mathbb{Z}_{p}^{jN}}}
\boldsymbol{G}^{\left(  m+4i+n+4j\right)  }\left(  x,4\left[  r_{k}\right]
_{1\leq k\leq i},y,\left[  s_{k}\right]  _{1\leq k\leq j};0,0\right)  \times\\%
{\displaystyle\prod_{k=1}^{i}}
d^{N}r_{k}%
{\displaystyle\prod_{k=1}^{j}}
d^{N}s_{k}.
\end{multline*}

\bigskip

\end{document}